\begin{document}

\newcommand{\printernote}{This paper is best printed in color: we use
  $\ufont{\text{blue}}$ for ML and $\lfont{\text{red}}$ for the linear
  language. One of the theorems we prove is that if
  $\uty \ulcompat \lty$ and $\utypr \ulcompat \lty$ then
  $\uty = \utypr$.}

\title{
  \texorpdfstring{Fab\textsf{\ucolor{U}}\textsf{\lcolor{L}}ous}{FabULous}
  Interoperability for ML and a Linear Language%
  \Not{\LLNCS}{\texorpdfstring{\footnote{\printernote}}{}}{}
}

\Anonymous{
  \author{Anonymous for submission}
}{
  \LLNCS{
    \author{Gabriel Scherer\inst{1}\inst{2} \and
            Max New\inst{1} \and
            Nick Rioux\inst{1} \and
            Amal Ahmed\inst{1}\inst{3}}
    \institute{Northeastern University, US
    \and INRIA Saclay, France
    \and INRIA Paris, France}
  }{
    \author{Gabriel Scherer}
    \author{Max New}
    \author{Nick Rioux}
    \author{Amal Ahmed}
    \affiliation{\institution{Northeastern University}\country{US}}
  }
}

\maketitle

\begin{abstract}
  Instead of a monolithic programming language trying to cover all
  features of interest, some programming systems are designed by
  combining together simpler languages that cooperate to cover the
  same feature space. This can improve usability by making each part
  simpler than the whole, but there is a risk of \emph{abstraction
    leaks} from one language to another that would break expectations
  of the users familiar with only one or some of the involved
  languages.

  We propose a formal specification for what it means for a given
  language in a multi-language system to be usable without leaks: it
  should embed into the multi-language in a \emph{fully abstract} way,
  that is, its contextual equivalence should be unchanged in the
  larger system.

  To demonstrate our proposed design principle and formal
  specification criterion, we design a multi-language programming
  system that combines an ML-like statically typed functional language
  and another language with linear types and linear state. Our goal is
  to cover a good part of the expressiveness of languages that mix
  functional programming and linear state (ownership), at only
  a fraction of the complexity.  We prove that the embedding of ML
  into the multi-language system is fully abstract: functional
  programmers should not fear abstraction leaks. We show examples of
  combined programs demonstrating in-place memory updates and safe
  resource handling, and an implementation extending OCaml with our
  linear language.
\end{abstract}

\begin{version}{\VeryShort}
\vspace{-1em}
  \textbf{Note:} Due to severe space restrictions, many details have been omitted
  from this presentation of our work. We strongly encourage the reader
  to consult the complete version at \url{https://arxiv.org/pdf/1707.04984}
\vspace{-1em}
\end{version}

\section{Introduction}

Feature accretion is a common trend among mature but actively evolving
programming languages, including C++, Haskell, Java, OCaml, Python, and
Scala. Each new feature strives for generality and expressiveness, and
may provide a large usability improvement to users of the particular
problem domain or programming style it was designed to empower (e.g., XML
documents, asynchronous communication, staged evaluation).  But 
feature creep in general-purpose languages may also make it harder for
programmers to master the language as a whole, degrade the 
user experience (e.g., leading to more cryptic error messages),
require additional work on the part of tooling providers, and lead to
fragility in 
language implementations.

A natural response to increased language complexity is to define subsets
of the language designed for a better programming experience.  For
instance, a subset can be easier to teach (e.g., ``Core'' 
ML\footnote{\url{https://caml.inria.fr/pub/docs/u3-ocaml/ocaml-ml.html}},  
Haskell 98 as opposed to GHC Haskell, Scala mastery
levels\footnote{\url{http://www.scala-lang.org/old/node/8610}}); it
can facilitate static analysis or decrease the risk of programming 
errors, 
while remaining sufficiently expressive for the target users'
needs (e.g., MISRA C, Spark/Ada); it can enforce a common style within
a company; or it can be designed to encourage a transition to deprecate
some ill-behaved language features (e.g., strict Javascript). 

Once a subset has been selected, it may be the case that users write
whole programs purely in the subset (possibly using tooling to enforce
that property), but programs will commonly rely on other libraries
that are not themselves implemented in the same subset of the
language.  If users stay in the subset while using these libraries,  
they will only interact with the part of the library whose interface is
expressible in the subset.  But does the behavior of the library
respect the expectations of users who only know the subset? 
When calling a function from within the subset breaks subset
expectations, it is a sign of \emph{leaky abstraction}.

How should we design languages with useful subsets that manage
complexity and avoid abstraction leaks?

We propose to look at this question from a different, but equivalent,  
angle: instead of designing a single big monolithic language with some
nicer subsets, we propose to consider \emph{multi-language}
programming systems where several smaller programming languages
interact together to cover the same feature space. Each language or
sub-combination of languages is a subset, in the above sense, of the
multi-language, and there is a clear definition of \emph{abstraction
  leaks} in terms of user experience: a user who only knows some
of the languages of the system should be able to use the 
multi-language system, interacting with code written in the other
languages, without have their expectations violated. If we write a 
program in Java and call a function that, internally, is
implemented in Scala, there should be no surprises---our experience
should be the same as when calling a pure Java function. Similarly,
consider the subset of Haskell that does not contain \texttt{IO}
(input-output as a type-tracked effect): the expectations of a user of
this language, for instance in terms of valid equational reasoning,
should not be violated by adding \texttt{IO} back to the language---in
the absence of the abstraction-leaking \texttt{unsafePerformIO}. 

We propose a \emph{formal specification} for a ``no abstraction
leaks'' guarantee that can be used as a design criterion to design new
multi-language systems, with graceful interoperation properties. It is
based on the formal notion of \emph{full abstraction} which has
previously been used to study the denotational semantics of programming
languages~\citep*{meyer-sieber88,milner77,cartwright92,jeffrey95,abramsky00},
and the formal property of
compilers~\citep*{ahmed08:tccpoe,ahmed11:epcps,devriese16,new16:facue,patrignani15}, 
but not for 
user-facing languages. A 
compiler $C$ from a source language $S$ to a target language $T$ is
\emph{fully abstract} if, whenever two source terms $s_1$ and $s_2$
are indistinguishable in $S$, their translations $C(s_1)$ and $C(s_2)$
are indistinguishable in $T$. In a multi-language $G + E$ formed of a
general-purpose, user-friendly language $G$ and a more advanced
language $E$---one that provides an \emph{e}scape hatch for
\emph{e}xperts to write code that can't be implemented in $G$---we say
that $E$ does not \emph{leak} into $G$ if the embedding of $G$ into the
multi-language $G + E$ is fully abstract. 

To demonstrate that our formal specification is reasonable, we design 
a novel multi-language programming system that satisfies it. Our
multi-language $\ullang$ combines a general-purpose functional
programming language $\ulang$ (unrestricted) of the ML family with
an advanced language $\llang$ (linear) with \emph{linear types} and linear
state. It is less convient to program in $\llang$'s restrictive type
system, but users can write programs in $\llang$ that could not be
written in $\ulang$: they can use linear types, locally, to enforce
resource usage protocols (typestate), and they can use linear state 
and the linear ownership discipline to write programs that do in-place
update to allocate less memory, yet remain observationally pure.

Consider for example the following mixed-language program. The blue
fragments are written in the general-purpose, user-friendly functional
language, while the red fragments are written in the linear
language.  The boundaries $\ufont{UL}$ and $\lfont{LU}$ allow
switching between languages. The program reads all lines from a file,
accumulating them in a list, and concatenating it into a single string
when the end-of-file (EOF) is reached.
\begin{small}
\begin{lstlisting}
/!let concat_lines path : String = UL(/*
  loop (open LU(/!path!/)) LU(/!Nil!/)
  where rec loop handle LU(/!acc : List String!/) =
    match line handle with
    | Next line LU(/!handle!/) -> loop handle LU(/!Cons line acc!/)
    | EOF handle -> close handle; LU(/!rev_concat "\n" acc!/)*/)!/
\end{lstlisting}
\end{small}
The linear type system ensures that the file handle is properly
closed: removing the $\lfont{close}~\lfont{handle}$ call would give a type
error. On the other hand, only the parts concerned with the
resource-handling logic need to be written in the red linear language; 
the user can keep all general-purpose logic (here, how to accumulate
lines and what to do with them at the end) in the more convenient
general-purpose blue language---and call this function from a
blue-language program. Fine-grained boundaries allow users to rely on
each language's strength and to use the advanced features only when
necessary. 

In this example, the file-handle API specifies that the call to
$\lfont{line}$, which reads a line, returns the data at type
$\llumpedty{\ufont{String}}$.  The latter represents how $\usymbol$ 
values of type $\ufont{String}$ can be put into a \emph{lump} type to
be passed to the linear world where they are treated as opaque
blackboxes that must be passed back to the ML world for consumption.
For other examples, such as in-place list manipulation or
transient operations on an persistent data structure, we will need a
deeper form of interoperability where the linear world creates,
dissects or manipulates $\usymbol$ values. To enable this, our
multi-language supports translation of types from one language to the
other, using a \emph{type compatibility} relation $\uty \ulcompat
\lty$ between $\ulang$ types $\uty$ and $\llang$ types $\lty$.


We claim the following contributions:

\begin{enumerate}
\item We propose a formal specification of what it means for advanced
  language features to be introduced in a (multi-)language system
  without introducing a class of abstraction leaks that break
  equational reasoning. This specification captures a useful
  \emph{usability} property, and we hope it will help us and others
  design more usable programming languages, much like the formal
  notion of \emph{principal types} served to better understand and
  design type inference systems.
\item We design a simple linear language, $\llang$, that supports
  linear state (\shortref{sec:u-and-l}). This simple design for linear
  state is a contribution of its own. A nice property of the language
  (shared by some other linear languages) is that the code has both an
  imperative interpretation---with in-place memory update, which
  provides resource guarantees---and a functional
  interpretation---which aids program reasoning. The imperative and
  functional interpretations have different resource usage, but the
  same input/output behavior.
\item We present a multi-language programming system $\ullang$
  combining a core ML language, $\ulang$ ($\usymbol$ for Unrestricted,
  as opposed to Linear) with $\llang$ 
  and prove that the embedding of the ML language $\ulang$ in
  $\ullang$ is fully abstract (\shortref{sec:multi-language}).
  Moreover, the multi-language is designed to ensure that our full
  abstraction result is stable under extension of the embedded ML
  language $\ulang$.   
\begin{version}{\Not\VeryShort}
\item We define a logical relation and prove parametricity for
  $\ullang$. The logical relation illustrates, semantically, why one
  can reason functionally about programs in $\ullang$ despite the
  presence of state and strong updates (\shortref{sec:logrel}).
\item We evaluate the resulting language design by providing examples
  of hybrid $\ullang$ programs that exhibit programming patterns
  inaccessible to ML alone, such that safe in-place updates and
  typestate-like static protocol enforcement
  (\shortref{sec:examples})\True{}{, and describe our prototype implementation
  (\shortref{sec:implementation})}.
\end{version}
\end{enumerate}

\section{The \texorpdfstring{$\ulang$}{U} and \texorpdfstring{$\llang$}{L} languages}
\label{sec:u-and-l}

\begin{figure}
  \begin{sdisplaymath}
    \begin{array}{lcrl}
      \mbox{\textit{Types}} &
      \uty
      & \bnfdef
      & \ualpha
        \bnfalt
        \upairty \utyone \utytwo
        \bnfalt
        \uunitty
        \bnfalt
        \ufunty \utyone \utytwo
        \bnfalt
        \usumty \utyone \utytwo
        \bnfalt
        \umuty \ualpha \uty
        \bnfalt
        \uforallty \ualpha \uty
      \\[3pt]

      \mbox{\textit{Expr.}} &
      \ue & \bnfdef
      & \ux
        \bnfalt 
        \upaire \ueone \uetwo
        \bnfalt
        \uprje 1 \ue
        \bnfalt
        \uprje 2 \ue
        \bnfalt
        \uunite
        \bnfalt
        \uletunite \ueone \uetwo
        \bnfalt
        \ufune \ux \uty \ue
        \bnfalt
        \uappe \ueone \uetwo
        \bnfalt
      \\ & &
      & \usume i \ue
        \bnfalt
        \ucasee \uepr \uxone \ueone \uxtwo \uetwo
        \bnfalt
        \ufolde {\umuty \ualpha \uty} \ue
        \bnfalt
        \uunfolde \ue
        \bnfalt
        \uabstre \ualpha \ue
        \bnfalt
        \uinste \ue \uty
      \\[3pt]

      \mbox{\textit{Values}} &
      \uv & \bnfdef
      & \ux
        \bnfalt
        \upaire \uvone \uvtwo
        \bnfalt
        \uunite
        \bnfalt
        \ufune \ux \uty \ue
        \bnfalt
        \usume 1 \uv \bnfalt \usume 2 \uv
        \bnfalt
        \ufolde {\umuty \ualpha \uty} \uv
        \bnfalt
        \uabstre \ualpha \uv
      \\[3pt]

      \mbox{\textit{Contexts}} & \uGamma & \bnfdef
      & \uempty
        \bnfalt
        \uGamma, \var \ux \uty
        \bnfalt
        \uGamma, \ualpha
    \end{array}
  \end{sdisplaymath}
  \caption{Unrestricted Language: Syntax}
  \label{fig:u:syntax}
\end{figure}

The unrestricted language $\ulang$ is a run-of-the-mill idealized ML
language with functions, pairs, sums, iso-recursive types and
polymorphism. It is presented in its explicitly typed form---we will
not discuss type inference in this work. The full syntax is described
in \shortref{fig:u:syntax}, and the typing rules in
\shortref{fig:u:staticsem}.  The dynamic semantics is completely
standard. Having binary sums, binary products and iso-recursive types
lets us express algebraic datatypes in the usual way.

\begin{figure}
  \fbox{\small$\ujudg{\uGamma}{\ue}{\uty}$}
  \begin{small}
  \begin{mathpar}
    \inferrule
    {\var \ux \uty \in \uGamma}
    {\ujudg \uGamma \ux \uty}

    \inferrule
    { }
    {\ujudg \uGamma \uunite \uunitty}

    \inferrule
    {\ujudg \uGamma \ue \uunitty \+
     \ujudg \uGamma \uepr \uty}
    {\ujudg \uGamma {\uletunite \ue \uepr} \uty}
\\
    \inferrule
    {\ujudg \uGamma  \ueone  \utyone \+
     \ujudg \uGamma  \uetwo  \utytwo}
    {\ujudg \uGamma
            {\upaire \ueone \uetwo}
            {\upairty \utyone \utytwo}}

    \inferrule
    {\ujudg \uGamma \ue {\upairty \utyone \utytwo}}
    {\ujudg \uGamma {\uprje \ix \ue} \utyi}
\\
    \inferrule
    {\ujudg {\uGamma, \var \ux \uty} \ue \utypr}
    {\ujudg \uGamma {\ufune \ux \uty \ue} {\ufunty \uty \utypr}}

    \inferrule
    {\ujudg{\uGamma}{\ue}{\ufunty{\utypr}{\uty}} \+
     \ujudg{\uGamma}{\uepr}{\utypr}}
    {\ujudg{\uGamma}{\uappe{\ue}{\uepr}}{\uty}}
\\
    \inferrule
    {\ujudg \uGamma \ue \utyi}
    {\ujudg \uGamma {\usume \ix \ue} {\usumty \utyone \utytwo}}

    \inferrule
    {\begin{array}{l}\\\ujudg \uGamma \ue {\usumty \utyone \utytwo}\end{array} \+
     \begin{array}{l}
       \ujudg {\uGamma,\uxone:\utyone} \ueone \uty\\
       \ujudg {\uGamma,\uxtwo:\utytwo} \uetwo \uty
     \end{array}}
    {\ujudg \uGamma {\ucasee \ue \uxone \ueone \uxtwo \uetwo} \uty}
\\
    \inferrule
    {\ujudg \uGamma \ue {\uunfoldedty \ualpha \uty}}
    {\ujudg \uGamma {\ufolde {\umuty \ualpha \uty} \ue} {\umuty \ualpha \uty}}

    \inferrule
    {\ujudg \uGamma \ue {\umuty \ualpha \uty}}
    {\ujudg \uGamma  {\uunfolde \ue} {\uunfoldedty \ualpha \uty}}
\\
    \inferrule
    {\ujudg {\uGamma, \ualpha} \uv \uty}
    {\ujudg \uGamma {\uabstre \ualpha \uv} {\uforallty \ualpha \uty}}

    \inferrule
    {\ujudg \uGamma \ue {\uforallty \ualpha \uty}
     \+ \uwf \uGamma \utypr}
    {\ujudg \uGamma {\uinste \ue \utypr} {\subst \uty \utypr \ualpha}}
%
%
  \end{mathpar}
  \end{small}
  \caption{Unrestricted Language: Static Semantics}
  \label{fig:u:staticsem}
\end{figure}

The novelty lies in the linear language $\llang$, which we present in
several steps. As is common in $\lambda$-calculi with references, the
small-step operational semantics is given for a language that is not
exactly the surface language in which programs are written, because
memory allocation returns \emph{locations} $\lloc$ that are not in the
grammar of surface terms. Reductions are defined on
\emph{configurations}, a local store paired with a term in 
a slightly larger \emph{internal} language. We have two type systems,
a type system on surface terms, that does not mention locations and
stores---which is the one a programmer needs to know---and a type
system on configurations, which contains enough static information to
reason about the dynamics of our language and prove subject
reduction. Again, this follows the standard structure of syntactic
soundness proofs for languages with a mutable store.

\begin{version}{\Not\VeryShort}
We present the surface language and type system in
\shortref{subsec:l-intro}, except for the language fragment
manipulating the linear store which is presented in
\shortref{subsec:l:store}. Finally, the internal terms, their
typing and reduction semantics are presented in
\shortref{subsec:l-conf}.
\end{version}

\subsection{The Core of \texorpdfstring{$\llang$}{L}}
\label{subsec:l-intro}

\shortref{fig:l:surface-syntax} presents the surface syntax of our
linear language $\llang$. For the syntactic categories of types
$\lty$, and expressions $\le$, the last line contains the
constructions related to the linear store that we only discuss in
\shortref{subsec:l:store}.


\begin{figure}
  \begin{sdisplaymath}
    \begin{array}{l@{~}c@{~}r@{~}ll}
      \mbox{\textit{Types}} &
      \lty
      & \bnfdef
      & \lopairty \ltyone \ltytwo
        \bnfalt
        \lunitty
        \bnfalt
        \lofunty \ltyone \ltytwo
        \bnfalt
        \losumty \ltyone \ltytwo
        \bnfalt
        \lmuty \lalpha \lty \bnfalt
        \lalpha
        \bnfalt
        \lbangty \lty
        \bnfalt
        \lboxedty \lty \bnfalt \lemptyboxty
      \\[3pt]

      \mbox{\textit{Expr.}} &
      \le & \bnfdef
      & \lx
        \bnfalt  
        \lpaire \leone \letwo
        \bnfalt
        \lletpaire \lvone \lvtwo \leone \letwo
        \bnfalt
        \lunite
        \bnfalt
        \lletunite \leone \letwo
        \bnfalt
        \lfune \lx \lty \le
        \bnfalt
        \lappe \leone \letwo
        \bnfalt
      \\ & &
      & \lsume 1 \le \bnfalt \lsume 2 \le \bnfalt
        \lcasee \lepr \lxone \leone \lxtwo \letwo
        \bnfalt
        \lfolde {\lmuty \lalpha \lty} \le
        \bnfalt
        \lunfolde \le
        \bnfalt
      \\ & &
      & \lsrcsharee \le
        \bnfalt
        \lsrccopye \le
        \bnfalt
        \lnewe \le \bnfalt \lfreee \le
        \bnfalt
        \lboxe \le
        \bnfalt
        \lunboxe \le
      \\[3pt]

      \mbox{\textit{Values}} &
      \lv & \bnfdef
      & \lx
        \bnfalt
        \lpaire \lvone \lvtwo
        \bnfalt
        \lunite
        \bnfalt
        \lfune \lx \lty \le
        \bnfalt
        \lsume 1 \lv \bnfalt \lsume 2 \lv
        \bnfalt
        \lfolde {\lmuty \lalpha \lty} \lv
        \bnfalt
        \lsrcsharee \lv
      \\[3pt]

      %
      %
      \mbox{\textit{Contexts}} & \lGamma & \bnfdef
      & \lemptyGamma
        \bnfalt
        \lGamma, \var \lx \lty
    \end{array}
  \end{sdisplaymath}
  \caption{Linear Language: Surface Syntax}
  \label{fig:l:surface-syntax}
\end{figure}

In technical terms, our linear type system is exactly propositional
intuitionistic linear logic, extended with iso-recursive
types. 
For simplicity and because we did not need them, our current system
also does not have polymorphism or additive/lazy pairs $\lowithty
\ltyone \ltytwo$. Additive pairs would be a trivial addition, but
polymorphism would require more work when we define the multi-language
semantics in \shortref{sec:multi-language}.

In less technical terms, our type system can enforce that values be
used \emph{linearly}, meaning that they cannot be duplicated or
erased, they have to be deconstructed exactly once. Only some types
have this linearity restriction; others allow duplication and sharing
of values at will. We can think of linear values as \emph{resources} to
be spent wisely; for any linear value somewhere in a term, there can
be only one way to access this value, so we can interpret the language as
enforcing an \emph{ownership} discipline where whoever points to
a linear value owns it.

\begin{version}{\VeryShort}
  In particular, linear functions of type $\lofunty \ltyone \ltytwo$
  must be called exactly once, and their results must in turn be
  consumed -- they can safely capture linear resources. On the other
  hand, the non-linear, duplicable values are those at types of the
  form $\lbangty \lty$ --- the \emph{exponential} modality of linear
  logic. If the term $\le$ has duplicable type $\lbangty \lty$, then
  the term $\lsrccopye \le$ has type $\lty$: this creates a local copy
  of the value that is uniquely-owned by its receiver and must be
  consumed linearily.
\end{version}

\begin{version}{\Not\VeryShort}
The types of linear values are the type of linear pairs
$\lopairty \ltyone \ltytwo$, of linear disjoint unions
$\losumty \ltyone \ltytwo$, of linear functions
$\lofunty \ltyone \ltytwo$, and of the linear unit type $\lunitty$.
For example, a linear function must be called exactly once, and its
result must in turn be consumed -- such linear functions can safely
capture linear resources. The expression-formers at these types use
the same syntax as the unrestricted language $\ulang$, with the
exception of linear pair deconstruction
$\lletpaire \lvone \lvtwo \leone \letwo$, which names both members of
the deconstructed pair at once. A linear pair type with projection
would only ever allow to observe one of the two members; this would
correspond to the additive/lazy pairs $\lowithty \ltyone \ltytwo$,
where only one of the two members is ever computed.

The types of non-linear, duplicable values are the types of the form
$\lbangty \lty$---the \emph{exponential} modality of linear logic. If
$\le$ has type $\lty$, the term $\lsrcsharee \le$ has type
$\lbangty \lty$. Values of this type are not uniquely owned, they can
be shared at will. If the term $\le$ has duplicable type
$\lbangty \lty$, then the term $\lsrccopye \le$ has type $\lty$: this
creates a local copy of the value that is uniquely-owned by its
receiver and must be consumed linearily.
\end{version}

\begin{figure}
  \fbox{\small{$\ctxjoin \lGammaone \lGammatwo$}}

  \begin{smathpar}
    \begin{array}{lll@{\qquad\qquad}l}
      \ctxjoin {(\lGammaone, \var \lx {\lbangty \lty})}
               {(\lGammatwo, \var \lx {\lbangty \lty})}
      & \defeq
      & (\ctxjoin \lGammaone \lGammatwo), \var \lx {\lbangty \lty}
      &
      \\ \ctxjoin {(\lGammaone, \var \lx \lty)} \lGammatwo
      & \defeq
      & (\ctxjoin \lGammaone \lGammatwo), \var \lx \lty
      & (\lx \notin \lGammatwo)
      \\ \ctxjoin \lGammaone {(\lGammatwo, \var \lx \lty)}
      & \defeq
      & (\ctxjoin \lGammaone \lGammatwo), \var \lx \lty
      & (\lx \notin \lGammaone)
    \end{array}
  \end{smathpar}

  \fbox{\small$\ljudgnoconf \lGamma \le \lty$}
  \begin{smathpar}
    \inferrule
    { }
    {\ljudgnoconf {\lbangty \lGamma, \var \lx \lty} \lx \lty}

    \inferrule
    {\ljudgnoconf \lGammaone \leone \ltyone \+
     \ljudgnoconf \lGammatwo \letwo \ltytwo}
    {\ljudgnoconf
      {\ctxjoin \lGammaone \lGammatwo}
      {\lpaire \leone \letwo}
      {\lopairty \ltyone \ltytwo}}

    \inferrule
    {\begin{array}{c}
       \ljudgnoconf \lGamma \le {\lopairty \ltyone \ltytwo} \\
       \ljudgnoconf {\lGammapr, \var \lxone \ltyone, \var \lxtwo \ltytwo}
         \lepr \lty
     \end{array}}
    {\ljudgnoconf
      {\ctxjoin \lGamma \lGammapr}
      {\lletpaire \lxone \lxtwo \le \lepr}
      \lty}
\\
    \inferrule
    { }
    {\ljudgnoconf {\lbangty \lGamma} \lunite \lunitty}

    \inferrule
    {\ljudgnoconf \lGamma \le \lunitty \+
     \ljudgnoconf \lGammapr \lepr \lty}
    {\ljudgnoconf
      {\ctxjoin \lGamma \lGammapr}
      {\lletunite \le \lepr}
      \lty}

    \inferrule
    {\ljudgnoconf {\lGamma, \var \lx \lty} \le \ltypr}
    {\ljudgnoconf \lGamma {\lfune \lx \lty \le} {\lofunty \lty \ltypr}}

    \inferrule
    {\ljudgnoconf \lGamma \le {\lofunty{\ltypr}{\lty}} \+
     \ljudgnoconf \lGammapr \lepr \ltypr}
    {\ljudgnoconf
      {\ctxjoin \lGamma \lGammapr}
      {\lappe \le \lepr}
      \lty}

    \inferrule
    {\ljudgnoconf \lGamma \le \ltyi}
    {\ljudgnoconf \lGamma
      {\lsume \ix \le} {\losumty \ltyone \ltytwo}}

    \inferrule
    {\begin{array}{c}\\
       \ljudgnoconf \lGamma \le {\losumty \ltyone \ltytwo}
     \end{array} 
     \+
     \begin{array}{c}
     \ljudgnoconf {\lGammapr,\lxone:\ltyone} \leone \lty \\
     \ljudgnoconf {\lGammapr,\lxtwo:\ltytwo} \letwo \lty
     \end{array}}
    {\ljudgnoconf
      {\ctxjoin \lGamma \lGammapr}
      {\lcasee \le \lxone \leone \lxtwo \letwo}
      \lty}

    \inferrule
    {\ljudgnoconf {\lbangty \lGamma} \le \lty}
    {\ljudgnoconf
      {\lbangty \lGamma}
      {\lsrcsharee \le}
      {\lbangty \lty}}

    \inferrule
    {\ljudgnoconf \lGamma \le {\lbangty \lty}}
    {\ljudgnoconf \lGamma {\lsrccopye \le} \lty}
\\
    \bidirliniso
      {\lmuty \lalpha \lty}
      {\lunfoldedty \lalpha \lty}
      {\lunfolde {}}
      {\lfolde {\lmuty \lalpha \lty} {}}

\bidirliniso
  {\lunitty}
  {\lemptyboxty}
  {\lnewe {}}
  {\lfreee {}}

\bidirliniso
  {\lboxedty \lty}
  {\lunboxedty \lty}
  {\lunboxe {}}
  {\lboxe {}}  \end{smathpar}
  \caption{Linear Language: Surface Static Semantics}
  \label{fig:l:surface-static-sem}
  \label{fig:l:store-surface-static-sem}
\end{figure}
\begin{version}{\VeryShort}
\begin{figure}
  head reduction
  \quad
  \fbox{\small{$\le \lheadredexstep \lepr$}}
  \quad
  \fbox{\small{$\hconf \lstore \le \lheadredexstep \hconf \lstorepr \lepr$}}
  \begin{smathpar}
    \bidirlheadredexstep
      {\hconf \lemptystore \lunite}
      {\hconf {\stsingleton \lloc \lemptyloc} \lloc}
      {\lnewe {}}
      {\lfreee {}}

    \bidirlheadredexstep
      {\hconf
        {\stextempty \lstore \lloc}
        {\lpaire \lloc \lv}}
      {\hconf
        {\stsingletonconf \lloc \lstore \lv}
        \lloc}
      {\lboxe {}}
      {\lunboxe {}}

    \lappe {(\lfune \lx \lty \le)} \lv
    \;\lheadredexstep\;
    {\subst \le \lv \lx}

    \lsrccopye {(\lsharee \lstore \lstorety {\lsume \ix \lv})}
    \;\lheadredexstep\;
    \lsume \ix {\lsrccopye {(\lsharee \lstore \lstorety \lv)}}

    \hconf \lemptystore
      {\lsrccopye {(\lsharee \lstore \lstorety {\lfune \lx \lty \le})}}
    \;\lheadredexstep\;
    \hconf \lstore {\lfune \lx \lty \le}

    {\lsrccopye
      {(\lsharee {\stsingleton \lloc \lemptyloc} {\ladead \lloc \lty} \lloc)}}
    \;\lheadredexstep\;
    \lnewe \lunite

    \begin{array}{r}
      {\lsrccopye
        {(\lsharee
          {\stsingletonconf \lloc \lstore \lv}
          {\lalive {\lbangty \lGamma} \lstorety \lloc \lty}
        \lloc)}}
        \\
      \lheadredexstep\;
      \lboxe {\lpaire 
        {\lnewe \lunite}
        {\lsrccopye {(\lsharee \lstore \lstorety \lv)}}}
    \end{array}

  \end{smathpar}
  \caption{Internal linear language: Typing and reduction (excerpt)}
  \label{fig:l:internal-reduction-excerpt}
\end{figure}
\end{version}
This resource-usage discipline is enforced by the surface typing rules
of $\llang$, presented in \shortref{fig:l:surface-static-sem}. They are
exactly the standard (two-sided) logical rules of intuitionistic
linear logic, annotated with program terms. The non-duplicability of
linear values is enforced by the way contexts are merged by the
inference rules: if $\leone$ is type-checked in the context
$\lGammaone$ and $\letwo$ in $\lGammatwo$, then the linear pair
$\lpaire \leone \letwo$ is only valid in the combined context
$\ctxjoin \lGammaone \lGammatwo$. The $(\ctxjoin{}{})$ operation is
partial; this combined context is defined only if the variables shared
by $\lGammaone$ and $\lGammatwo$ are duplicable---their type is of
the form $\lbangty \lty$. In other words, a variable at
a non-duplicable type in $\ctxjoin \lGammaone \lGammatwo$ cannot
possibly appear in both $\lGammaone$ and $\lGammatwo$: it must appear
exactly once\footnote{Standard presentations of linear logic force
  contexts to be completely distinct, but have a separate rule to
  duplicate linear variables, which is less natural for
  programming.}.

\begin{version}{\Not\VeryShort}
A good way to think of the linear judgment
$\ljudgnoconf \lGamma \le \lty$ is that the evaluation of $\le$
\emph{consumes} the linear variables of $\lGamma$; it is thus natural
that the strict pair $\lpaire \leone \letwo$ would need separate sets
of resources $\lGammaone$ and $\lGammatwo$, as it evaluates both
members to return a value. On the other hand, case elimination
$\lcasee \le \lxone \leone \lxtwo \letwo$ reuses the same context
$\lGammapr$ in both branches $\leone$ and $\letwo$: only one will be
evaluated, so they do not compete for resources.
\end{version}

\begin{version}{\VeryShort}
  The expression $\lsrcsharee \le$ takes a term at some type $\lty$
  and creates a ``shared'' term, whose value will be duplicable. Its
  typing rule uses a context of the form $\lbangty \lGamma$, which is
  defined as the pointwise application of the $(\lbangty)$ connectives
  to all the types in $\lGamma$. In other words, the context of this
  rule must only have duplicable types : a term can only be made
  duplicable if it does not depend on linear resources from the
  context. Otherwise, duplicating the shared value could break the
  unique-ownership discipline on these linear resources.
\end{version}

\begin{version}{\Not\VeryShort}
The variable rule does not expect a context of the form
$\lGamma, \var \lx \lty$ but of the form
$\lbangty \lGamma, \var \lx \lty$. 
Here $\lbangty \lGamma$ is a notation
for the pointwise application of the $(\lbangty)$ connective to all
the types in $\lGamma$---i.e., all types in $\lbangty \lGamma$ are of
the form $\lbangty \lty$. 
This means that the variable rule can only be used
when all variables in the context are duplicable, except maybe the
variable that is being used. A context of the form
$\lGamma, \var \lx \lty$ would allow us to forget some variable present
in the context; in our judgment $\ljudgnoconf \lGamma \le \lty$, all
non-duplicable variables in $\lGamma$ must appear (once) in $\le$.

The form $\lbangty \lGamma$ is also used in the typing rule for
$\lsrcsharee \le$: a term can only be made duplicable if it does not
depend on linear resources from the context. Otherwise, duplicating
the shared value could break the unique-ownership discipline on these
linear resources.
\end{version}

Finally, the linear isomorphism notation for $\lfolde {}$ and
$\lunfolde {}$ in \shortref{fig:l:surface-static-sem} defines them as
primitive functions, at the given linear function type, in the empty
context -- using them does not consume resources. This notation also
means that, operationally, these two operations shall be inverses of
each other. The rules for the linear store type $\lboxedty \lty$ and
$\lemptyboxty$ are described in \shortref{subsec:l:store}.

\begin{version}{\Not\VeryShort}
\begin{lemma}[Context joining properties]
  \label{lem:l:join-assoc-commut}
  Context joining $(\ctxjoin{}{})$ is partial but associative and
  commutative. In particular, if
  $\ctxjoin {(\ctxjoin \lGammaone \lGammatwo)} \lGammapr$ is defined,
  then both $\ctxjoin \lGammai \lGammapr$ are defined.
\end{lemma}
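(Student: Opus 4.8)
The plan is to route the whole argument through an order-independent, closed-form description of $(\ctxjoin{}{})$, so that commutativity and associativity become symmetric bookkeeping on finite maps. As is standard we identify contexts that differ only in the ordering of their bindings (exchange), so that any binding may be assumed to sit at the end of a context and a context is just a finite map from variables to types; in particular the two defining clauses $(\lx \notin \lGammatwo)$ and $(\lx \notin \lGammaone)$ may both fire for a fixed pair of arguments, and part of what has to be shown is that this does not matter. First I would prove, by induction on the total number of bindings in $\lGammaone$ and $\lGammatwo$, the following characterization: $\ctxjoin \lGammaone \lGammatwo$ is defined iff every variable $\lx$ occurring in both $\lGammaone$ and $\lGammatwo$ is assigned the same type by both and that type has the exponential form $\lbangty \lty$; and, when it is defined, $\ctxjoin \lGammaone \lGammatwo$ has domain the union of the domains of $\lGammaone$ and $\lGammatwo$ and assigns to each variable in that domain the (by the previous clause unambiguous) type given to it by $\lGammaone$ or by $\lGammatwo$. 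Each of the three defining equations peels one trailing binding, and its side condition ($\lx$ absent from the other context, resp.\ the two trailing bindings being the same exponential binding $\var \lx {\lbangty\lty}$) selects exactly one branch of this case split, so the inductive step is immediate; the base case is the implicit $\ctxjoin \lemptyGamma \lemptyGamma = \lemptyGamma$.

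\textbf{Commutativity, associativity, and the corollary.}
Commutativity is then read off directly, since the definedness condition ``agree, at an exponential type, on the shared part of the domains'' and the description of the resulting map are both symmetric in $\lGammaone$ and $\lGammatwo$. For associativity, the domain clause of the characterization says that the domain of $\ctxjoin \lGammaone \lGammatwo$ is the union of those of $\lGammaone$ and $\lGammatwo$; using this, one checks that $\ctxjoin {(\ctxjoin \lGammaone \lGammatwo)} \lGammapr$ is defined exactly when, for each of the three pairs drawn from $\{\lGammaone,\lGammatwo,\lGammapr\}$, the two contexts of the pair agree at an exponential type on their common variables --- and that, when it is defined, it is the finite map whose domain is the union of the three domains and which sends each variable to the unique type assigned to it by whichever of the three contexts contains it. (The pair $\{\lGammaone,\lGammatwo\}$ accounts for $\ctxjoin \lGammaone \lGammatwo$ being defined at all; for a variable lying in all three domains, the value constraint coming from $\lGammapr$ together with the $\{\lGammaone,\lGammatwo\}$ constraint forces the third pairwise constraint.) This condition and this map are symmetric in the three contexts, so the identical statement holds for $\ctxjoin \lGammaone {(\ctxjoin \lGammatwo \lGammapr)}$, whence the two groupings are defined together and equal. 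The stated corollary follows: if $\ctxjoin {(\ctxjoin \lGammaone \lGammatwo)} \lGammapr$ is defined then $\ctxjoin \lGammaone \lGammatwo$ is defined (it must be, for the enclosing join to be defined), and the pairwise condition between $\lGammapr$ and each $\lGammai$ is one of the three conditions just shown to hold --- it follows from the condition relating $\ctxjoin \lGammaone \lGammatwo$ and $\lGammapr$ together with the inclusion of the domain of each $\lGammai$ in that of $\ctxjoin \lGammaone \lGammatwo$ --- so each $\ctxjoin \lGammai \lGammapr$ is defined.

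\textbf{Main obstacle.}
The only delicate point is the first step. Because $(\ctxjoin{}{})$ is presented as a recursion on the list structure of its two arguments, one must check that the clauses are exhaustive and confluent for a fixed pair of contexts --- that peeling a different trailing binding, or peeling from the other argument, yields the same finite map --- before any algebraic law can even be stated cleanly; this is precisely what the closed-form characterization supplies, after which commutativity, associativity and the corollary are purely set-theoretic. A direct proof of associativity by nested induction on the syntactic definition is possible but considerably more painful, since one then has to track how the $\lx \notin \lGamma$ side conditions propagate through the recursion.
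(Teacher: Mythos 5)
The paper states this lemma without proof, treating it as routine, so there is no official argument to compare against; judged on its own merits, your proof is correct and is essentially the argument the authors must have had in mind. Your key move --- replacing the clause-by-clause recursive definition with the closed-form characterization ``$\ctxjoin \lGammaone \lGammatwo$ is defined iff the two contexts agree, at a type of the form $\lbangty\lty$, on every shared variable, and the result is then the union of the two finite maps'' --- is exactly what makes commutativity, associativity, and the corollary about $\ctxjoin {(\ctxjoin \lGammaone \lGammatwo)} \lGammapr$ fall out by symmetric set-theoretic reasoning. You are also right to flag, and to resolve explicitly, the two points the paper glosses over: that contexts must be identified up to exchange for $(\ctxjoin{}{})$ to be a well-defined (and commutative) operation at all, and that the overlapping applicability of the three defining clauses requires a confluence check, which your characterization subsumes. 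The only cosmetic quibble is the phrase ``selects exactly one branch,'' which slightly understates the overlap you yourself identified earlier; the surrounding argument makes clear you handle it correctly.
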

\end{version}

\subsection{Linear Memory in \texorpdfstring{$\llang$}{L}}
\label{subsec:l:store}

The surface typing rules for the linear store are given at the end of
\shortref{fig:l:store-surface-static-sem}. The linear type
$\lboxedty \lty$ represents a memory location that holds a value of
type $\lty$. The type $\lemptyboxty$ represents a location that
has been allocated, but does not currently hold a value.
The primitive
operations to act on this type are given as linear isomorphisms:
$\lnewe {}$ allocates, turning a unit value into an empty location;
conversely, $\lfreee {}$ reclaims an empty location. Putting a value
into the location and taking it out are expressed by $\lboxe {}$ and
$\lunboxe {}$, which convert between a pair of an empty location and
a value, of type $\lunboxedty \lty$, and a full location, of type
$\lboxedty \lty$.

For example, the following program takes a full reference and a value,
and swaps the value with the content of the reference:

\LLNCS{
  \begin{sdisplaymath}
    \lfune {\lmetavar{p}{}{}} {\lopairty {(\lboxedty \lty)} \lty}
    {\lletpaire
        {\lmetavar{r}{}{}} \lx
        {\lmetavar{p}{}{}}
        {\lletpaire
        {\lmetavar{l}{}{}} {\lmetavar{x}{l}{}}
        {\lunboxe {\lmetavar{r}{}{}}}
        {\lpaire {\lboxe {\lpaire {\lmetavar{l}{}{}} \lx}} {\lmetavar{x}{l}{}}}}}
  \end{sdisplaymath}
}{
\begin{sdisplaymath}
  \begin{array}{l}
  \lfune {\lmetavar{p}{}{}} {\lopairty {(\lboxedty \lty)} \lty}{}
  \begin{stackTL}
      \lletpaire
        {\lmetavar{r}{}{}} \lx
        {\lmetavar{p}{}{}} {} \\
      \lletpaire
        {\lmetavar{l}{}{}} {\lmetavar{x}{l}{}}
        {\lunboxe {\lmetavar{r}{}{}}} {} \\
       \lpaire {\lboxe {\lpaire {\lmetavar{l}{}{}} \lx}} {\lmetavar{x}{l}{}}
\end{stackTL}
\end{array}
\end{sdisplaymath}
}
The programming style following from this presentation of linear
memory is functional, or applicative, rather than imperative. Rather
than insisting on the mutability of references---which is allowed by
the linear discipline---we may think of the type $\lboxedty \lty$ as
representing the indirection through the heap that is implicit in
functional programs. In a sense, we are not writing imperative
programs with a mutable store, but rather making explicit the
allocations and dereferences happening in higher-level purely
functional language. In this view, empty cells allow memory reuse.

This view that $\lboxedty \lty$ represents indirection through the
memory suggests we can encode lists of values of type $\lty$ by the
type
$
\lapptyop {LinList} \lty
\defeq
\lmuty \lalpha
  {\losumty
    \lunitty
    {\lboxedty {(\lopairty \lty \lalpha)}}}
$.  The placement of the box inside the sum mirrors
the fact that empty list is represented as an immediate value in
functional languages. From this type definition, one can write
an in-place reverse function on lists of $\lty$ as follows:

\begin{smathpar}
  \begin{array}{c}
    \lappop {fix}
    {\lfune
    {\lfont{rev\_into}}
    {\lofunty {\lapptyop {LinList} \lty}
      {\lofunty {\lapptyop {LinList} \lty} {\lapptyop {LinList} \lty}}}
    {}}\\
    {\lfune
      {\lmetavar{xs}{}{}}
      {\lapptyop {LinList} \lty}
      {\lfune
        {\lmetavar{acc}{}{}}
        {\lapptyop {LinList} \lty} {}}}\\
    \lbcasee {\lunfolde {\lmetavar{xs}{}{}}}
    \ly {(\lletunite \ly {\lmetavar{acc}{}{}})}
    \ly {
  \begin{stackTL}
      \lletpaire {\lmetavar{l}{}{}} {\lmetavar{p}{}{}} {\lunboxe \ly} {} \\
      \lletpaire {\lmetavar{xs}{}{}} \lx {\lmetavar{p}{}{}} {} \\
      \lappe
        {\lappop {rev\_into} {\lmetavar{xs}{}{}}}
        {(\lfolde {} {(\lsume 2 {(\lboxe {\lpaire {\lmetavar{l}{}{}} {\lpaire \lx {\lmetavar{acc}{}{}}}})})})}
    \end{stackTL}}
  \end{array}
\end{smathpar}

\begin{version}{\Not\VeryShort}
This definition uses a fixpoint operator $\lfont{fix}$ that can be
defined, in the standard way, using the iso-recursive type
$\lmuty \lalpha {\lofunty \lalpha {\lofunty \lty \ltypr}}$ of the
strict fixpoint combinator on functions $\lofunty \lty \ltypr$.
\end{version}

Our linear language $\llang$ is a formal language that is not
terribly convenient to program directly. We will not present a full
surface language in this work, but one could easily define syntactic
sugar to write the exact same function as follows:
\begin{smathpar}
  \begin{array}{lllllll}
    \lfont{rev\_into}
    & \lfont{Nil}
    & \lfont{acc}
    & =
    & \lmetavar{acc}{}{}
    &
    &
    \\
    \lfont{rev\_into}
    & (\lappop {Cons} {\lwithe {\lfont{l}} {\lpaire \lx {\lfont{xs}}}})
    & \lmetavar{acc}{}{}
    & =
    & \lfont{rev\_into}
    & {\lfont{xs}}
    & {(\lappop {Cons} {\lwithe {\lfont{l}} {\lpaire \lx {\lfont{acc}}}})}
  \end{array}
\end{smathpar}

One can read this function as the usual functional
\texttt{rev\_append} function on lists, annotated with memory reuse
information: if we assume we are the unique owner of the input list
and won't need it anymore, we can reuse the memory of its cons cells
(given in this example the name $\lfont{l}$) to store the reversed
list. On the other hand, if you read the $\lboxe{}$ and $\lunboxe{}$
as imperative operations, this code expresses the usual imperative
pointer-reversal algorithm.

This double view of linear state occurs in other programming systems
with linear state. It was recently emphasized in \citet*{cogent},
where the functional point of view is seen as easing formal
verification, while the imperative view is used as a compilation
technique to produce efficient C code from linear programs.

\subsection{Internal \texorpdfstring{$\llang$}{L} Syntax and Typing}
\label{subsec:l-conf}

To give a dynamic semantics for $\llang$ and prove it sound, we need
to extend the language with explicit stores and store
locations. Indeed, the allocating term $\lnewe \lunite$ should reduce
to a ``fresh location'' $\lloc$ allocated in some store $\lstore$, and
neither are part of the surface-language syntax. The corresponding
internal typing judgment is more complex, but note that users do not
need to know about it to reason about correctness of surface
programs. The internal typing is essential for the soundness proof,
but also useful for defining the multi-language semantics in
\shortref{sec:multi-language}.

\newcommand{\SectionInternalLL}{\begingroup
\begin{figure}
  \begin{smathpar}
    \begin{array}{l@{~}c@{~}r@{~}ll}
      \mbox{\textit{Types}}
      & \lty
      &
      & \text{(unchanged from \shortref{fig:l:surface-static-sem})}
      \\[3pt]

      \mbox{\textit{Expressions}}
      & \le
      & \bnfadd 
      & \dots
        \bnfalt \lloc
        \bnfalt {\lsharee \lstore \lstorety \le}
      \\[1pt] & &
      & \text{with~} {\lsrcsharee \le \defeq
                      \lsharee \lemptystore \lemptystorety \le}
      \\[3pt]

      \mbox{\textit{Values}}
      & \lv
      & \bnfadd
      & \dots
        \bnfalt
        \lloc
        \bnfalt
        \lsharee \lstore \lstorety \lv
    \end{array}

    \begin{array}{l@{~}c@{~}r@{~}ll}
      \mbox{\textit{Store}} & \lstore & \bnfdef
      & \lemptystore
        \bnfalt
        \stextconf \lstore \lloc \lstore \lv
        \bnfalt
        \stextempty \lstore \lloc
      \\[3pt]
      
      \mbox{\textit{Configurations}} & & \bnfdef & \hconf \lstore \le
      \\[3pt]
            
      \mbox{\textit{Store typing}} & \lstorety & \bnfdef
      & \lemptystorety
        \bnfalt
        \lstorety, \ladead \lloc \lty
      \\ & & &

        \;\; \bnfalt
        \lstorety, \lalive \lGamma \lstoretypr \lloc \lty
    \end{array}
  \end{smathpar}

  \fbox{\small$\lstorejoin \lstoreone \lstoretwo$}
  \quad\text{Union of stores on disjoint locations}
  \quad
  \fbox{\small$\storectxjoin \lstoretyone \lstoretytwo$}
  \quad\text{Union of store typings on disjoint locations}

  \fbox{\small$\ljudg \lstorety \lGamma \lstore \le \lty$}
  \hfill
  \fbox{\small$\ljudgnoconf \lGamma \le \lty
    ~\defeq~
    \ljudg \lemptystorety \lGamma \lemptystore \le \lty$}
  \begin{smathpar}
    \inferrule
    {\ljudg \lstoretyone \lGammaone \lstoreone \leone \ltyone \+
     \ljudg \lstoretytwo \lGammatwo \lstoretwo \letwo \ltytwo}
    {\ljudg {\storectxjoin \lstoretyone \lstoretytwo}
            {\ctxjoin \lGammaone \lGammatwo}
            {\lstorejoin \lstoreone \lstoretwo}
            {\lpaire \leone \letwo}
            {\lopairty \ltyone \ltytwo}}

    \inferrule
    {\ljudg \lstorety \lGamma \lstore \le {\lopairty \ltyone \ltytwo} \+
     \ljudg \lstoretypr {\lGammapr, \var \lxone \ltyone, \var \lxtwo \ltytwo}
            \lstorepr \lepr \lty}
    {\ljudg {\storectxjoin \lstorety \lstoretypr}
            {\ctxjoin \lGamma \lGammapr}
            {\lstorejoin \lstore \lstorepr}
            {\lletpaire \lxone \lxtwo \le \lepr}
            \lty}
\\
    \inferrule
    { }
    {\ljudg \lemptystorety {\lbangty \lGamma, \var \lx \lty} \lemptystore \lx \lty}

    \inferrule
    { }
    {\ljudg \lemptystorety {\lbangty \lGamma} \lemptystore \lunite \lunitty}

    \inferrule
    {\ljudg \lstorety \lGamma \lstore \le \lunitty \+
     \ljudg \lstoretypr \lGammapr \lstorepr \lepr \lty}
    {\ljudg
      {\storectxjoin \lstorety \lstoretypr}
      {\ctxjoin \lGamma \lGammapr}
      {\lstorejoin \lstore \lstorepr}
      {\lletunite \le \lepr}
      \lty}
\\
    \inferrule
    {\ljudg  \lstorety {\lGamma, \var \lx \lty}  \lstore \le \ltypr}
    {\ljudg \lstorety \lGamma
      \lstore {\lfune \lx \lty \le}
      {\lofunty \lty \ltypr}}

    \inferrule
    {\ljudg \lstorety \lGamma \lstore \le {\lofunty{\ltypr}{\lty}} \+
     \ljudg \lstoretypr \lGammapr \lstorepr \lepr \ltypr}
    {\ljudg
      {\storectxjoin \lstorety \lstoretypr}
      {\ctxjoin \lGamma \lGammapr}
      {\lstorejoin \lstore \lstorepr}
      {\lappe \le \lepr}
      \lty}
\\
    \inferrule
    {\ljudg \lstorety \lGamma \lstore \le \ltyi}
    {\ljudg \lstorety \lGamma \lstore
      {\lsume \ix \le} {\losumty \ltyone \ltytwo}}

    \inferrule
    {\ljudg \lstorety \lGamma \lstore \le {\losumty \ltyone \ltytwo}  \+
     \ljudg \lstoretypr {\lGammapr,\lxone:\ltyone} \lstorepr \leone \lty \+
     \ljudg \lstoretypr{\lGammapr,\lxtwo:\ltytwo} \lstorepr \letwo \lty}
    {\ljudg
      {\storectxjoin \lstorety \lstoretypr}
      {\ctxjoin \lGamma \lGammapr}
      {\lstorejoin \lstore \lstorepr}
      {\lcasee \le \lxone \leone \lxtwo \letwo}
      \lty}
\\
    \inferrule
    {\ljudg \lstorety {\lbangty \lGamma} \lstore \le \lty}
    {\ljudg \lemptystorety
      {\lbangty \lGamma}
      \lemptystore
      {\lsharee \lstore \lstorety \le}
      {\lbangty \lty}}

    \inferrule
    {\ljudg \lstorety \lGamma \lstore \le {\lbangty \lty}}
    {\ljudg \lstorety \lGamma \lstore {\lsrccopye \le} \lty}
\\
    \inferrule
    { }
    {\ljudg
      {\ladead \lloc \lty}
      {\lbangty \lGamma}
      {\stsingleton \lloc \lemptyloc}
      \lloc
      {\lemptyboxty}}

    \inferrule
    {\ljudg \lstorety \lGamma \lstore \lv \lty}
    {\ljudg
      {(\lstoreassert \lGamma \lstorety \lloc \lboxedty \lty)}
      {\ctxjoin \lGamma {\lbangty \lGammapr}}
      {\stsingletonconf \lloc \lstore \lv} \lloc {\lboxedty \lty}}
\\
\bidirliniso
  {\lmuty \lalpha \lty}
  {\lunfoldedty \lalpha \lty}
  {\lunfolde {}}
  {\lfolde {\lmuty \lalpha \lty} {}}

\bidirliniso
  {\lunitty}
  {\lemptyboxty}
  {\lnewe {}}
  {\lfreee {}}

\bidirliniso
  {\lboxedty \lty}
  {\lunboxedty \lty}
  {\lunboxe {}}
  {\lboxe {}}
  \end{smathpar}
  \caption{Linear Language: Internal Static Semantics}
  \label{fig:l:internal-syntax}
  \label{fig:l:internal-static-sem}
\end{figure}

The syntax of internal terms and the internal type system are
presented in \shortref{fig:l:internal-static-sem}. Reduction will be
defined on \emph{configurations} $\hconf \lstore \le$, which are pairs
of a store $\lstore$ and a term $\le$. Stores $\lstore$ map
\emph{locations} $\lloc$ to either nothing (the location is empty),
written $\stextempty {} \lloc$, or a value paired with its own local store,
noted $\stextconf {} \lloc \lstore \lv$. Having local stores in this
way, instead of a single global store as is typical in formalizations
of ML, directly expresses the idea of ``memory ownership'' in the
syntax: a term $\le$ ``owns'' the locations that appear in it, and
a configuration $\hconf \lstore \le$ is only well-typed if the domain
of $\lstore$ is exactly those locations. Each store slot, in turn, may
contain a value and the local store owned by the value; in particular,
passing a full location of type $\lboxedty \lty$ transfers ownership
of the location, but also of the store fragment captured by the
value.

Our internal typing judgment $\ljudg \lstorety \lGamma \lstore \le \lty$
checks configurations, not just terms, and relies not only on a typing
context for variables $\lGamma$ but also on a \emph{store typing}
$\lstorety$, which maps the locations of the configuration to typing
assumptions of two forms: $\ladead \lloc \lty$ indicates that $\lloc$
must be empty in the configuration, and
$\lalive \lstorety \lGamma \lloc \lty$ indicates that $\lloc$ is full,
and that the value it contains owns a local store of type $\lstore$
and the resources in $\lGamma$.

Just as linear variables must occur exactly once in a term, locations
have linear types and thus occur exactly once in a term. Our typing
judgment uses disjoint store typings $\storectxjoin \lstoretyone
\lstoretytwo$ to enforce this linearity. Similarly, leaf rules such as
the variable, unit, and location rules enforce that both the store
typing and the store be empty, which enforces that all locations are
used in the term. 

Locations $\lloc$ are always linear, never duplicable. To allow
sharing terms that contain locations, the internal language uses the
internal construction $\lsharee \lstore \lstorety \le$, that
\emph{captures} a local store $\lstore : \lstorety$. This notation is
a binding construct: the locations in $\lstore$ are bound by this
shared term, and not visible outside this term. In particular, the
typing rule for $\lsharee \lstore \lstorety \le$ checks the term $\le$
in the store $\lstore$, but it is itself only valid paired with an
empty store, under the empty store typing. When new copies of a shared
term are made, the local store is copied as well: this is necessary to
guarantee that locations remain linear---and for correctness of
linear state update.

The typing rule for functions $\lfune \lx \lty \le$ lets function
bodies use an arbitrary store typing $\lstorety$. This would be
unsound if our functions were duplicable, but it is a natural and
expressive choice for linear, one-shot functions. To make a function
duplicable, one can share it at type
$\lbangty {(\lofunty \lty \ltypr)}$, whose values are of the canonical
form $\lsharee \lstore \lstorety {\lfune \lx \lty \le}$. It is the
sharing construct, not the function itself, that closes over the local
store.

With the macro-expansion
$\lsrcsharee \le \defeq \lsharee \lemptystore \lemptystorety \le$, any
term $\le$ of the surface language (\shortref{fig:l:surface-syntax}) can
be seen as a term of the internal language
(\shortref{fig:l:internal-syntax}). In particular, we can prove that
the surface and internal typing judgments coincide on surface terms.

\begin{lemma}
  If $\le$ is a surface term of $\llang$, then the surface judgment
  $\ljudgnoconf \lGamma \le \lty$ holds if and only if the internal
  judgment $\ljudg \lemptystorety \lGamma \lemptystore \le \lty$
  holds.
\end{lemma}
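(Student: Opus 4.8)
The plan is to prove both directions by structural induction, exploiting the fact that the only difference between the two judgments is the presence of store typings $\lstorety$ and stores $\lstore$, together with the macro-expansion $\lsrcsharee \le \defeq \lsharee \lemptystore \lemptystorety \le$. The key observation to establish first is a pair of normalization facts about the internal rules restricted to surface terms: (i) whenever $\ljudg \lstorety \lGamma \lstore \le \lty$ holds for a \emph{surface} $\le$ (i.e.\ one containing no locations $\lloc$ and no internal $\lsharee \lstore \lstorety {-}$ except through the $\lsrcsharee$ macro), the only rules whose conclusion has a nonempty $\lstorety$ or $\lstore$ are the two location rules, and these cannot fire on a location-free term; consequently $\lstorety = \lemptystorety$ and $\lstore = \lemptystore$ are forced, and moreover every premise in the derivation also has empty store and store typing, because $\storectxjoin{-}{-}$ and $\lstorejoin{-}{-}$ of things summing to empty must both be empty. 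So the internal derivation of a surface term is, rule for rule, already ``in the surface fragment.''

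For the forward direction ($\Leftarrow$), I would induct on the derivation of $\ljudg \lemptystorety \lGamma \lemptystore \le \lty$. By the normalization fact, every subderivation has the form $\ljudg \lemptystorety {\lGamma'} \lemptystore {\le'} {\lty'}$ with $\le'$ a subterm of $\le$, hence itself a surface term, so the induction hypothesis applies uniformly. Each internal rule then maps to the corresponding surface rule of \shortref{fig:l:surface-static-sem}: the $\storectxjoin{-}{-}$ and $\lstorejoin{-}{-}$ side conditions trivialize, $\ctxjoin{-}{-}$ matches verbatim, and the internal $\lsharee{-}{-}{-}$ rule, specialized to an empty captured store, becomes exactly the surface rule for $\lsrcsharee{-}$; the linear-isomorphism entries ($\lfolde{}{}/\lunfolde{}$, $\lnewe{}/\lfreee{}$, $\lboxe{}/\lunboxe{}$) are literally identical in the two figures. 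The location rules simply never arise. For the backward direction ($\Rightarrow$), I induct on the surface derivation of $\ljudgnoconf \lGamma \le \lty$ and in each case apply the matching internal rule with all store typings and stores instantiated to empty; here $\storectxjoin \lemptystorety \lemptystorety = \lemptystorety$ and $\lstorejoin \lemptystore \lemptystore = \lemptystore$ discharge the side conditions, and the surface rule for $\lsrcsharee{-}$ is turned into the internal $\lsharee{-}{-}{-}$ rule via the macro-expansion. Context joining is used exactly as in the surface system, so \shortref{lem:l:join-assoc-commut} covers the bookkeeping.

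The main obstacle is purely definitional care rather than mathematical depth: one must pin down precisely what ``$\le$ is a surface term'' means inside the internal grammar --- namely that $\le$ is in the image of the inclusion of \shortref{fig:l:surface-syntax} into \shortref{fig:l:internal-syntax}, so that every occurrence of $\lsharee{-}{-}{-}$ is of the macro form $\lsharee \lemptystore \lemptystorety {-}$ and no $\lloc$ occurs --- and then check that this property is preserved when passing to the premises of every internal rule (the only subtle point being that the $\lsharee{-}{-}{-}$ rule's premise type-checks $\le$ under a possibly nonempty captured store, which for a \emph{macro-form} $\lsharee \lemptystore \lemptystorety \le$ is again empty, keeping us in the fragment). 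Once that invariant is isolated and the normalization fact about empty stores is stated, both inductions are a mechanical rule-by-rule correspondence with no genuine case analysis beyond reading off the two figures.
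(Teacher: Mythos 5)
Your proof is correct, and it is the natural argument for this lemma, which the paper states without proof: a rule-by-rule induction in both directions, hinging on the observation that for location-free terms every store and store typing occurring in an internal derivation is forced to be empty (since only the two location rules produce nonempty ones, and the macro-expansion $\lsrcsharee \le \defeq \lsharee \lemptystore \lemptystorety \le$ keeps the premise of the sharing rule in the empty-store fragment). The one point worth isolating as an explicit invariant, as you do, is that "surface term" must be read as membership in the image of the inclusion of the surface grammar into the internal one, so that the sharing rule's premise never escapes the fragment.
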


The following technical results are used in the soundness proof for
the language -- the subject-reduction result.

\begin{lemma}[Inversion principle for $\llang$ values]
  \label{lem:l:inversion}
  In any complete derivation of
  $\ljudg \lstorety \lGamma \lstore \lv \lty$,
  either $\lv$ is a variable $\lx$,
  or the derivation starts with the introduction rule for $\lty$.
\end{lemma}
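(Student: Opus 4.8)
The plan is to prove this by a direct case analysis on the last (root) rule of the given complete derivation; no induction is needed, since the statement only concerns the shape of that root rule. The structural observation underlying the argument is that the internal type system of \shortref{fig:l:internal-static-sem} is \emph{syntax-directed in the subject term}: inspecting the figure, the outermost term former appearing in the subject position of a rule's conclusion is distinct across rules, the sole exception being a location $\lloc$, which is the subject of exactly two rules --- the ``dead location'' rule (concluding type $\lemptyboxty$, and forcing the store to be $\stsingleton{\lloc}{\lemptyloc}$ and the store typing $\ladead{\lloc}{\lty}$) and the ``full location'' rule (concluding type $\lboxedty{\lty}$). Since these two rules differ in the concluded type, a complete derivation of $\ljudg{\lstorety}{\lGamma}{\lstore}{\lv}{\lty}$ has a last rule that is uniquely determined by the pair $(\lv,\lty)$.

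With that in hand, I would enumerate the value forms of the internal language (\shortref{fig:l:internal-syntax}): $\lx$, $\lpaire{\lvone}{\lvtwo}$, $\lunite$, $\lfune{\lx}{\lty}{\le}$, $\lsume{1}{\lv}$, $\lsume{2}{\lv}$, $\lfolde{\lmuty{\lalpha}{\lty}}{\lv}$, $\lsharee{\lstore}{\lstorety}{\lv}$ (and $\lsrcsharee{\lv}$, which by definition abbreviates $\lsharee{\lemptystore}{\lemptystorety}{\lv}$), and $\lloc$. If $\lv = \lx$ we are in the first disjunct and there is nothing more to do. In every other case the rule pinned down by the observation above is exactly the introduction rule for the head connective of $\lty$: the pair rule for $\lopairty{\ltyone}{\ltytwo}$, the unit rule for $\lunitty$, the function rule for $\lofunty{\lty}{\ltypr}$, the injection rule for $\losumty{\ltyone}{\ltytwo}$, the fold rule for $\lmuty{\lalpha}{\lty}$, the sharing rule for $\lbangty{\lty}$, and --- for a location --- the dead-location rule when $\lty$ is $\lemptyboxty$ or the full-location rule when $\lty$ is $\lboxedty{\ltypr}$, whichever matches the concluded type. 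This yields the second disjunct in each non-variable case.

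The one point that needs a little care is the status of the operations introduced by the linear-isomorphism notation of \shortref{fig:l:surface-static-sem}. Although that notation presents $\lfolde{\lmuty{\lalpha}{\lty}}{}$ and $\lunfolde{}$ as primitive functions, what matters here is that $\lfolde{\lmuty{\lalpha}{\lty}}{\le}$ comes with its own dedicated typing rule --- concluding $\lmuty{\lalpha}{\lty}$ from a subderivation at $\lunfoldedty{\lalpha}{\lty}$ --- which therefore serves as the introduction rule for recursive types; and since $\lfolde{\lmuty{\lalpha}{\lty}}{\lv}$ is listed in the value grammar, this derivation is not an instance of the application rule. The five remaining such operations ($\lunfolde{}$, $\lnewe{}$, $\lfreee{}$, $\lboxe{}$, $\lunboxe{}$) never head a value, so they are irrelevant to this lemma. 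Beyond this bookkeeping --- and remembering to treat the internal value forms $\lloc$ and $\lsharee{\lstore}{\lstorety}{\lv}$ alongside the surface ones --- there is no genuine obstacle: the statement is essentially a restatement of the fact that the linear type system has no subsumption rule and no non-syntax-directed structural rule.
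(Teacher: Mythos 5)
Your proposal is correct, and it supplies an argument the paper itself omits: the lemma is stated without proof (followed only by a worked example for $\lbangty \lty$), precisely because it reduces to the routine observation you make explicit — the internal judgment is syntax-directed in the subject, with the location $\lloc$ as the only term former shared by two rules, and those two are disambiguated by the concluded type. Your handling of the linear-isomorphism notation (reading $\lfolde {\lmuty \lalpha \lty} {}$ applied to a value as its own introduction rule rather than an instance of application, and noting that the other iso primitives never head a value) is exactly the bookkeeping needed for the case analysis to go through.
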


For example, if we have
$\ljudg \lstorety \lGamma \lstore \lv {\lbangty \lty}$, then we know
that $\lv$ is either a variable or of the form
$\lsharee \lstorepr \lstoretypr \lvpr$ for some $\lvpr$, but also that
$\lstore = \lemptystore$, $\lstorety = \lemptystorety$ and that $\lGamma$ is
of the form $\lbangty \lGammapr$ for some $\lGammapr$. The latter is
immediate if $\lv$ is $\lsharee \lstorepr \lstoretypr \lvpr$, and also
holds if $\lv$ is a variable.

\begin{lemma}[Weakening by duplicable contexts]
  \label{lem:l:weakening}
  $\ljudg \lstorety \lGammapr \lstore \le \lty$ implies
  $\ljudg \lstorety {\lbangty \lGamma, \lGammapr} \lstore \le \lty$.
\end{lemma}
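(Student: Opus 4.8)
The plan is the standard admissibility-of-weakening argument for intuitionistic linear logic: prove the statement by structural induction on the given internal typing derivation of $\ljudg \lstorety \lGammapr \lstore \le \lty$, showing at each step that $\lbangty\lGamma$ can be added to the variable context while the store typing $\lstorety$ and the store $\lstore$ stay untouched. We read the statement under the standing hypothesis that the domains of $\lGamma$ and $\lGammapr$ are disjoint --- otherwise $\lbangty\lGamma, \lGammapr$ is not even a context --- which $\alpha$-renaming always arranges. Keeping $\lstorety$ and $\lstore$ fixed is an invariant we must be careful about, because store typings record the variable contexts of the boxed values they describe.

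\emph{Leaf rules.} In the variable, unit, and empty-location rules the conclusion's context is a purely duplicable context (extended, in the variable rule, by the single variable being used), and in the full-location rule it is the $(\ctxjoin{}{})$ of the boxed value's context with a purely duplicable context. In all of these we do \emph{not} recurse: we simply enlarge the duplicable component by $\lbangty\lGamma$. This is legitimate because the combination of $\lbangty\lGamma$ with a purely duplicable context is again purely duplicable, and because --- by commutativity and associativity of $(\ctxjoin{}{})$, \shortref{lem:l:join-assoc-commut} --- merging $\lbangty\lGamma$ into one side of a join of contexts is the same as adding it to the whole join. The full-location rule is the one place needing care: its conclusion's store typing records the boxed value's context, so we must route $\lbangty\lGamma$ into the \emph{other}, duplicable component and leave the boxed value's subderivation --- hence the store typing --- exactly as given.

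\emph{Structural rules.} For every remaining rule we apply the induction hypothesis to the subderivation(s) that carry resources and reassemble. For the single-premise rules that thread their context straight through --- sum injection, $\lsrccopye{}$, and the iso rules $\lfolde{}{}$/$\lunfolde{}$, $\lnewe{}$/$\lfreee{}$, $\lboxe{}$/$\lunboxe{}$ --- we weaken the premise by $\lbangty\lGamma$ and reapply the rule verbatim. For $\lfune\lx\lty\le$ we take $\lx$ fresh for $\lGamma$, weaken the premise so its context becomes $\lbangty\lGamma, \lGammapr, \var\lx\lty$, and reapply. For $\lsrcsharee\le$ --- and likewise the internal $\lsharee\lstore\lstorety\le$ --- the premise's context is already purely duplicable, so after weakening it still is and the rule reapplies; the induction hypothesis keeps the premise's inner store typing fixed, so the store captured by the $\lsharee{}{}{}$ term is unchanged. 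For the context-splitting rules --- linear pair, $\lletpaire{}{}{}{}$, $\lletunite{}{}$, application, and case --- the conclusion's context is a join $\ctxjoin\lGammaone\lGammatwo$; we weaken only the first premise (for case, the scrutinee's premise, leaving the two branches, which reuse a common context, alone), obtaining context $\lbangty\lGamma, \lGammaone$, and rebuild the conclusion using $\ctxjoin{(\lbangty\lGamma, \lGammaone)}{\lGammatwo} = \lbangty\lGamma, (\ctxjoin\lGammaone\lGammatwo)$, which holds since $\lGamma$ is disjoint from $\lGammatwo$. In every structural case the induction hypothesis leaves each premise's store typing alone, so the conclusion's store typing is unchanged as well.

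I expect no genuine obstacle: the whole substance of the argument is the bookkeeping that keeps $\lstorety$ and $\lstore$ fixed. That constraint is exactly what forces, in the full-location rule, absorbing $\lbangty\lGamma$ into the duplicable component rather than into the boxed value's derivation, and it is what makes it worth spelling out the elementary duplicable-context merging identities for $(\ctxjoin{}{})$ used above --- all of which are immediate from the definition of $(\ctxjoin{}{})$ and \shortref{lem:l:join-assoc-commut}.
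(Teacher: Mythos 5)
The paper states this lemma without giving a proof, so there is nothing to diverge from: your argument is the standard induction on the typing derivation that one would expect here, and it is correct. You also correctly identify the one genuinely delicate point — that at the full-location rule the fresh duplicable variables must be absorbed into the ambient duplicable component rather than into the boxed value's subderivation, since the latter's context is recorded in the store typing, which the statement requires to remain fixed.
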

\endgroup}

\VeryShort{
We work with \emph{configurations} $\hconf \lstore \le$, which are
pairs of a store $\lstore$ and a term $\le$.  Our internal typing
judgment $\ljudg \lstorety \lGamma \lstore \le \lty$ checks
configurations, not just terms, and relies not only on a typing
context for variables $\lGamma$ but also on a \emph{store typing}
$\lstorety$, which maps the locations of the configuration to typing
assumptions.

Unfortunately, due to space limits, we will not present this part of
the type system -- which is not directly exposed to users of the
language.
\Appendices{
  We include it for convenience in
  \fullref{appendix:internal-linear-language}
}{
  See some examples of reduction rules in
  \fullref{fig:l:internal-reduction-excerpt}, and the long version of
  this work.
}
}{\SectionInternalLL}

\subsection{Reduction of Internal Terms}

\newcommand{\SectionInternalLLReduction}{\begingroup
\begin{figure}
  head reduction
  \quad
  \fbox{\small{$\le \lheadredexstep \lepr$}}
  \quad
  \fbox{\small{$\hconf \lstore \le \lheadredexstep \hconf \lstorepr \lepr$}}
  \begin{smathpar}
    \begin{array}{r@{\quad\lheadredexstep\quad}l}
      {\lletpaire \lxone \lxtwo {\lpaire \lvone \lvtwo} \le}
      & {\subst {\subst \le \lvone \lxone} \lvtwo \lxtwo}
      \\
      {\lletunite \lunite \le}
      & \le
      \\
      {\lappe {(\lfune \lx \lty \le)} \lv}
      & {\subst \le \lv \lx}
      \\
      {\lcasee {(\lsume \ix \lv)} \lxone \leone \lxtwo \letwo}
      & {\subst \lei \lv \lxi}
      \\
      {\lunfolde {(\lfolde {\lmuty \lalpha \lty} \lv)}}
      & \lv
    \end{array}

    \infer
    {\le \lheadredexstep \lepr}
    {\hconf \lstore \le \lheadredexstep \hconf \lstore \lepr}
\\
    \bidirlheadredexstep
      {\hconf \lemptystore \lunite}
      {\hconf {\stsingleton \lloc \lemptyloc} \lloc}
      {\lnewe {}}
      {\lfreee {}}

    \bidirlheadredexstep
      {\hconf
        {\stextempty \lstore \lloc}
        {\lpaire \lloc \lv}}
      {\hconf
        {\stsingletonconf \lloc \lstore \lv}
        \lloc}
      {\lboxe {}}
      {\lunboxe {}}
    \\
    \begin{array}{l}
      \lsrccopye {(\lsharee
         {\lstorejoin \lstoreone \lstoretwo}
         {\storectxjoin \lstoretyone \lstoretytwo}
         {\lpaire \lvone \lvtwo})}
      \\
      \lheadredexstep
      \qquad \text{if } \locs\lstorei = \locs\lstoretyi = \locs\lvi
      \\ \lpaire
          {\lsrccopye {\lsharee \lstoreone \lstoretyone \lvone}}
          {\lsrccopye {\lsharee \lstoretwo \lstoretytwo \lvtwo}}
    \end{array}

    \begin{array}{l@{\;\lheadredexstep\;}l}
      \lsrccopye {(\lsharee \lemptystore \lemptystorety \lunite)}
      & \lunite \\
      
      \lsrccopye {(\lsharee \lstore \lstorety {\lsume \ix \lv})}
      & \lsume \ix {\lsrccopye {(\lsharee \lstore \lstorety \lv)}} \\

      \lsrccopye {(\lsharee \lstore \lstorety {\lfolde {} \lv})}
      & \lfolde {}
          {(\lsrccopye {(\lsharee \lstore \lstorety \lv)})} \\
    \end{array}

    \hconf \lemptystore
      {\lsrccopye {(\lsharee \lstore \lstorety {\lfune \lx \lty \le})}}
    \;\lheadredexstep\;
    \hconf \lstore {\lfune \lx \lty \le}

    \lsrccopye
      {(\lsharee \lemptystore \lemptystorety
        {(\lsharee \lstore \lstorety \lv)})}
    \;\lheadredexstep\;
    \lsharee \lstore \lstorety \lv

    {\lsrccopye
      {(\lsharee {\stsingleton \lloc \lemptyloc} {\ladead \lloc \lty} \lloc)}}
    \;\lheadredexstep\;
    \lnewe \lunite

    \begin{array}{r}
      {\lsrccopye
        {(\lsharee
          {\stsingletonconf \lloc \lstore \lv}
          {\lalive {\lbangty \lGamma} \lstorety \lloc \lty}
        \lloc)}}
        \\
      \lheadredexstep\;
      \lboxe {\lpaire 
        {\lnewe \lunite}
        {\lsrccopye {(\lsharee \lstore \lstorety \lv)}}}
    \end{array}
\end{smathpar}

    linear reduction contexts
    \fbox{\small{$
        \ljudg \lstorety \lGamma
          \lstore {\lectxt \hw {\var \square \lty}} \ltypr$}}

    \begin{smathpar}
    \begin{array}{c@{~}c@{~}l}
      \lectxt & \bnfdef
      & \var \square \lty
        \bnfalt
        \lpaire \lectxt \letwo
        \bnfalt
        \lpaire \lv \lectxt
        \bnfalt
        \lletpaire \lvone \lvtwo \lectxt \letwo
        \bnfalt
      \\ &
      & \lletunite \lectxt \le
        \bnfalt
        \lappe \lectxt \le
        \bnfalt
        \lappe \lv \lectxt
        \bnfalt
        \lsrccopye \lectxt
        \bnfalt
      \\ &
      & \lsume 1 \lectxt \bnfalt \lsume 2 \lectxt
        \bnfalt
        \lcasee \lectxt \lxone \leone \lxtwo \letwo
        \bnfalt
      \\ &
      & \lfolde {\lmuty \lalpha \lty} \lectxt
        \bnfalt
        \lunfolde \lectxt
        \bnfalt
      \\ &
      & \lnewe \lectxt \bnfalt \lfreee \lectxt
        \bnfalt
        \lboxe \lectxt
        \bnfalt
        \lunboxe \lectxt
    \end{array}

    \text{typing rules of terms, plus:}\quad
    \infer*{ }
    {\ljudg \lemptystorety \lemptyGamma \lemptystore {(\var \square \lty)} \lty}
    \end{smathpar}

    reduction \fbox{\small{$\hconf \lstore \le \lredexstep \hconf \lstorepr \lepr$}}

\begin{smathpar}
    \inferrule
    {\hconf \lstore \le \lheadredexstep \hconf \lstorepr \lepr}
    {\hconf \lstore \le \lredexstep \hconf \lstorepr \lepr}
    \quad
    \inferrule
    {\ljudg \lstorety \lGamma {\lstorepr[2]}
      {\lectxt \hw {\var \square \lty}} \ltypr \quad
     \hconf \lstore \le
     \lredexstep
     \hconf \lstorepr \lepr}
    {\hconf {\lstorejoin {\lstorepr[2]} \lstore} {\lectxt \hw \le}
     \lredexstep
     \hconf {\lstorejoin {\lstorepr[2]} \lstorepr} {\lectxt \hw \lepr}}

    \inferrule
    {\ljudg \lstorety \lGamma \lstore \le \lty \quad
     \hconf \lstore \le \lredexstep \hconf \lstorepr \lepr \quad
     \ljudg \lstoretypr \lGamma \lstorepr \lepr \lty}
    {\hconf \lemptystore {\lsharee \lstore \lstorety \le}
     \lredexstep
     \hconf \lemptystore {\lsharee \lstorepr \lstoretypr \lepr}}
 \end{smathpar}
 \caption{Linear Language: Operational Semantics}
 \label{fig:l:opsem}
\end{figure}

\shortref{fig:l:opsem} gives a small-step operational semantics for
the internal terms of $\llang$. We separate the head reductions
$(\lheadredexstep)$ from reductions in depth $(\lredexstep)$. The head
reduction of the linear types of the core language do not involve the
store and are standard. For the store primitives of
\shortref{fig:l:store-surface-static-sem} acting on $\lemptyboxty,\lboxedty \lty$,
we reuse the isomorphism notation to emphasize that the related
primitives are inverses of each other.

There are several reduction rules for
$\lsrccopye {(\lsrcsharee \le)}$, one for each type connective. These
reductions perform a deep copy of the value, stopping only on ground
data ($\lunite$), function values, and shared sub-terms: when copying
a $\lbangty {\lbangty \lty}$ into a $\lbangty \lty$, there is no need
for a deep copy. When it encounters a location,
$\lsrccopye {(\lsrcsharee \lloc)}$ reduces to a new allocation.  If
the location contains a value, the new location is filled with a copy of
this value.

The copying rule for functions performs a copy of the local store
$\lstore$ of the shared function. The locations in $\lstore$ are bound
on the left-hand-side of the reduction, and free on the
right-hand-side: this reduction step allocates fresh locations, and
the store typing of the term changes from $\lemptystorety$ on the left
to $\lstorety$ on the right. The fact that reduction changes the store
typing is not unique to this rule, it is also the case when directly
copying locations. In ML languages with references, the store only
grows during reduction.  That is not the case for our linear store: our
reduction may either allocate new locations or free existing ones.

We define a grammar of (deterministic) reduction contexts, which contain
exactly one hole $\square$ in evaluation position. However, we only
define \emph{linear} contexts $\lectxt$ that do not share their hole:
we need a specific treatment of the $\lsharee \lstore \lstorety \le$
reduction. Its subterm $\le$ is reduced in the local store $\lstore$,
but may create or free locations in the store; so we need to update
the local store and its store typing during the reduction.

\begin{theorem}[Progress]
  \label{thm:l:progress}
  If $\ljudg \lstorety \lGamma \lstore \le \lty$, then either $\le$ is
  a value $\lv$ or there exists $\hconf \lstorepr \lepr$ such that
  $\hconf \lstore \le \lredexstep \hconf \lstorepr \lepr$.
\end{theorem}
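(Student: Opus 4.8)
The proof goes by structural induction on the derivation of $\ljudg \lstorety \lGamma \lstore \le \lty$, following the usual recipe for a progress lemma; the extra ingredient is the bookkeeping of how the store $\lstore$ and store typing $\lstorety$ are split and threaded by the rules. The typing rules fall into \emph{introduction} forms (pairs, unit, abstraction, injections, fold, locations, and the internal $\lsharee{}{}{}$) and \emph{elimination} forms (let-pair, let-unit, application, case, unfold, $\lsrccopye{}$, and the box/unbox and new/free primitives). For an introduction form we apply the induction hypothesis to each immediate subterm: if they are all values then the term itself matches the value grammar and we are done; otherwise we take the first non-value immediate subterm, observe that the reduction-context grammar of \shortref{fig:l:opsem} has a frame for exactly that position, and close the case with the depth-reduction congruence rule. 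For an elimination form we apply the induction hypothesis to the principal subterm: if it reduces, a congruence step finishes; if it is a value, then --- being typed under the linear part of a context that is empty for a closed configuration, hence not a variable --- it is by the inversion principle (\shortref{lem:l:inversion}) the canonical introduction form for its type, and the matching head-reduction rule of \shortref{fig:l:opsem} applies.

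All the remaining work is about the store. To instantiate the depth-reduction congruence rule we must exhibit a split of the current store as $\lstorejoin {\lstorepr[2]} \lstore$, with $\lstorepr[2]$ the fragment owned by the surroundings of the hole and $\lstore$ the fragment owning the redex, and we must supply that rule's typing premise for the evaluation context; both are obtained by inverting the typing derivation of the whole term, whose premises already decompose stores and store typings along the disjoint unions $(\lstorejoin{}{})$ and $(\storectxjoin{}{})$. Once the canonical form is known, the head-reduction cases for the memory primitives are immediate: a value of type $\lemptyboxty$ is, by its location typing rule, a location sitting in the singleton store $\stsingleton \lloc \lemptyloc$, so $\lfreee{}$ fires; a value of type $\lboxedty \lty$ is a location with store $\stsingletonconf \lloc \lstore \lv$, so $\lunboxe{}$ fires; and the side conditions on the $\lnewe{}$ and $\lboxe{}$ directions hold because a well-typed configuration has its store domain pinned to exactly the locations occurring in the term.

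The case of a copy $\lsrccopye{}$ is the main obstacle --- not because it is deep, but because it is where one actually has to line up the operational rules with the store-ownership invariants. The induction hypothesis makes its argument a value, which by inversion (its type being of the form $\lbangty{\lty}$) is of shape $\lsharee{\lstore}{\lstorety}{\lv}$; a second inversion on the captured value $\lv$ then gives one sub-case per value former --- unit, injection, fold, abstraction, a nested share, an empty location, a full location, a pair --- each matched by exactly one of the copy rules of \shortref{fig:l:opsem}. For every sub-case one must verify that the premises and side conditions of the chosen rule follow from the typing of $\lsharee{\lstore}{\lstorety}{\lv}$: that the captured store and store typing are empty in the unit and nested-share cases, that they are the appropriate singletons in the two location cases, and that they split along the components in the pair case --- the invariant that the captured store, its store typing, and the value all range over the same locations being exactly what the internal typing rules maintain.

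Lastly, the dedicated congruence rule for reducing under $\lsharee{\lstore}{\lstorety}{\le}$ is handled like the other congruence cases --- the induction hypothesis, applied to the body under its local store, makes it a value (so the whole term is a value) or lets it step --- with one twist: that rule carries a premise asserting that the reduced body is still well typed, under a possibly different store typing, reflecting that a step here may both allocate and free locations. Discharging that premise is exactly an instance of type preservation, so this last case ties the progress argument to subject reduction, and the two are best set up to be proved together (or progress simply invokes preservation as an already-established companion lemma). Everything else is routine.
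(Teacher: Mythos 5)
Your proposal is correct and follows essentially the same route as the paper's own proof: induction on the typing derivation in evaluation order, congruence steps when a subterm reduces, and the inversion/canonical-forms principle (\shortref{lem:l:inversion}) to produce a head redex when the principal subterm of an elimination form is a value. Your additional observation that the congruence rule for reducing under $\lsharee{}{}{}$ carries a typing premise on the reduct, so that progress there leans on subject reduction, is accurate and is a genuine subtlety the paper's proof sketch leaves implicit.
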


\begin{proof}
  By induction on the typing derivation of $\le$, using induction
  hypothesis in the evaluation order corresponding to the structure of
  contexts $\lectxt$. If one induction hypothesis returns a reduction,
  we build a bigger reduction $(\lredexstep)$ for the whole term. If
  all induction hypotheses return a value, the proof depends on
  whether the head term-former is an introduction/construction form or
  an elimination/destruction form. An introduction form whose subterms
  are values is a value. For elimination forms, we use
  \fullref{lem:l:inversion} on the eliminated subterm (a value), to
  learn that it starts with an introduction form, and thus forms
  a head redex with the head elimination form, so we build a head
  reduction $(\lheadredexstep)$.
\end{proof}

\begin{lemma}[Non-store-escaping substitution principle]
  \label{lem:l:substitution}
  If
  \begin{smathpar}

    \ljudg \lstoretypr {\lGamma, \var \lx \lty} \lstorepr \le \ltypr

    \ljudg \lstorety \lGammapr \lstore \lv \lty

    \ctxjoin \lGamma \lGammapr

    \lx \notin \lstorety
  \end{smathpar}
  then
  \begin{smathpar}
    \ljudg
      {\storectxjoin \lstorety \lstoretypr}
      {\ctxjoin \lGamma \lGammapr}
      {\lstorejoin \lstore \lstorepr}
      {\subst \le \lv \lx}
      \ltypr
  \end{smathpar}
\end{lemma}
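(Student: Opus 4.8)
The plan is to proceed by structural induction on the derivation of the first hypothesis, $\ljudg\lstoretypr{\lGamma,\var\lx\lty}\lstorepr\le\ltypr$, with a case analysis on its last rule. Two remarks frame the argument. By $\alpha$-renaming the locations of $\lv$ (which are bound inside a configuration), we may assume that $\locs\lstore$ and $\locs\lstorepr$ are disjoint, so that $\lstorejoin\lstore\lstorepr$ and $\storectxjoin\lstorety\lstoretypr$ are both defined. And the side condition $\lx\notin\lstorety$ says that the local store captured by $\lv$ does not itself mention $\lx$; this is exactly what keeps the union $\lstorejoin\lstore\lstorepr$ well-formed once the substitution is carried out, and it is an invariant that must be re-established before each recursive appeal.

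\textbf{Variable rule.} If $\le$ is $\lx$, then $\lGamma,\var\lx\lty$ is of the form $\lbangty{\lGamma_0},\var\lx\lty$ and $\lstoretypr,\lstorepr$ are empty; here $\subst\le\lv\lx$ is $\lv$ and, since $\ctxjoin{\lbangty{\lGamma_0}}\lGammapr$ is $\lGammapr$ extended by duplicable bindings, the goal follows from the second hypothesis by \fullref{lem:l:weakening}. If $\le$ is a variable distinct from $\lx$, then $\lx$ lies in the duplicable part of the context, so $\lty$ has the form $\lbangty{\lty_0}$; by \fullref{lem:l:inversion} the value $\lv$ then has empty store, empty store typing, and a duplicable context $\lGammapr$, so $\subst\le\lv\lx$ is unchanged and is retyped by the variable rule in $\ctxjoin\lGamma\lGammapr$, which remains duplicable apart from the variable being used.

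\textbf{Structural and leaf rules.} Every rule that merges two contexts --- the linear pair, the pair/unit/application and sum eliminations, and the full-location rule --- presents the context as $\ctxjoin\lGammaone\lGammatwo$ and the store as a disjoint union. If $\lty$ is not of the form $\lbangty{\lty_0}$, then $\lx$ occurs in exactly one of $\lGammaone,\lGammatwo$; I apply the induction hypothesis to the corresponding premise (whose side condition still holds, since $\lstorety$ is unchanged) and reassemble, using associativity and commutativity of context joining (\fullref{lem:l:join-assoc-commut}) and of store union to recover $\ctxjoin\lGamma\lGammapr$ and $\lstorejoin\lstore\lstorepr$. If $\lty$ is of the form $\lbangty{\lty_0}$, then \fullref{lem:l:inversion} again forces the store of $\lv$ to be empty and $\lGammapr$ to be duplicable, so $\lx$ may occur in both branches; I then apply the induction hypothesis to every premise that mentions $\lx$, always with the same $\lv$ --- which is sound precisely because it carries no store --- and merge the resulting duplicable context extensions with the same join algebra. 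Sum elimination, which reuses a context across its two branches, and the other rules sharing a context between premises, are handled identically. The genuine leaf rules (unit, empty location) and the linear-isomorphism rules (fold/unfold, allocate/free, box/unbox) either do not mention $\lx$ --- whence $\lx$ is duplicable, $\lv$ has empty store, and the rule re-applies in the enlarged duplicable context --- or delegate to a subterm via the induction hypothesis. In the full-location rule the substitution reaches the value held in the location's slot, whose resources may contain $\lx$; there I recurse into that value and re-apply the rule, with the location's entry in the store typing updated to record $\lGammapr$. For the internal sharing construct the body is typed in an all-duplicable context, so once more $\lv$ has empty store and one simply recurses into the body.

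\textbf{Main obstacle.} I expect the real work to be the store-side bookkeeping rather than the term-level reasoning: checking that the disjointness required for $\lstorejoin{}{}$ and $\storectxjoin{}{}$ survives each inductive appeal, that the changes to the store typing --- locations may be allocated, and in the recursive full-location case the context recorded at a location is itself rewritten by the substitution --- line up on the two sides of the goal, and that $\lx\notin\lstorety$ holds at each recursive call. Since each context-merging rule has to be argued once for linear $\lx$ and once for duplicable $\lx$, and the context-join and store-union algebra has to be re-aligned in every such branch, this is where the bulk of the otherwise routine effort lies.
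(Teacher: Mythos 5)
Your overall strategy coincides with the paper's: induction on the typing derivation of $\le$; in each context-merging rule a case split on whether $\lty$ is duplicable; \fullref{lem:l:inversion} and \fullref{lem:l:weakening} to dispose of the occurrences of $\lx$ that have been dropped or duplicated; and the $(\ctxjoin{}{})$ algebra of \fullref{lem:l:join-assoc-commut} (plus $\ctxjoin \lGammapr \lGammapr = \lGammapr$ for duplicable $\lGammapr$) to reassemble the contexts and stores. The variable and pair cases as you describe them are exactly the paper's.

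There is, however, a genuine error in your reading of the side condition, and it surfaces precisely where that condition does its work. The condition $\lx \notin \lstorety$ is not about keeping $\lstorejoin \lstore \lstorepr$ well-formed (disjointness of store unions concerns locations, not variables, and is secured by your $\alpha$-renaming remark alone). It is meant to say that $\lx$ is not recorded in the store typing of the configuration being substituted into, i.e.\ that the resource $\lx$ is consumed by the \emph{term} $\le$ and not by one of the values sitting inside the local store $\lstorepr$ (the statement's choice of symbol is confusing here, but the intended reading is unambiguous from how the condition is used). This matters because $\subst \le \lv \lx$ rewrites only the term: the store $\lstorepr$ is carried over unchanged into $\lstorejoin \lstore \lstorepr$. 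Your plan for the full-location case --- recurse into the value held in the location's slot and update that location's store-typing entry to record $\lGammapr$ --- is therefore not available: the substitution does not descend into the store, and rewriting the store-typing entry would prove a conclusion other than the one the lemma states. The correct argument in that case is the opposite one: the side condition guarantees $\lx$ does \emph{not} occur in the stored value, so in the full-location rule (whose term component is just $\lloc$) the variable $\lx$ is unused, hence duplicable, and one concludes by inversion on the $\lv$ premise and weakening, exactly as in your other ``$\lx$ has been dropped'' cases. As written, your full-location case would fail.
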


\begin{proof}
  The proof, summarized below, proceeds by induction on the typing
  derivation of $\le$.

  Most cases need an additional case analysis on whether the
  substituted type $\lty$ is a duplicable type of the form
  $\lbangty \ltypr[2]$, as it influence whether it may appear in zero
  or several subterms of $\le$. (This is a price to pay for
  contraction and weakening happening in all rules for convenience,
  instead of being isolated in separate structural rules.)

  For example, in the variable case, $\le$ may be the variable $\lx$
  itself, in which case we know that $\lGamma$ is empty and conclude
  immediately. But $\le$ may also be another variable $\ly$ if $\lx$
  is duplicable and has been dropped. In that case, we perform an
  inversion (\shortref{lem:l:inversion}) on the $\lv$ premise to learn
  that $\lstorety$ is empty and $\lGammapr$ is duplicable, and can thus
  use \fullref{lem:l:weakening}.

  In the $\lpaire \leone \letwo$ case, if $\lx$ is a linear variable
  it only occurs in one subterm on which we apply our induction
  hypothesis. If $\lx$ is duplicable, inversion on the $\lv$ premises
  again tells us that $\lGammapr$ is duplicable. We know by assumption
  that $\ctxjoin {(\ctxjoin \lGammaone \lGammatwo)} \lGammapr$; because
  $\lGammapr$ is duplicable, \Not\VeryShort{we can deduce from
  \fullref{lem:l:join-assoc-commut} that}{} the
  $\ctxjoin \lGammai \lGammapr$ are also defined, which let us apply an
  induction hypothesis on both subterms $\lei$. To conclude, we need
  the computation
  \begin{smathpar}
    \begin{array}{l@{~}l@{~}l}
      &   & \ctxjoin {(\ctxjoin \lGammaone \lGammapr)} {(\ctxjoin \lGammatwo \lGammapr)}
      \\
      & = & \ctxjoin {\ctxjoin \lGammaone \lGammatwo} {(\ctxjoin \lGammapr \lGammapr)}
      \\
      & = & \ctxjoin {\ctxjoin \lGammaone \lGammatwo} \lGammapr
    \end{array}
  \end{smathpar}
  which again comes from duplicability of $\lGammapr$.

  The assumption $\lx \notin \lstorety$ enforces that the resource
  $\lx$ is consumed in the term $\le$ itself, not in one of the values
  $\stextconf {} \lloc \lstore\lv$ in the store: otherwise $\lx$ would
  appear in the store typing $\lalive \lloc \lGamma \lstorety$ of this
  location in $\lstorety$. It is used in the case where $\le : \lty$
  is a full location $\lloc : \lboxedty \ltypr$. If $\lx$ could appear
  in the value of $\lloc$ in the store, we would have substitute it in
  the store as well -- in our substitution statement, only the term is
  modified. Here we know that this value is unused, so it has
  a duplicable type and we can perform an inversion in the other
  cases.
\end{proof}

\begin{lemma}[Context decomposition]
  \label{lem:l:decomposition}
  If
  $\ljudg \lstoretypr \lGammapr \lstorepr {\lectxt \hw {\var \square \lty}} \ltypr$
  holds, then
  $\ljudg {\lstoretypr[2]} {\lGammapr[2]} {\lstorepr[2]} {\lectxt \hw \le} \ltypr$
  holds if and only if there exists $\lstorety, \lGamma, \lstore$ such that
  $\lstoretypr[2] = \storectxjoin \lstorety \lstoretypr$,
  $\lGammapr[2] = \ctxjoin \lGamma \lGammapr$,
  $\lstorepr[2] = \lstorejoin \lstore \lstorepr$
  and $\ljudg \lstorety \lGamma \lstore \le \lty$.
\end{lemma}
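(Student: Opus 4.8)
The plan is to prove both directions of the equivalence by induction on the structure of the linear reduction context $\lectxt$ (equivalently, on the fixed typing derivation of $\lectxt \hw {\var \square \lty}$), relying on three facts: the internal typing rules of \shortref{fig:l:internal-static-sem} are syntax-directed; because every $\lambda$-abstraction and the hole $\var \square \lty$ carry an explicit type annotation, the type at every node of a derivation is uniquely determined by the term and its context (a routine property, used to match up the two derivations below); and the joins $\storectxjoin{}{}$, $\ctxjoin{}{}$, $\lstorejoin{}{}$ are associative and commutative, with the definedness property of \shortref{lem:l:join-assoc-commut}.

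In the base case $\lectxt = \var \square \lty$, the term $\lectxt \hw {\var \square \lty}$ is the typed hole, so the only applicable rule forces $\lstoretypr = \lemptystorety$, $\lGammapr = \lemptyGamma$, $\lstorepr = \lemptystore$ and $\ltypr = \lty$. Since $\lectxt \hw \le = \le$ and left-joining with the empty store typing, context and store is the identity, both directions are immediate, witnessed by $\lstorety := \lstoretypr[2]$, $\lGamma := \lGammapr[2]$, $\lstore := \lstorepr[2]$.

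For the inductive step, write $\lectxt'$ for the immediate sub-context, so $\lectxt$ wraps $\lectxt'$ in one term-former. By syntax-directedness the fixed derivation of $\lectxt \hw {\var \square \lty}$ ends with the unique typing rule for that former, exposing a sub-derivation $\ljudg \lstoretyone \lGammaone \lstoreone {\lectxt' \hw {\var \square \lty}} \ltyone$ for the hole-carrying premise --- the hole type $\lty$ is never changed --- together with derivations of the sibling subterms, and it records how the ambient data is distributed: for the multiplicative formers ($\lpaire{}{}$, application $\lappe{}{}$ in either position, and the scrutinee positions of $\lletpaire{}{}{}{}$, $\lletunite{}{}$ and $\lcasee{}{}{}{}{}$) one has $\lstoretypr = \storectxjoin \lstoretyone \lstoretytwo$, $\lGammapr = \ctxjoin \lGammaone \lGammatwo$, $\lstorepr = \lstorejoin \lstoreone \lstoretwo$, the subscript-$2$ parts being consumed by the siblings; for the unary formers (sum injection, $\lfolde{}{}$ and $\lunfolde{}$, the copy operator $\lsrccopye{}$, and the store primitives $\lnewe{}$, $\lfreee{}$, $\lboxe{}$, $\lunboxe{}$) the ambient data passes straight through, i.e.\ the subscript-$2$ parts are empty. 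We then apply the induction hypothesis to $\lectxt'$ with this sub-derivation. Forwards: given $\lstorety,\lGamma,\lstore$ with $\lstoretypr[2] = \storectxjoin \lstorety \lstoretypr$, $\lGammapr[2] = \ctxjoin \lGamma \lGammapr$, $\lstorepr[2] = \lstorejoin \lstore \lstorepr$ and $\ljudg \lstorety \lGamma \lstore \le \lty$, re-associate via \shortref{lem:l:join-assoc-commut} to get $\lstoretypr[2] = \storectxjoin {(\storectxjoin \lstorety \lstoretyone)} \lstoretytwo$ and similarly for $\ctxjoin$ and $\lstorejoin$; the induction hypothesis turns $\ljudg \lstorety \lGamma \lstore \le \lty$ into a derivation of $\ljudg {\storectxjoin \lstorety \lstoretyone} {\ctxjoin \lGamma \lGammaone} {\lstorejoin \lstore \lstoreone} {\lectxt' \hw \le} \ltyone$, and re-applying the head rule with the sibling derivations yields $\ljudg {\lstoretypr[2]} {\lGammapr[2]} {\lstorepr[2]} {\lectxt \hw \le} \ltypr$. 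Backwards: invert the given derivation of $\lectxt \hw \le$ along the head former, use type uniqueness to see the hole-subterm still has type $\lty$, let $\lstorety,\lGamma,\lstore$ be the resources of its sub-derivation, and note that the sibling pieces --- being linear, their store typings and store fragments are forced by $\lectxt$ --- agree with $\lstoretypr,\lGammapr,\lstorepr$; the induction hypothesis then returns the three equalities together with $\ljudg \lstorety \lGamma \lstore \le \lty$. Variables bound by $\lletpaire{}{}{}{}$, $\lcasee{}{}{}{}{}$ and similar formers scope only over their non-hole subterms, so they never reach the hole side.

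The step I expect to be the real obstacle is not the per-former bookkeeping, which is a mechanical transcription of \shortref{fig:l:internal-static-sem}, but keeping the \emph{partial} operation $\ctxjoin{}{}$ under control through these re-associations: when the context $\lGamma$ consumed by $\le$ contains duplicable ($\lbangty{}$-typed) hypotheses, they may also occur in both $\lGammaone$ and $\lGammatwo$, so $\ctxjoin {(\ctxjoin \lGamma \lGammaone)} \lGammatwo = \ctxjoin \lGamma {(\ctxjoin \lGammaone \lGammatwo)}$ is not purely formal, and one must argue that the inner joins remain defined. That is exactly what the definedness clause of \shortref{lem:l:join-assoc-commut} supplies, and it is the same delicacy that already shows up in the proof of \shortref{lem:l:substitution}; once it is applied at the right spots, the remaining cases are routine.
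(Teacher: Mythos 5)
The paper states this lemma without giving a proof (it is only invoked in the subject-reduction argument), so there is no official proof to compare against; your argument---induction on the structure of $\lectxt$, inversion of the syntax-directed typing rule at each layer, and re-association of the partial joins via \shortref{lem:l:join-assoc-commut}---is exactly the proof the paper leaves implicit, and it is sound. The only place deserving a bit more care than you give it is the backward direction, where you assert that the sibling sub-derivations of the given derivation of $\lectxt \hw \le$ ``agree with'' $\lstoretypr, \lGammapr, \lstorepr$: the split of the linear part is indeed forced by occurrence of variables and locations, but duplicable hypotheses of $\lGammapr[2]$ may be apportioned differently in the two derivations, so one should either normalize the split using weakening by duplicable contexts (\shortref{lem:l:weakening}) or observe that $(\ctxjoin{}{})$ merges duplicable entries so any apportionment recombines to the same context. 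Since you already single out exactly this interaction of partiality and duplicability as the main delicacy, this is a presentational refinement rather than a gap.
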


\begin{theorem}[Subject reduction for $\llang$]
  \label{thm:l:subject-reduction}
  If $\ljudg \lstorety \lGamma \lstore \le \lty$
  and $\hconf \lstore \le \lredexstep \hconf \lstorepr \lepr$,
  then there exists a (unique) $\lstoretypr$ such that
  $\ljudg \lstoretypr \lGamma \lstorepr \lepr \lty$.
\end{theorem}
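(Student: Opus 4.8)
The plan is to prove the statement by induction on the derivation of the reduction $\hconf \lstore \le \lredexstep \hconf \lstorepr \lepr$. By \shortref{fig:l:opsem} this derivation ends with one of three rules: a lifted head reduction $\hconf \lstore \le \lheadredexstep \hconf \lstorepr \lepr$; a reduction of the subterm sitting at the hole of a linear reduction context $\lectxt$; or a reduction under the $\lsharee{}{}{}$ binder. In each case I would exhibit the witness $\lstoretypr$ explicitly, so that uniqueness then follows by revisiting the same case analysis.

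The two congruence cases are mostly bookkeeping. For a reduction under a context, the congruence rule already supplies a typing of the context with a typed hole at type $\lty$; feeding this together with the subject-reduction hypothesis into \fullref{lem:l:decomposition} splits the ambient store typing, context and store into a part owned by the redex and a part owned by the surrounding context, and in particular isolates a typing $\ljudg \lstorety \lGamma \lstore \le \lty$ of the redex. The induction hypothesis re-types the reduced redex under a new store typing; the recomposition direction of \fullref{lem:l:decomposition} then reassembles a typing of the reduced whole term whose store typing is the join, via $\storectxjoin{}{}$, of the new redex typing with the unchanged context typing --- exactly the one the congruence rule prescribes for its right-hand configuration. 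For a reduction under $\lsharee \lstore \lstorety \le$, the congruence rule already carries the premise $\ljudg \lstoretypr \lGamma \lstorepr \lepr \lty$ for the reduced subterm, and the $\lsharee{}{}{}$ typing rule turns it into a typing of the whole shared value against the \emph{empty} outer store typing; so $\lstoretypr \defeq \lemptystorety$ works at once, and \fullref{lem:l:inversion} ensures the outer context, necessarily of the form $\lbangty \lGammapr$, is unchanged.

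For head reductions I would case-analyse on the head rule. The functional redexes --- let-pair, let-unit, $\beta$, case, and unfold/fold --- are handled by inverting the typing derivation of the redex, using \fullref{lem:l:inversion} to pin down the shape of the eliminated value and its subderivations, and rebuilding a typing of the contractum; the three substituting redexes invoke \fullref{lem:l:substitution}, whose side-condition that the substituted variable be absent from the store typing holds because that variable is precisely the one just bound by the elimination form, hence disjoint from the variables recorded in store slots. For all of these the store typing is unchanged, $\lstoretypr \defeq \lstorety$. The store primitives do change it: $\lnewe \lunite$ steps to a fresh $\lloc$, so $\lstoretypr$ extends $\lstorety$ with a dead-location assumption $\ladead \lloc \lty$; $\lfreee {}$ is the symmetric deletion; $\lboxe {\lpaire \lloc \lv}$ steps to $\lloc$, and inverting the hypothesis yields a typing of the empty $\lloc$ together with a typing of $\lv$ owning some local store and context, which the full-location rule repackages into a full-location entry, so $\lstoretypr$ replaces the dead entry for $\lloc$ by that full entry (symmetrically for $\lunboxe {}$).

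The family of deep-copy reductions for $\lsrccopye {(\lsrcsharee \le)}$ is where the real work --- and the main obstacle --- lies. Each such rule pushes a copy through one value former, and its soundness follows by inverting the $\lsharee{}{}{}$ typing (which forces an empty outer store, a duplicable context, and a captured store matching its captured store typing) and re-applying the matching introduction rule around the residual subterms, each wrapped in $\lsrccopye {(\lsrcsharee {})}$. The delicate point is that, unlike every other rule of the calculus, a single such step can \emph{grow} the store typing --- copying an empty or full location reduces to a fresh $\lnewe \lunite$, respectively to a $\lboxe {}$ around a fresh allocation and a recursive copy of the stored value --- can \emph{extract} a captured store out of a function closure, so that the result configuration carries store typing $\lstorety$ rather than $\lemptystorety$ (sound precisely because linear one-shot functions may use an arbitrary store typing in their body), and can \emph{re-split} a captured store across the two halves of a pair, where the side-conditions $\locs \lstorei = \locs \lstoretyi = \locs \lvi$ guarantee the $\lstorejoin{}{}$ / $\storectxjoin{}{}$ split is the one the pair-introduction rule demands. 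Getting this store accounting and the freshness conditions to line up with what the introduction rules require is the crux; everything else is routine given \fullref{lem:l:inversion}, \fullref{lem:l:substitution}, and \fullref{lem:l:decomposition}. Uniqueness of $\lstoretypr$ then follows by revisiting the same cases: each reduction rule forces how the store typing must change, the binary rules composing store typings via the partial --- hence deterministic --- operation $\storectxjoin{}{}$, and \fullref{lem:l:decomposition} recovering the decomposition uniquely from the already-unique typing of the context.
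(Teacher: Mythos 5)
Your proposal is correct and follows essentially the same route as the paper's own proof: induction on the reduction derivation, the substitution lemma for the $\beta$-like head redexes (with the non-store-escaping side condition discharged because the substituted variable is bound in the term), inversion of the typing derivation for the copy and store-primitive rules, and the context-decomposition lemma used in both directions for the evaluation-context congruence case. Your treatment is considerably more detailed than the paper's sketch --- in particular your accounting of how the store typing changes under the deep-copy rules --- but it fills in exactly the steps the paper leaves implicit rather than taking a different approach.
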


\begin{proof}
  The proof is done by induction on the reduction derivation.

  The head-reduction rules involving substitutions rely on
  \fullref{lem:l:substitution}; note that in each of them, for example
  $\lappe {(\lfune \lx \ltypr \lepr)} {\lepr[2]}$, the substituted
  variable $\lx$ is bound in the term $\le$, and thus does not appear
  in the store $\lstore$: the non-store-escaping hypothesis holds.

  For the copy rule and the store operators, we build a valid
  derivation for the reduced configuration by inverting the typing
  derivation of the reducible configuration.

  In the non-head-reduction cases, the $\lsrcsharee {}$ case is by
  direction, and the context case $\lectxt \hw \le$ uses
  \fullref{lem:l:decomposition} to obtain a typing derivation for
  $\le$, and the same lemma again rebuild a derivation of the reduced
  term $\lectxt \hw \lepr$.
\end{proof}

\endgroup}

\Not\VeryShort{\SectionInternalLLReduction}{\begingroup
In
  \Appendices
    {\fullref{appendix:internal-linear-reduction}}
    {the long version of this work}
we give a reduction relation between linear configurations
$\hconf \lstore \le \lredexstep \hconf \lstorepr \lepr$
and prove a subject reduction result.

\begin{theorem}[Subject reduction for $\llang$]
  If $\ljudg \lstorety \lGamma \lstore \le \lty$
  and $\hconf \lstore \le \lredexstep \hconf \lstorepr \lepr$,
  then there exists a (unique) $\lstoretypr$ such that
  $\ljudg \lstoretypr \lGamma \lstorepr \lepr \lty$.
\end{theorem}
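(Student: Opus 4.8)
The plan is to induct on the derivation of the reduction $\hconf \lstore \le \lredexstep \hconf \lstorepr \lepr$, distinguishing the three shapes of reduction rule: a head reduction $(\lheadredexstep)$ lifted to a reduction $(\lredexstep)$, a reduction under a linear evaluation context $\lectxt$, and a reduction under a $\lsharee \lstore \lstorety {}$ binder. In every case the recipe is the same: invert the given typing derivation of $\hconf \lstore \le$ far enough to expose the relevant structure, then rebuild a derivation for $\hconf \lstorepr \lepr$---at a possibly different store typing $\lstoretypr$---from the pieces.

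For the substitution-based head redexes---$\lappe {(\lfune \lx \lty \le)} \lv$, $\lletpaire \lxone \lxtwo {\lpaire \lvone \lvtwo} \le$, $\lletunite \lunite \le$, and $\lunfolde {(\lfolde {} \lv)}$---I would appeal to the non-store-escaping substitution principle (\fullref{lem:l:substitution}): in each of these the substituted variable is bound \emph{inside} the term, hence cannot occur in the store typing, so the side condition of that lemma is met and it yields a derivation at the same store typing. For the store primitives on $\lemptyboxty$ and $\lboxedty \lty$ ($\lnewe{}/\lfreee{}$ and $\lboxe{}/\lunboxe{}$) and for the several copy rules for $\lsrccopye {(\lsrcsharee \le)}$, I would instead invert the typing of the left-hand configuration---using the inversion principle for values (\fullref{lem:l:inversion}) wherever a value must be analysed---and reassemble a derivation for the right-hand configuration directly, reading the linear-isomorphism rules of \shortref{fig:l:surface-static-sem} in both directions. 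These are precisely the cases where the store typing genuinely changes: $\lfreee{}$ drops a location assumption $\ladead \lloc \lty$, $\lnewe{}$ adds one, and copying a shared function or a filled location allocates fresh locations, so the new $\lstoretypr$ properly extends the old one even when the latter was $\lemptystorety$.

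For a reduction $\lectxt \hw \le \lredexstep \lectxt \hw \lepr$ under a context, I would use context decomposition (\fullref{lem:l:decomposition}) to split the typing of $\lectxt \hw \le$ into a typing $\ljudg \lstorety \lGamma \lstore \le \lty$ of the plugged subterm together with the store-typing, context, and store fragments carried by $\lectxt$; the induction hypothesis then supplies a $\lstoretypr$ with $\ljudg \lstoretypr \lGamma \lstorepr \lepr \lty$, and the same lemma recomposes a derivation of $\lectxt \hw \lepr$, the context's own fragments being untouched. Here I expect to rely on the partiality and the associativity and commutativity of context joining (\fullref{lem:l:join-assoc-commut}) to see that the recomposed joins remain defined. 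The $\lsharee \lstore \lstorety \le$ case is immediate from the induction hypothesis applied to the body, which is exactly the premise of the third reduction rule. Uniqueness of $\lstoretypr$ is then read off the construction: at no point is a genuine choice made, since the domain of the store typing is forced by the locations of $\lstorepr$, the assumptions on empty slots are inherited from $\lstorety$, and the assumptions on full slots are determined by typing the stored values and the syntax-directed structure of the rules.

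I expect the main obstacle to be the copy rules, and in particular $\lsrccopye {(\lsharee \lstore \lstorety {\lfune \lx \lty \le})}$: this reduction opens a sharing binder, so the locations of the captured local store $\lstore$, bound on the left-hand side, become free on the right-hand side; rebuilding the derivation then amounts to checking that the captured $\lstorety$ is a legitimate store typing for the now-exposed store and that the location-set conditions $\locs \lstore = \locs \lstorety$ line up. The copy rule for nested sharing compounds this, since it forces the store-join $(\lstorejoin{}{})$ and store-typing-join $(\storectxjoin{}{})$ splittings to be threaded recursively through the copied value. By contrast, once context decomposition is in hand the context case is routine, the remaining care going only into keeping the duplicable-subcontext bookkeeping consistent with the join operations.
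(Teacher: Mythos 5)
Your proposal is correct and follows essentially the same route as the paper's proof: induction on the reduction derivation, the non-store-escaping substitution principle for the substitution-based head redexes (justified by the bound-variable observation), direct inversion and reassembly for the store primitives and copy rules, and context decomposition used in both directions for the evaluation-context case. The extra detail you give on uniqueness of the resulting store typing and on the delicacy of the copy-of-shared-function rule goes slightly beyond the paper's sketch but is consistent with it.
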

\endgroup}

\section{Multi-language semantics}
\label{sec:multi-language}

To formally define our multi-language semantics we create a combined
language $\ullang$ which lets us compose term fragments from both
$\ulang$ and $\llang$ together, and we give an operational semantics
to this combined language. Interoperability is enabled by specifying
how to transport values across the language boundaries.

Multi-language systems in the wild are not defined in this
way: both languages are given a semantics, by interpretation or
compilation, in terms of a shared lower-level language (C, assembly,
the JVM or CLR bytecode, or Racket's core forms), and the two
languages are combined at that level. Our formal multi-language
description can be seen as a model of such combinations, that gives
a specification of the expected observable behavior of this language
combination.

Another difference from multi-languages in the wild is our use of very
fine-grained language boundaries: a term written in one language can
have its subterms written in the other, provided the type-checking
rules allow it. Most multi-language systems, typically using Foreign
Function Interfaces, offer coarser-grained composition at the level of
compilation units. Fine-grained composition of existing languages, as
done in the Eco project~\citep*{eco}, is difficult because of semantic
mismatches. In \Appendices{\fullref{sec:examples}}{the full version of this
  work} we demonstrate that fine-grained
composition is a rewarding language design, enabling new programming
patterns.

\subsection{Lump Type and Language Boundaries}
\label{subsec:ul:lump-and-boundaries}

\begin{figure}
  \begin{smathpar}
    \begin{array}{l@{~}c@{~}r@{~}ll}
      \mbox{\textit{Types}}
      & \uty \bnfalt \lty & &
      \\[1pt]
      & \uty & & \text{(unchanged from \shortref{fig:u:syntax})}
      \\
      & \lty & \bnfadd & \dots \bnfalt \llumpty\uty
      \\[3pt]
    \end{array}

    \begin{array}{l@{~}c@{~}r@{~}ll}
      \mbox{\textit{Values}}
      & \uv \bnfalt \lv & &
      \\[2pt]
      & \uv & &  \text{(unchanged from \shortref{fig:u:syntax})}
      \\
      & \lv & \bnfadd & \dots \bnfalt \llumpe \uv
      \\[3pt]
    \end{array}

    \begin{array}{l@{~}c@{~}r@{~}ll}
      \mbox{\textit{Expressions}}
      & \ue \bnfalt \le & &
      \\[2pt]
      & \ue & \bnfadd & \dots \bnfalt \ULe \lstore \lstorety \le
      \\
      & & & \text{with}\quad
            \ULenoconf \le
            \;\defeq\;
            \ULe \lemptystore \lemptystorety \le
      \\
      & \le & \bnfadd & \dots \bnfalt \LUe \ue
      \\[3pt]
    \end{array}

    \begin{array}{l@{~}c@{~}r@{~}ll}
      \mbox{\textit{Contexts}}
      & \ulGamma
      & \bnfdef
      & \ulemptyGamma
        \bnfalt
        \ulGamma, \var \ux \uty
        \bnfalt
        \ulGamma, \ualpha
        \bnfalt
        \ulGamma, \var \lx \lty
    \end{array}
  \end{smathpar}

  \vspace{3pt} Typing rules
  \quad
  \fbox{\small$\ulujudg \ulGamma \ue \uty$}
  \quad
  \fbox{\small$\ulljudg \lstorety \ulGamma \lstore \le \lty$}

  with
  \quad
  $\bareuljudg \ulGamma \le \lty
    \;\defeq\; \ulljudg \lemptystorety \ulGamma \lemptystore \le \lty$

  \begin{smathpar}
  \begin{array}{l}
  \text{(Typing rules of $\ujudg \uGamma \ue \uty$
        reused, with mixed context $\ulbGamma$)}
  \\
  \text{(Typing rules of $\ljudg \lstorety \lGamma \lstore \le \lty$
        reused, with mixed context $\ulGamma$)}
  \end{array}
  \\
  \infer
  {\ulujudg \ulbGamma \ue \uty}
  {\ulljudg \lemptystorety \ulbGamma \lemptystore {\LUe \ue}
    {\llumpedty \uty}}

  \infer
  {\ulljudg \lstorety \ulbGamma \lstore \le {\llumpedty \uty}}
  {\ulujudg {\ulbGamma} {\ULe \lstore \lstorety \le} \uty}
  \end{smathpar}
  
  \vspace{3pt} Reduction rules
  \begin{smathpar}
  \text{(Reduction rules of $\ulang$ and $\llang$ reused unchanged)}

  \infer
  {\ue \uredexstep \uepr}
  {\LUe \ue \lredexstep \LUe \uepr}

  \infer
  { }
  {\LUe \uv \lheadredexstep \llumpe \uv}

  \infer
  { }
  {\ULe \lemptystore \lemptystorety {\lsrcsharee {\llumpe \uv}} \uredexstep \uv}

  \infer
  {\ulljudg \lstorety \ulGamma \lstore \le \lty
   \+
   \conf \lstore \le \lredexstep \conf \lstorepr \lepr
   \+
   \ulljudg \lstoretypr \ulGamma \lstorepr \lepr \lty}
  {\ULe \lstore \lstorety \le \uredexstep \ULe \lstorepr \lstoretypr \lepr}
  \end{smathpar}
  \caption{Multi-language: Lump and Boundaries}
  \label{fig:ul:static-sem}
\end{figure}
\begin{version}{\VeryShort}
\begin{figure}
  static compatibility
  \quad
  \fbox{\small{$\ulSigma \ulwf \uty \ulcompat \lty$}}
  \begin{smathpar}
    \infer
    { }
    {\ulSigma \ulwf \uty \ulcompat {\llumpedty \uty}}

    \infer
    {\ulSigma \ulwf \utyone \ulcompatbang \ltyone\\
     \ulSigma \ulwf \utytwo \ulcompatbang \ltytwo}
    {\ulSigma \ulwf
      \usumty \utyone \utytwo
      \ulcompatbangpar{\losumty \ltyone \ltytwo}}

    \infer
    {\ulSigma \ulwf \uty \ulcompatbang \lty\\
     \ulSigma \ulwf \utypr \ulcompatbang \ltypr}
    {\ulSigma \ulwf
      \ufunty \uty \utypr
      \ulcompatbangpar{\lofunty {\lbangty \lty} {\lbangty \ltypr}}}

    \infer
    {\ulSigma \ulwf \uty \ulcompatbang \lty}
    {\ulSigma \ulwf \uty \ulcompatbang {\lbangty \lty}}

    \infer
    {\ulSigma \ulwf \uty \ulcompatbang \lty}
    {\ulSigma \ulwf \uty \ulcompatbangpar {\lboxedty \lty}}
  \end{smathpar}

  value conversion
  \quad
  \fbox{\small{$\uv \ulcompate \lty \lv$}}
  \begin{smathpar}
    %
    \infer
    { }
    {\uv \ulcompatsharee {\llumpty \uty} {\llumpe \uv}}

    \infer
    {\uv \ulcompate {\lbangty \ltyi} {\lsharee \lstore \lstorety \lv}}
    {\usume i \uv
     \ulcompate
       {\lbangty{(\losumty \ltyone \ltytwo)}}
       {\lsharee \lstore \lstorety {\lsume i \lv}}}

    \infer
    {\uty \ulcompat \lty\\
     \utypr \ulcompat \ltypr}
    {\ue
     \ulcompatrighte
       {\lbangty {(\lofunty {\lbangty\lty} {\lbangty\ltypr})}}
       {\lsrcsharee
         {\lfune \lx {\lbangty \lty}
           {\LUmpe \ltypr {\uappe \ue {\UmpLenoconf \lty \lx}}}}}}
    \qquad
    \infer
    {\uty \ulcompat \lty\\
     \utypr \ulcompat \ltypr}
    {\ufune \ux \uty
      {\UmpLenoconf \ltypr
        {\lappe {\lsrccopye \le}
          {\LUmpe \lty \ux})}}
     \ulcompatlefte
       {\lbangty {(\lofunty \lty \ltypr)}}
       {\le}}

    \infer
    {\uv \ulcompatsharee \lty \lv}
    {\uv \ulcompatsharee {\lbangty \lty} {(\lsrcsharee \lv)}}

    \infer
    {\uv \ulcompate {\lbangty \lty} {\lsharee \lstore \lstorety \lv}}
    {\uv \ulcompate
      {\lbangty {\lboxedty \lty}}
      {\lsharee
        {\stsingletonconf \lloc \lstore \lv}
        {\lalive \lloc \lemptyGamma \lstorety \lty}
        \lloc}}

    \infer
    {\uv
     \ulcompate{\lbangty {\subst \lty {\lmuty \lalpha \lty} \lalpha}}
     \lsharee \lstore \lstorety \lv}
    {\ufolde {\umuty \ualpha \uty} \uv
     \ulcompate{\lbangty {\lmuty \lalpha \lty}}
     {\lsharee \lstore \lstorety {(\lfolde {\lmuty \lalpha \lty} \lv})}}
  \end{smathpar}

  \caption{Interoperability: static and dynamic semantics (excerpt)}
  \label{fig:l:interoperability-semantics-excerpt}
\end{figure}
\end{version}

The core components the multi-language semantics are shown
\shortref{fig:ul:static-sem}---the communication of values from one
language to the other will be described in the next section. The
multi-language $\ullang$ has two distinct syntactic categories of
types, values, and expressions: those that come from $\ulang$ and
those that come from $\llang$. Contexts, on the other hand, are mixed,
and can have variables of both sorts.  For a mixed context $\ulGamma$,
the notation $\lbangty \ulGamma$ only applies $(\lbangty)$ to its
linear variables.

The typing rules of $\ulang$ and $\llang$ are imported into our
multi-language system, working on those two separate categories of
program. They need to be extended to handle mixed contexts $\ulGamma$
instead of their original contexts $\uGamma$ and $\lGamma$. In the
linear case, the rules look exactly the same. In the ML case, the
typing rules implicitly duplicate all the variables in the context.
It would be unsound to extend them to arbitrary linear variables, so
they use a duplicable context $\lbangty \ulGamma$.

To build interesting multi-language programs, we need a way to insert
a fragment coming from a language into a term written in another. This
is done using \emph{language boundaries}, two new term formers $\LUe \ue$
and $\ULe \lstore \lstorety \le$ that inject an ML term into the
syntactic category of linear terms, and a linear configuration into
the syntactic category of ML terms.

Of course, we need new typing rules for these term-level
constructions, clarifying when it is valid to send a value from
$\ulang$ into $\llang$ and vice versa.  It would be incorrect to allow
sending any type from one language into the other---for instance, by
adding the counterpart of our language boundaries in the syntax of
types---since values of linear types must be uniquely owned so they
cannot possibly be sent to the ML side as the ML type system cannot
enforce unique ownership.

On the other hand, any ML value could safely be sent to the linear
world. For closed types, we could provide a corresponding linear type
($\uunitty$ maps to $\lbangty \lunitty$, etc.), but an ML value may
also be typed by an abstract type variable $\ualpha$, in which case we
can't know what the linear counterpart should be. Instead of trying to
provide translations, we will send any ML type $\uty$ to the
\emph{lump type} $\llumpty \uty$, which embeds ML types into linear
types. A lump is a blackbox, not a type translation: the linear
language does not assume anything about the behavior of its
values---the values of $\llumpty \uty$ are of the form $\llumpe \uv$,
where $\uv : \uty$ is an ML value that the linear world cannot
use. More precisely, we only propagate the information that ML values
are all duplicable by sending $\uty$ to $\llumpedty \uty$.

The typing rules for language boundaries insert lumps when going
from $\ulang$ to $\llang$, and remove them when going back from
$\llang$ to $\ulang$. In particular, arbitrary linear types cannot
occur at the boundary, they must be of the form $\llumpedty \uty$.

Finally, boundaries have reduction rules: a term or configuration
inside a boundary in reduction position is reduced until it becomes
a value, and then a lump is added or removed depending on the boundary
direction. Note that because the $\lv$ in $\ULe \lstore \lstorety \lv$
is at a duplicable type $\llumpedty \uty$, we know by inversion that
the store is empty.

\subsection{Interoperability: Static Semantics}
\label{subsec:ul:interop}

If the linear language could not interact with lumped values at all,
our multi-language programs would be rather boring, as the only way
for the linear extension to provide a value back to ML would be to
have received it from $\ulang$ and pass it back unchanged (as in the
lump embedding of~\citet*{matthews2009operational}). To provide
a real interaction, we provide a way to extract values out of a lump
$\llumpedty \uty$, use it at some linear type $\lty$, and put it back
in before sending the result to $\ulang$.

\newcommand{\SectionInteropStatic}{\begingroup
\begin{figure}
  \begin{smathpar}
    \begin{array}{l@{~}c@{~}r@{~}ll}
      \mbox{\textit{Interoperability context}}
      & \ulSigma & \bnfdef & \ulempty \bnfalt \ulSigma, \ualpha \ulcompatbang \lbeta
    \end{array}
  \end{smathpar}

  \vspace{3pt}
  Compatibility relation
  \quad
  \fbox{\small$\ulSigma \ulwf \uty \ulcompat \lty$}
  \quad
  with $\quad\uty \ulcompat \lty \;\defeq\; \ulempty \ulwf \uty \ulcompat \lty$
  \vspace{3pt}

  \begin{smathpar}
    \infer
    { }
    {\ulSigma \ulwf \uunitty \ulcompatbang \lunitty}

    \infer
    {\ulSigma \ulwf \utyone \ulcompatbang \ltyone\\
     \ulSigma \ulwf \utytwo \ulcompatbang \ltytwo}
    {\ulSigma \ulwf
      \upairty \utyone \utytwo
      \ulcompatbangpar{\lopairty \ltyone \ltytwo}}

    \infer
    {\ulSigma \ulwf \utyone \ulcompatbang \ltyone\\
     \ulSigma \ulwf \utytwo \ulcompatbang \ltytwo}
    {\ulSigma \ulwf
      \usumty \utyone \utytwo
      \ulcompatbangpar{\losumty \ltyone \ltytwo}}

    \infer
    {\ulSigma \ulwf \uty \ulcompatbang \lty\\
     \ulSigma \ulwf \utypr \ulcompatbang \ltypr}
    {\ulSigma \ulwf
      \ufunty \uty \utypr
      \ulcompatbangpar{\lofunty {\lbangty \lty} {\lbangty \ltypr}}}

    \infer
    { }
    {\ulSigma \ulwf \uty \ulcompat {\llumpedty \uty}}

    \infer
    {\ulSigma \ulwf \uty \ulcompatbang \lty}
    {\ulSigma \ulwf \uty \ulcompatbang {\lbangty \lty}}

    \infer
    {\ulSigma \ulwf \uty \ulcompatbang \lty}
    {\ulSigma \ulwf \uty \ulcompatbangpar {\lboxedty \lty}}

    \infer
    {\ulSigma, \ualpha \ulcompatbang \lbeta \ulwf
       \uty \ulcompatbang \lty}
    {\ulSigma \ulwf
      \umuty \ualpha \uty
      \ulcompatbangpar{\lmuty \lbeta \lty}}

    \infer
    {(\ualpha \ulcompatbang \lbeta) \in \ulSigma}
    {\ulSigma \ulwf \ualpha \ulcompatbang \lbeta}
  \end{smathpar}

  \vspace{3pt} Interoperability primitives and derived constructs:
  \begin{smathpar}
  \bidirliniso
    {\llumpedty \uty}
    {\lty}
    {\unlump \lty {}}
    {\lump \lty {}}
  \quad
  \text{whenever}
  \quad
  \ulempty \ulwf \uty \ulcompat \lty

  \LUmpe \lty \ue \;\defeq\; \unlump \lty {\LUe \ue}

  \UmpLenoconf \lty \le \;\defeq\; \ULenoconf {\lump \lty \le}
  \end{smathpar}
  \caption{Multi-language: Static Interoperability Semantics}
  \label{fig:ul:static-interop-sem}
\end{figure}

The correspondence between intuitionistic types $\uty$ and linear
types $\lty$ is specified by a heterogeneous \emph{compatibility
  relation} $\uty \ulcompat \lty$ defined in
\fullref{fig:ul:static-interop-sem}. The specification of this
relation is that if $\uty \ulcompat \lty$ holds, then the space of
values of $\llumpedty \uty$ and $\lty$ are isomorphic: we can convert
back and forth between them. When this relation holds, the
term-formers $\lump \lty {}$ and $\unlump \lty {}$ perform the
conversion. (The position of the index $\lty$ emphasizes that the
\emph{input} $\le$ of $\lump \lty \le$ has type $\lty$, while the
\emph{output} of $\unlump \lty \le$ has type $\lty$.)

For example, we have
$\llumpedty {(\ufunty \uty \utypr)} \ulcompat \lbangty {(\lofunty {\llumpedty \uty} {\llumpedty \utypr})}$. Given
a lumped ML function, we can unlump it to see it as a linear
function. We can call it from the linear side, but have to pass it
a duplicable argument since an ML function may duplicate its
argument. Conversely, we can convert a linear function into a lumped
function type to pass it to the ML side, but it has to have
a duplicable return type since the ML side may freely share the return
value. 

Our $\lump \lty {}$ and $\unlump \lty {}$ primitives are only indexed
by the linear type $\lty$, because a compatible ML type $\uty$ can be
uniquely recovered, as per the following result.

\begin{lemma}[Determinism of the compatibility relation]
  \label{lem:ul:determinism}
  If $\uty \ulcompat \lty$ and $\utypr \ulcompat \lty$ then
  $\uty = \utypr$.
\end{lemma}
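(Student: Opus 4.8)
The plan is to prove the statement by induction on the size of the shared linear type $\lty$, exploiting the fact that the rules of \shortref{fig:ul:static-interop-sem} are \emph{syntax-directed on the linear type}: up to peeling a single outer exponential, each top-level shape of $\lty$ is the conclusion of exactly one rule. Concretely, every rule has a conclusion of the form $\ulSigma \ulwf \uty \ulcompat \lbangty{\lty_0}$ (we only ever propagate that ML values are duplicable), and once the outer $\lbangty{}$ is stripped the head of $\lty_0$ determines the rule uniquely: $\lunitty$ forces the unit rule, $\lopairty{\ltyone}{\ltytwo}$ the product rule, $\losumty{\ltyone}{\ltytwo}$ the sum rule, $\lofunty{\lbangty{\cdot}}{\lbangty{\cdot}}$ the arrow rule, a further $\lbangty{}$ the exponential rule, $\lboxedty{\cdot}$ the box rule, $\lmuty{\lbeta}{\cdot}$ the recursion rule, a free variable $\lbeta$ the variable rule, and a lump $\llumpedty{\uty}$ the lump axiom. (In particular, if $\lty$ is not of the form $\lbangty{\lty_0}$ then neither hypothesis can hold and there is nothing to prove.) So when I invert the two given derivations they necessarily end with the same rule, and I recover $\uty = \utypr$ from the induction hypotheses applied to the component linear types, each of strictly smaller size than $\lty$.

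For the induction to go through in the presence of free type variables, I would strengthen the statement to arbitrary interoperability contexts: if $\ulSigma \ulwf \uty \ulcompat \lty$ and $\ulSigma \ulwf \utypr \ulcompat \lty$ and $\ulSigma$ is \emph{functional} — meaning that for each $\lbeta$ there is at most one $\ualpha$ with $(\ualpha \ulcompatbang \lbeta) \in \ulSigma$ — then $\uty = \utypr$; the lemma is the instance $\ulSigma = \ulempty$, which is vacuously functional. Functionality is exactly what closes the variable case, where the only applicable rule reads the ML variable back off $\ulSigma$ and functionality makes that read-back unique. In the recursive-type case, $\umuty{\ualpha}{\uty} \ulcompat \lmuty{\lbeta}{\lty}$ against $\umuty{\ualpha}{\utypr} \ulcompat \lmuty{\lbeta}{\lty}$, the two sub-derivations extend the context by $\ualpha_1 \ulcompatbang \lbeta$ and $\ualpha_2 \ulcompatbang \lbeta$ for possibly different bound names; after $\alpha$-renaming both bound ML variables to one fresh name the sub-derivations share the same extended context, which is still functional because $\lbeta$ was chosen fresh, so the induction hypothesis applies and equal bodies give equal $\mu$-types. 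The product, sum, arrow, exponential, and box cases are immediate appeals to the induction hypothesis on the one or two strictly smaller linear types; the unit case and the lump axiom (where the ML type is literally recorded in $\lty$) are direct.

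I expect the only part requiring genuine care — beyond routine inversion bookkeeping — to be the handling of the outer exponential: since the auxiliary judgment $\uty \ulcompatbang \lty$ abbreviates compatibility with $\lbangty{\lty}$, the case analysis must strip exactly one $\lbangty{}$ before dispatching, and one has to verify that in every rule the linear types mentioned in the premises have strictly smaller size than the one in the conclusion, so that the induction measure genuinely decreases even in the recursion rule (where the exponential gets re-attached to the body rather than to $\lmuty{\lbeta}{\cdot}$). Once the generalization and this convention are in place, the argument is a mechanical rule-by-rule verification with no hard step.
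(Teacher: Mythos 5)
Your proof is correct and takes essentially the same route as the paper, whose entire argument is ``by induction on the syntax of $\lty$: the judgment is syntax-directed in its $\lty$ component.'' The extra care you supply --- generalizing to functional interoperability contexts so the $\mu$/variable cases close, and checking that the size measure still decreases when the outer $\lbangty{}$ is re-attached to the body --- is exactly the bookkeeping the paper leaves implicit.
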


\begin{proof}
  By induction on the syntax of $\lty$: the judgment
  $\ulSigma \ulwf \uty \ulcompat \lty$ is syntax-directed in its
  $\lty$ component.
\end{proof}

Note that the converse property does not hold: for a given $\uty$,
there are many $\lty$ such that $\uty \ulcompat \lty$. For example, we
have $\uunitty \ulcompat \lbangty \lunitty$ but also
$\uunitty \ulcompat \lbangty {\lbangty \lunitty}$. This corresponds to
the fact that the linear types are more fine-grained, and make
distinctions (inner duplicability, dereference of full locations) that
are erased in the ML world. The $\uty \ulcompat \llumpedty \uty$ case
also allows you to (un)lump as deeply or as shallowly as you need:
$\upairty \utyone {(\upairty \utytwo \utythree)}$ is compatible with both
$\lbangty
  {(\lopairty
    {\llumpedty \utyone}
    {\llumpedty {\upairty \utytwo \utythree}})}
$
and
$\lbangty
  {(\lopairty
    {\llumpedty \utyone}
    {(\lopairty {\llumpedty \utytwo} {\llumpedty \utythree})})}
$.
We could not systematically translate the complete type $\uty$, as
type variables cannot be translated and need to remain lumped.  Allowing
lumps to ``stop'' the translation at arbitrary depth is a natural
generalization.

\begin{lemma}[Substitution of recursive hypotheses]
  \label{lem:ul:interop-hypot-subst}
  If
  $
    \ulSigma, \ualpha \ulcompat \lbangty \lbeta \ulwf \uty \ulcompat \lty 
  $,
  $
    \ualpha \notin \lty
  $, and
  $
    \ulSigma \ulwf \utypr \ulcompat \lbangty\ltypr
  $
  then
  $
    \ulSigma \ulwf
    \subst \uty \utypr \ualpha
    \ulcompat \subst \lty \ltypr \lbeta
  $.
\end{lemma}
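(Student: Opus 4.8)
The goal is to show that if $\ulSigma, \ualpha \ulcompat \lbangty\lbeta \ulwf \uty \ulcompat \lty$, with $\ualpha \notin \lty$ (more precisely, $\lbeta$ does not occur in $\lty$ — I'll treat the side condition as naming the fresh linear variable correctly), and $\ulSigma \ulwf \utypr \ulcompat \lbangty\ltypr$, then $\ulSigma \ulwf \subst\uty\utypr\ualpha \ulcompat \subst\lty\ltypr\lbeta$. Since the compatibility judgment is syntax-directed on its $\lty$ component (as already used for \fullref{lem:ul:determinism}), the natural strategy is induction on the derivation of $\ulSigma, \ualpha\ulcompat\lbangty\lbeta \ulwf \uty \ulcompat \lty$.

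First I would set up the induction carefully, noting that the auxiliary judgment $\ulwf\cdot\ulcompatbang\cdot$ and the primary judgment $\ulwf\cdot\ulcompat\cdot$ are mutually defined (via the rule injecting $\lty$ into $\lbangty\lty$), so the induction is really a simultaneous induction over both forms. The cases split on the last rule. For the structural rules ($\lunitty$, $\lopairty$, $\losumty$, $\lofunty$, $\lboxedty$, the $\lbangty$-injection, and $\lmuty$), substitution commutes with the type constructor on both sides, so I apply the induction hypothesis to the premises and reapply the same rule; these are routine. For the lump rule $\ulSigma'\ulwf \uty' \ulcompat \llumpedty{\uty'}$: here $\lty = \llumpedty{\uty'}$ contains no $\lbeta$ (it is fully built from ML syntax and the lump constructor, and $\lbeta$ is a linear type variable not appearing there by the side condition), so $\subst\lty\ltypr\lbeta = \lty$, and on the ML side $\subst{\uty'}\utypr\ualpha$ is exactly what the rule produces; the equality $\subst{(\llumpedty{\uty'})}{\ltypr}{\lbeta} = \llumpedty{\subst{\uty'}\utypr\ualpha}$ must be checked, which holds since lump commutes with ML-type substitution syntactically.

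The interesting case is the variable rule $(\ualpha'\ulcompatbang\lbeta')\in(\ulSigma,\ualpha\ulcompat\lbangty\lbeta)$ concluding $\ualpha'\ulcompatbang\lbeta'$. There are two subcases. If the pair used is the distinguished one $(\ualpha,\lbeta)$, then $\uty=\ualpha$ and $\lty$ is such that the $\ulcompatbang$-premise was $\lty=\lbangty\lbeta$ — wait, more carefully: the variable rule gives $\ulcompatbang$ directly, so $\lty=\lbeta$ and the conclusion is under $\ulcompatbang$. Then $\subst\uty\utypr\ualpha=\utypr$ and $\subst\lty\ltypr\lbeta=\ltypr$, and we must show $\ulSigma\ulwf\utypr\ulcompatbang\ltypr$; this follows from the hypothesis $\ulSigma\ulwf\utypr\ulcompat\lbangty\ltypr$ — I need a small inversion/helper observing that $\uty''\ulcompat\lbangty\lty''$ is derivable iff $\uty''\ulcompatbang\lty''$ is, which is immediate from the $\lbangty$-injection rule being the only way to produce $\lbangty\cdot$ under $\ulcompat$ (syntax-directedness again). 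If instead the pair is some other $(\ualpha',\lbeta')\in\ulSigma$, then $\ualpha'\neq\ualpha$ and $\lbeta'\neq\lbeta$ (bound-variable freshness), so both substitutions act as the identity, and the same rule applies over $\ulSigma$.

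The main obstacle I anticipate is purely bureaucratic rather than conceptual: keeping the $\ulcompat$ versus $\ulcompatbang$ bookkeeping straight through the mutual induction, and correctly handling the $\lmuty$ case, where one must $\alpha$-rename the bound $\lbeta'$ in $\lmuty\lbeta'\lty'$ away from $\lbeta$ and from the free variables of $\ltypr$ before pushing the substitution inside, then invoke the induction hypothesis with the context extended by $\ualpha''\ulcompatbang\lbeta'$ and use exchange/weakening on the interoperability context $\ulSigma$ to reorder the two hypotheses. None of this requires new ideas beyond the syntax-directedness already exploited for \fullref{lem:ul:determinism}, so the proof should go through smoothly once the statement of the $\lbangty$-injection inversion helper is made precise.
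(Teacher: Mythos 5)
Your overall strategy---induction on the compatibility derivation, with the variable leaf closed by the hypothesis $\ulSigma \ulwf \utypr \ulcompat \lbangty\ltypr$ and the structural cases handled by commuting substitution with the constructors---is the paper's, and your variable and recursive-type cases are fine. But your lump case contains a genuine error, and it stems from your decision at the outset to ``treat the side condition as naming the fresh linear variable correctly,'' i.e.\ to read $\ualpha \notin \lty$ as a condition on $\lbeta$. The side condition really is about the \emph{ML} variable $\ualpha$: a linear type $\lty$ can perfectly well contain $\ualpha$, namely inside a lump $\llumpedty\uty$, and that is exactly the situation the condition is there to rule out.

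Concretely, in the lump case $\lty = \llumpedty\uty$ you assert the equation $\subst{(\llumpedty\uty)}{\ltypr}{\lbeta} = \llumpedty{\subst \uty \utypr \ualpha}$ on the grounds that ``lump commutes with ML-type substitution syntactically.'' It does not: the left-hand substitution replaces the \emph{linear} variable $\lbeta$ by $\ltypr$ and cannot reach the ML type under the lump, so the left side is just $\llumpedty\uty$, whereas the right side has genuinely performed the ML substitution. If $\ualpha$ occurs in $\uty$ these differ, and you would be left having to derive $\subst \uty \utypr \ualpha \ulcompat \llumpedty\uty$, which is not an instance of the lump rule---the relation is syntax-directed in its linear component, and for $\llumpedty{\wild}$ the only applicable rule requires the same ML type on both sides. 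The paper closes this case by using the side condition as literally stated: $\ualpha \notin \lty = \llumpedty\uty$ gives $\ualpha \notin \uty$, hence \emph{both} substitutions are the identity and the original lump instance is reused unchanged. With that one correction your argument coincides with the paper's proof.
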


\begin{proof}
  By induction on the $\uty \ulcompat \lty$ derivation. There are two
  leaf cases: the case recursive hypotheses, which is immediate, and
  the case of lump $\uty \ulcompat \llumpedty \uty$. In this latter
  case, notice that $\uty$ is a type of $\ulang$, so in particular it
  does not contain the variable $\lbeta$; and we assumed
  $\ualpha \notin \lty$ so we also have $\ualpha \notin \uty$, so
  $\subst \uty \utypr \ualpha
  = \uty
  \ulcompat \llumpedty \uty
  = \subst \lty \ltypr \lbeta
  $.
\end{proof}
\endgroup}

\Not\VeryShort{\SectionInteropStatic}{\begingroup
The correspondence between intuitionistic types $\uty$ and linear
types $\lty$ is specified by a heterogeneous \emph{compatibility
  relation} $\uty \ulcompat \lty$ -- defined in full in
  \Appendices
    {\fullref{appendix:interop-static}}
    {\fullref{fig:l:interoperability-semantics-excerpt}}
. The specification of this
relation is that if $\uty \ulcompat \lty$ holds, then the space of
values of $\llumpedty \uty$ and $\lty$ are isomorphic: we can convert
back and forth between them. When this relation holds, the
term-formers $\lump \lty {}$ and $\unlump \lty {}$ perform the
conversion.
\endgroup}

The term $\LUe \ue$ turns a $\ue : \uty$ into a lumped type
$\llumpedty \uty$, and we need to unlump it with some
$\unlump \lty {}$ for a compatible $\uty \ulcompat \lty$ to interact
with it on the linear side. It is common to combine both operations
and we provide syntactic sugar for it: $\LUmpe \lty \ue$. Similarly
$\UmpLenoconf \lty \le$ first lumps a linear term then sends the result
to the ML world.

\subsection{Interoperability: Dynamic Semantics}

\newcommand{\SectionInteropDynamic}{\begingroup
\begin{figure}
  \begin{smathpar}
    %
    \uunite \ulcompatsharee \lunitty \lunite

    \infer
    {\begin{array}{l}
       \uvone \ulcompate {\lbangty \ltyone} {\lsharee \lstoreone \lstoretyone \lvone}
       \\
       \uvtwo \ulcompate {\lbangty \ltytwo} {\lsharee \lstoretwo \lstoretytwo \lvtwo}
      \end{array}
      \\
     \begin{array}{l}
       \locs{\lstoreone} = \locs{\lstoretyone} = \locs{\lvone} \\
       \locs{\lstoretwo} = \locs{\lstoretytwo} = \locs{\lvtwo}
     \end{array}}
    {\upaire \uvone \uvtwo
     \ulcompate
       {\lbangty{(\lopairty \ltyone \ltytwo)}}
       {\lsharee
         {\lstorejoin \lstoreone \lstoretwo}
         {\storectxjoin \lstoretyone \lstoretytwo}
         {\lpaire \lvone \lvtwo}}}

    \infer
    {\uv \ulcompate {\lbangty \ltyi} {\lsharee \lstore \lstorety \lv}}
    {\usume i \uv
     \ulcompate
       {\lbangty{(\losumty \ltyone \ltytwo)}}
       {\lsharee \lstore \lstorety {\lsume i \lv}}}

    \infer
    {\uty \ulcompat \lty\\
     \utypr \ulcompat \ltypr}
    {\ue
     \ulcompatrighte
       {\lbangty {(\lofunty {\lbangty\lty} {\lbangty\ltypr})}}
       {\lsrcsharee
         {\lfune \lx {\lbangty \lty}
           {\LUmpe \ltypr {\uappe \ue {\UmpLenoconf \lty \lx}}}}}}

    \infer
    {\uty \ulcompat \lty\\
     \utypr \ulcompat \ltypr}
    {\ufune \ux \uty
      {\UmpLenoconf \ltypr
        {\lappe {\lsrccopye \le}
          {\LUmpe \lty \ux})}}
     \ulcompatlefte
       {\lbangty {(\lofunty \lty \ltypr)}}
       {\le}}

    \infer
    { }
    {\uv \ulcompatsharee {\llumpty \uty} {\llumpe \uv}}

    \infer
    {\uv \ulcompatsharee \lty \lv}
    {\uv \ulcompatsharee {\lbangty \lty} {(\lsrcsharee \lv)}}

    \infer
    {\uv \ulcompate {\lbangty \lty} {\lsharee \lstore \lstorety \lv}}
    {\uv \ulcompate
      {\lbangty {\lboxedty \lty}}
      {\lsharee
        {\stsingletonconf \lloc \lstore \lv}
        {\lalive \lloc \lemptyGamma \lstorety \lty}
        \lloc}}

    \infer
    {\uv
     \ulcompate{\lbangty {\subst \lty {\lmuty \lalpha \lty} \lalpha}}
     \lsharee \lstore \lstorety \lv}
    {\ufolde {\umuty \ualpha \uty} \uv
     \ulcompate{\lbangty {\lmuty \lalpha \lty}}
     {\lsharee \lstore \lstorety {(\lfolde {\lmuty \lalpha \lty} \lv})}}

    \bidirlheadredexstep
      {\hconf \lemptystore {\lsrcsharee {\llumpe \uv}}}
      {\hconf \lemptystore \lv}
      {\unlump \lty {}}
      {\lump \lty {}}
    \begin{array}{ll}
      \text{whenever} & \uv \ulcompatrighte \lty \lv
      \\
      \\
      \\
      \text{whenever} & \uv \ulcompatlefte \lty \lv
    \end{array}
  \end{smathpar}

  \caption{Multi-language: Dynamic Interoperability Semantics}
  \label{fig:ul:dynamic-interop-sem}
\end{figure}

We were careful to define the compatibility relation such that
$\uty \ulcompat \lty$ only holds when $\llumpedty \uty$ and $\lty$ are
isomorphic, in the sense that any value of one can be converted into
a value of another. \shortref{fig:ul:dynamic-interop-sem} defines
the operational semantics of the lumping and unlumping operations
precisely as realizing these isomorphisms. For concision, we specify
the isomorphisms as relations, following the inductive structure of
the compatibility judgment itself. We write $(\ulcompate{})$ when
a rule can be read bidirectionally to convert in either directions
(assuming the same direction holds of the premises), and
$(\ulcompatlefte{})$ or $(\ulcompatrighte{})$ for rules that only
describe how to convert values in one direction.

\begin{lemma}[Substitution of polymorphic variables]
  \label{lem:ul:interop-poly-subst}
  If $\ulSigma \ulwf \uty \ulcompat \lty$ and $\ualpha \notin \ulSigma$,
  then
  $\ulSigma \ulwf
   \subst \uty \utypr \ualpha
   \ulcompat
   \subst \lty \utypr \ualpha
  $
  and their lumping operations coincide on all values.
\end{lemma}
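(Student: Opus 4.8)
The plan is to proceed by induction on the derivation of $\ulSigma \ulwf \uty \ulcompat \lty$, strengthened so that the induction hypothesis also covers the auxiliary judgment $\ulcompatbang$ that appears in the premises of the rules of \shortref{fig:ul:static-interop-sem}. The observation that makes the argument go through is that the $\ulang$ type variable $\ualpha$ can occur in the $\llang$ type $\lty$ only strictly underneath a lump constructor $\llumpty{(\cdot)}$, since $\llang$'s own type variables $\lbeta$ form a disjoint syntactic class. Hence $\subst \lty \utypr \ualpha$ never changes the outermost type former of $\lty$, nor of any of its linear subterms; it merely rewrites the $\ulang$ types sitting inside lumps. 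The same holds, at the $\ulang$ level, for the syntax-directed structure of $\subst \uty \utypr \ualpha$ exposed to the compatibility rules. Determinism of the relation in its $\lty$ component (\fullref{lem:ul:determinism}) is what allows $\lump$ and $\unlump$ to be indexed by $\lty$ alone, which we will use for the second half.

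Given this, each inductive case is mechanical. For every rule whose conclusion exposes a top-level $\llang$ type former, the premises are compatibility judgments about strict subterms, so the induction hypothesis applies once the substitution is commuted through the former, and re-applying the very same rule yields the conclusion. In the $\lmuty \lbeta \lty$ case I first $\alpha$-rename the bound variables so that the $\mu$-bound $\ulang$ variable differs from $\ualpha$ and is not free in $\utypr$; then $\ualpha$ is still absent from the extended interoperability context, so the induction hypothesis applies to the premise. The only genuinely non-structural leaves are the lump axiom $\ulSigma \ulwf \uty' \ulcompat \llumpedty{\uty'}$, which is stable under any substitution into $\uty'$ (the result is again an instance of the same axiom), and the recursive-hypothesis leaf $(\ualpha' \ulcompatbang \lbeta) \in \ulSigma$, where the hypothesis $\ualpha \notin \ulSigma$ forces $\ualpha' \neq \ualpha$, so that neither $\ualpha'$ nor $\lbeta$ is touched and the leaf is reused verbatim. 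This last point is exactly why $\ualpha \notin \ulSigma$ is required: $\llang$ has no rule relating a composite $\ulang$ type to a bare linear variable, so a recursive hypothesis mentioning $\ualpha$ could not survive the substitution.

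For the second claim I re-run the same induction, this time tracking the value-conversion relations $(\ulcompate{})$, $(\ulcompatlefte{})$, $(\ulcompatrighte{})$ of \shortref{fig:ul:dynamic-interop-sem} that are generated alongside the compatibility derivation. Since $\lump \lty {(\cdot)}$ and $\unlump \lty {(\cdot)}$ are indexed only by the linear type and, by the analysis above, the derivation of $\subst \uty \utypr \ualpha \ulcompat \subst \lty \utypr \ualpha$ has exactly the same shape as that of $\uty \ulcompat \lty$, the conversion rule selected at each node is unchanged; in particular, the only leaf affected by the substitution is the lump conversion $\uv \ulcompatsharee {\llumpty {\uty'}} {\llumpe \uv}$, which only adds or removes the lump constructor $\llumpe{}$ and is visibly independent of $\uty'$. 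Hence $\lump$ and $\unlump$ at $\lty$ and at $\subst \lty \utypr \ualpha$ reduce identically on every value (recall that the $\ulang$ dynamics erase the instantiating type). I do not expect a real obstacle here; the only points needing care are the $\alpha$-renaming bookkeeping in the $\mu$-case and, more delicate because it is a matter of definition, pinning down the precise reading of ``their lumping operations coincide on all values'' in terms of the value-conversion relations of \shortref{fig:ul:dynamic-interop-sem}, after which the claim falls out of the shape-preservation argument.
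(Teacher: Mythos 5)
Your proposal is correct and follows essentially the same route as the paper: an induction on the compatibility derivation whose only non-trivial cases are the two leaves, the recursive-hypothesis leaf (where $\ualpha \notin \ulSigma$ guarantees the variable is untouched) and the lump axiom (where the substituted type is again an instance of the same axiom). The paper's own proof is just a terser version of this; your additional remarks on the $\mu$-case renaming and on reading off the value-conversion relations are sensible elaborations rather than a different argument.
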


\begin{proof}
  By induction on $\uty \ulcompat \lty$. In the variable leaf case, we
  know $\ualpha \notin \ulSigma$. In the lump leaf case
  $\uty \ulcompat \llumpedty \uty$, the goal
  $\subst \uty \utypr \ualpha \ulcompat \llumpedty {\subst \uty \utypr \ualpha}$
  is immediate.
\end{proof}

\begin{theorem}[Value translations are functional]
  \label{thm:ul:compat-value-sem}
  If $\uty \ulcompat \lty$, then for any closed value $\uv : \uty$
  there is a unique $\lv : \lty$ such that $\uv \ulcompatrighte \lty \lv$,
  and conversely for any closed value $\lv : \lty$ there is a unique
  $\uv : \uty$ such that $\uv \ulcompatlefte \lty \lv$.
\end{theorem}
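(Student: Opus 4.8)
The plan is to establish both the existence and the uniqueness claims, and both directions, by a single induction. The case analysis is on the last rule of the compatibility derivation $\uty \ulcompat \lty$; by Lemma~\ref{lem:ul:determinism} this rule is determined by the head constructor of $\lty$, and the conversion relations of Figure~\ref{fig:ul:dynamic-interop-sem} have exactly one rule matching each compatibility rule (with the arrow rule split into its $(\ulcompatrighte{})$ half, used for the $\ulang\to\llang$ direction, and its $(\ulcompatlefte{})$ half, used for the reverse). Given the input value together with $\uty \ulcompat \lty$, the shape of that value is pinned down --- by the canonical-forms property of $\ulang$ in the $\ulang\to\llang$ direction, and by the inversion principle (Lemma~\ref{lem:l:inversion}) for $\llang$ in the reverse direction --- so exactly one conversion rule applies and its output can be read off, reducing uniqueness to uniqueness of the recursive conversions plus routine bookkeeping on the store fragments $\lstore,\lstorety$ (which are forced by the partition of locations between the subvalues, via the $\locs$ side conditions). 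Existence is symmetric: the induction hypotheses provide the sub-conversions and the store fragments with their $\locs$-invariants, the output value is assembled by the matching rule, and its well-typedness at $\lty$ (resp.\ at $\uty$) follows from the typing rules for the lump injection, for shared terms, for the language boundaries, and for the coercions $\lump\lty{}$ and $\unlump\lty{}$, together with the induction hypothesis.

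The non-recursive cases are routine. The lump rule $\uty \ulcompat \llumpedty\uty$ is the base case, with conversion $\uv \mapsto \llumpe\uv$ and its inverse under the empty store. For $\lunitty$, for the pair and sum rules, and for the $\lbangty$- and $\lboxedty$-introduction rules of the conversion relation, the induction hypothesis is applied to the immediate subvalue(s) at structurally smaller linear types; in the $\lbangty$ and $\lboxedty$ cases the value is essentially unchanged while only the linear type shrinks. The two arrow rules carry no value premise at all: their output is a single fixed closed form built from the input term and the boundary sugar $\LUmpe\lty{}$ and $\UmpLenoconf\lty{}$, so existence is immediate and uniqueness holds because no other conversion rule targets that $\lty$.

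The main obstacle is the iso-recursive case $\umuty\ualpha\uty \ulcompat \lmuty\lbeta\lty$. The matching conversion rule relates $\ufolde{\umuty\ualpha\uty}\uv$ to $\lsharee\lstore\lstorety{\lfolde{}\lv}$ through a premise at the \emph{unfolded} type $\subst\lty{\lmuty\lbeta\lty}\lbeta$, whose compatibility with $\subst\uty{\umuty\ualpha\uty}\ualpha$ is obtained from the sub-derivation under the hypothesis $\ualpha \ulcompatbang \lbeta$ by Lemma~\ref{lem:ul:interop-hypot-subst} --- and this unfolded derivation is not a sub-derivation of the original, so plain structural induction on the compatibility derivation does not close this case (this is also why we may keep $\ulSigma$ empty throughout: we substitute the recursive hypothesis away rather than recurse under it). The fix is to take the induction measure to be lexicographic, with the size of the value being converted ($\uv$ in the $(\ulcompatrighte{})$ direction, the input linear value in the $(\ulcompatlefte{})$ direction) as the primary component and the size of $\lty$ as the secondary one: in the recursive case the value strictly decreases --- from $\ufolde{\umuty\ualpha\uty}\uv$ to $\uv$, resp.\ from $\lsharee\lstore\lstorety{\lfolde{}\lv}$ to $\lsharee\lstore\lstorety\lv$ --- so it is harmless that the type grows under unfolding, whereas in every other case the value does not grow and $\lty$ strictly shrinks. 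Finally, one appeals to Lemma~\ref{lem:ul:interop-poly-subst} to check that the substituted conversion is coherent with the original one on values, so that the recursive step genuinely reconstructs the intended translation.
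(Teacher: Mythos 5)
Your proof follows essentially the same route as the paper's: both directions are established by a simultaneous induction over the compatibility derivation, with inversion (canonical forms on the $\ulang$ side, \fullref{lem:l:inversion} on the $\llang$ side, plus the observation that values at the duplicable types in the image of $(\ulcompat)$ live in an empty ambient store) pinning down the single applicable conversion rule, and with \fullref{lem:ul:interop-hypot-subst} supplying compatibility of the unfolded types in the iso-recursive case. Your explicit lexicographic measure (input value first, then $\lty$) is a welcome sharpening---the paper's bare ``induction on $\uty \ulcompat \lty$'' is not structurally well-founded at the $\mu$ case for exactly the reason you identify---though your closing appeal to \fullref{lem:ul:interop-poly-subst} is superfluous: that lemma concerns substitution of polymorphic $\ulang$ type variables, and the coherence of the recursive step already follows from applying the induction hypothesis at the compatibility derivation produced by \fullref{lem:ul:interop-hypot-subst}.
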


\begin{proof}
  Remark that in the statement of the term, when we quantify over all
  closed values $\lv : \lty$, we implicitly assume that in the general
  case values of $\lv$ live in the empty global store -- otherwise we
  would have a value of the form
  $\ljudg \lstorety \lemptyGamma \lstore \lv \lty$. This is valid
  because all types $\lty$ in the image of the type-compatibility
  relation are duplicable types of the form $\lbangty \ltypr$, so by
  \fullref{lem:l:inversion} we know that $\lv$ is in fact of the form
  $\lsharee \lstore \lstorety \lepr$, living in the empty store.)

  The two sides of the result are proved simultaneously by induction
  on $\uty \ulcompat \lty$, using inversion to reason on the shapes of
  $\uv$ and $\lv$. Note that the inductive cases remain on closed
  values: the only variable-binder constructions,
  $\lambda$-abstractions, do not use the recursion hypothesis.

  In the recursive case
  $\umuty \ualpha \uty \ulcompat \lbangty {(\lmuty \lalpha \lty)}$, to
  use the induction hypothesis on the folded values we need to know
  that the unfolded types $\subst \uty {\umuty \ualpha \uty} \ualpha$
  and $\subst \lty {\lmuty \lalpha \lty} \lalpha$ are compatible. This
  is exactly \fullref{lem:ul:interop-hypot-subst}, using
  the hypothesis
  $\umuty \ualpha \uty \ulcompat \lbangty {(\lmuty \lalpha \lty)}$
  itself.
\end{proof}

\begin{lemma}[Lumping cancellation]
  \label{lem:lumping-cancellation}
  The lump conversions $\lump \lty {}$ and $\unlump \lty {}$
  cancel each other modulo $\beta\eta$. In particular,
  \begin{smathpar}
    \LUmpe \lty {\UmpLe \lty  \lv} =_{\beta\eta} \lv

    \UmpLenoconf \lty {\LUmpe \lty \uv} =_{\beta\eta} \uv 
  \end{smathpar}
\end{lemma}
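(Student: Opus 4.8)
The plan is to unfold the abbreviations involved and then recognise each of the two stated equations as a composition of two independent \emph{round-trip} facts: one about the language boundaries $\LUe{}$ and $\ULenoconf{}$, and one about the lump isomorphisms $\lump\lty{}$ and $\unlump\lty{}$. Recall the definitions $\LUmpe\lty\ue \defeq \unlump\lty{\LUe\ue}$ and $\UmpLenoconf\lty\le \defeq \ULenoconf{\lump\lty\le}$; moreover $\UmpLe\lty\le = \ULe\lstore\lstorety{\lump\lty\le}$, and since $\lump\lty\le$ has the duplicable type $\llumpedty\uty$, \fullref{lem:l:inversion} forces the store $\lstore$ to be empty, so $\UmpLe\lty\le$ coincides with $\UmpLenoconf\lty\le$ and it is enough to reason about the latter. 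The two targets then unfold to $\unlump\lty{\LUe{\ULenoconf{\lump\lty\lv}}} =_{\beta\eta} \lv$ and $\ULenoconf{\lump\lty{\unlump\lty{\LUe\uv}}} =_{\beta\eta} \uv$.

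First I would establish the \emph{boundary round-trip}: $\ULenoconf{\LUe\uv} =_{\beta\eta} \uv$ for a closed ML value $\uv$, and $\LUe{\ULenoconf\lv} =_{\beta\eta} \lv$ for a closed linear value $\lv$ of lumped type $\llumpedty\uty$. Both follow from the reduction rules for $\LUe{}$ and $\ULenoconf{}$ in \shortref{fig:ul:static-sem}: $\LUe\uv$ reduces to a lump value and the boundary-elimination rule strips the lump off again (using \fullref{lem:l:inversion} to see that a closed value of lumped type really is a lump), and running the same steps in reverse handles the other direction, with a little $\beta\eta$-bookkeeping around the $\lbangty{}$ wrapper that duplicable lump values carry. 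Only these value instances are needed here; for arbitrary subterms one would first $\eta$-expand the boundary formers to $\lambda$-forms and appeal to the value case.

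Next I would use the \emph{lump round-trip}: $\unlump\lty{(\lump\lty\lv)} =_{\beta\eta} \lv$ for $\lv:\lty$, and $\lump\lty{(\unlump\lty\lv)} =_{\beta\eta} \lv$ for $\lv:\llumpedty\uty$. The linear-isomorphism declaration of \shortref{fig:ul:static-interop-sem} asserts that $\lump\lty{}$ and $\unlump\lty{}$ are mutually inverse isomorphisms between $\llumpedty\uty$ and $\lty$ whenever $\uty\ulcompat\lty$, and \shortref{fig:ul:dynamic-interop-sem} realises this operationally: $\lump\lty\lv$ head-reduces to $\lsrcsharee{\llumpe\uv}$ where $\uv\ulcompatlefte\lty\lv$, and $\unlump\lty{(\lsrcsharee{\llumpe\uv})}$ head-reduces to $\lvpr$ where $\uv\ulcompatrighte\lty\lvpr$. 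To conclude cancellation I need the two value-translation relations to be not merely functional (\fullref{thm:ul:compat-value-sem}) but mutually inverse, so that $\lvpr=\lv$; this I would prove by the same simultaneous induction on the derivation of $\uty\ulcompat\lty$ that underlies \fullref{thm:ul:compat-value-sem}, using \fullref{lem:ul:interop-hypot-subst} at the $\umuty\ualpha\uty$ case and ordinary inversion on the shapes of $\uv$ and $\lv$ elsewhere. Chaining the boundary round-trip with the lump round-trip then yields both stated equations.

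The step I expect to be the real obstacle is this lump round-trip, and within it the fact that the two \emph{separate} relations $\ulcompatlefte$ and $\ulcompatrighte$ compose to the identity on values: the operational rules only tell us that $\lump\lty{}$ is driven by the former and $\unlump\lty{}$ by the latter, so cancellation is not a formality. The delicate cases are the function-type rules, whose right-hand sides already contain nested boundary-and-lump conversions underneath a $\lambda$; to close the mutual-inversion argument there one is led back to exactly the $\beta\eta$-cancellation being proved, but now at the strictly smaller component types $\lty$ and $\ltypr$, so the whole development is a well-founded induction on the structure of the linear type rather than a genuine circularity.
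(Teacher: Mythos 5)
Your proposal is correct and follows essentially the same route as the paper: both arguments reduce the statement to showing that the two value-translation relations are mutually inverse modulo $\beta\eta$, observe that the bidirectionally-specified cases cancel on the nose, and isolate the function case as the one where the round trip is an $\eta$-expansion closed off by the induction hypothesis at the strictly smaller component types. Your explicit factoring into a boundary round-trip and a lump round-trip is a slightly more detailed organization of the same computation, and your closing remark about well-founded recursion at the function case is exactly the paper's ``induction on the type, then parallel induction on the compatibility derivations.''
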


\begin{proof}
  By induction on $\uty$, and then by parallel induction on the
  derivations of type compatibility and value compatibility. The
  parallel cases are symmetric by definition, only the function case
  $\lbangty {(\lofunty {\lbangty \lty} {\lbangty \ltypr})}
  $
  needs to be checked. A simple computation, using the induction
  hypothesis on the smaller types $\lbangty \lty$ and
  $\lbangty \ltypr$, shows that composing the two function
  translations gives an $\eta$-expansion -- plus the $\beta\eta$-steps
  from the induction hypotheses.

  Note that the $\beta$-$\eta$ rules used (for functions and
  $\lbangty \wild$) have been proved admissible for the logical
  relation, so in particular sound with respect with contextual
  equivalence.
\end{proof}

\paragraph{Implementation consideration} In a realistic implementation
of this multi-language system, we would expect the representation
choices made for $\ulang$ and $\llang$ to be such that, for some but
not all compatible pairs $\uty \ulcompat \lty$, the types $\uty$ and
$\lty$ actually have the exact same representation, making the
conversion an efficient no-op. An implementation could even restrict
the compatibility relation to accept only the pairs that can be
implemented in this way.  That is, it would reject some $\ullang$
programs, but the ``graceful interoperability'' result that is our
essential contribution would still hold.
\endgroup}

\Not\VeryShort{\SectionInteropDynamic}{\begingroup When the relation
  $\uty \ulcompat \lty$ holds, we can define a relation
  $\uv \ulcompate \lty \lv$ between the values of $\uty$ and the
  values of $\lty$ -- see 
  \Appendices
    {\fullref{appendix:interop-dynamic}}
    {the long version of this work}. It is
  functional in both direction: with our definition $\uv$ is uniquely
  determined from $\lv$ and conversely. We then define the reduction
  rule for (un)lumping: if $\uv \ulcompate \lty \lv$, then
  \vspace{-1em}
  \begin{smathpar}
    \hconf \lemptystore {\unlump \lty {(\lsrcsharee {\llumpe \uv}})}
    \lredexstep
    \hconf \lemptystore \lv
 
    \hconf \lemptystore {\lump \lty \lv}
    \lredexstep
    {\hconf \lemptystore {\lsrcsharee {\llumpe \uv}}} 
  \end{smathpar}\vspace{-2em}
\endgroup}

\subsection{Full Abstraction from \texorpdfstring{$\ulang$}{U}
  into \texorpdfstring{$\ullang$}{UL}}
\label{subsect:full-abstraction}

We can now state \VeryShort{}{and prove} the major meta-theoretical result of this
work, which is the proposed multi-language design extends the simple
language $\ulang$ in a way that provably has, in a certain sense, ``no
abstraction leaks''.

\begin{definition}[Contextual equivalence in $\ulang$]
  We say that $\ue, \uepr$ such that
  $\ujudg \uGamma {\ue, \uepr} \uty$ are \emph{contextually
    equivalent}, written $\ue \uctxeq \uepr$, if, for any expression
  context $\uctxt \hw \square$ such that
  $\ujudg \uemptyGamma {\uctxt \hw \ue} \uunitty$, the closed terms
  $\uctxt \hw \ue$ and $\uctxt \hw \uepr$ are equi-terminating.
\end{definition}

\begin{definition}[Contextual equivalence in $\ullang$]
  We say that $\ue, \uepr$ such that
  $\ulujudg \ulGamma {\ue, \uepr} \uty$ are \emph{contextually
    equivalent}, written $\ue \ulctxeq \uepr$, if, for any expression
  context $\ulctxt \hw \square$ such that
  $\ulujudg \ulemptyGamma {\ulctxt \hw \ue} \uunitty$, the closed terms
  $\ulctxt \hw \ue$ and $\ulctxt \hw \uepr$ are equi-terminating.
\end{definition}

\newcommand{\SectionFullAbstraction}{\begingroup
The proof of full abstraction is actually rather simple. It relies
on the idea, that we already mentioned in \fullref{subsec:l:store},
that linear state can be seen as either being imperatively mutated,
but also as a purely functional feature that just explicits memory
layout. In the absence of aliasing, we can give a purely functional
semantics to linear state operations---instead of the
store-modifying reduction semantics---and, in fact,
this semantics determines a translation from linear programs back
into pure ML programs. Those ML programs will not have the same
allocation behavior as the initial linear programs (in-place
programs won't be in-place anymore), but they are observably
equivalent in that they are equi-terminating and return the same
outputs from the same inputs.

\begin{figure}
  \begin{smathpar}
    \begin{array}{lcl}
      \funtrans \lunitty
      & \defeq
      & \uunitty
      \\
      \funtrans {\lopairty \ltyone \ltytwo}
      & \defeq
      & \upairty {\funtrans \ltyone} {\funtrans \ltytwo}
      \\[2pt]
      & \dots &
      \\[2pt]
      \funtrans {\lbangty \lty}
      & \defeq
      & \funtrans \lty
    \end{array}

    \begin{array}{lcl}
      \funtrans {\lemptyboxty}
      & \defeq
      & \uunitty
      \\
      \funtrans {\lboxedty \lty}
      & \defeq
      & \upairty \uunitty {\funtrans \lty}
      \\
      \funtrans {\llumpty \uty}
      & \defeq
      & \uty
      \\[2pt]
      \funtrans {\lmuty \lalpha \lty}
      & \defeq
      & \umuty {\ualpha_\lalpha} {\funtrans \lty}
      \\
      \funtrans \lalpha
      & \defeq
      & \ualpha_\lalpha
    \end{array}
  \end{smathpar}

\hrule

  \begin{smathpar}
    \begin{array}{l@{~}l@{~}l}
      \funtrans \lunite
      & \defeq
      & \uunite
      \\
      \funtrans {\lletunite \le \lepr}
      & \defeq
      & \uletunite {\funtrans \le} {\funtrans \lepr}
      \\
      \funtrans {\lfune \lx \lty \le}
      & \defeq
      & \ufune {\ux_\lx} {\funtrans \lty} {\funtrans \le}
      \\
      \funtrans {\lappe \le \lepr}
      & \defeq
      & \lappe {\funtrans \le} {\funtrans \lepr}
      \\
    \end{array}\qquad
    \begin{array}{l@{~}l@{~}l}
      \funtrans {\LUe \ue}
      & \defeq
      & \funtrans \ue
      \\[2pt]
      \funtrans {\lnewe \le}
      & \defeq
      & \uletunite {\funtrans \le} \uunite
      \\
      \funtrans {\lfreee \le}
      & \defeq
      & \uletunite {\funtrans \le} \uunite
      \\
      \funtrans {\lboxe \le}
      & \defeq
      & \upaire \uunite {\uprje 2 {\funtrans \le}}
      \\
      \funtrans {\lunboxe \le}
      & \defeq
      & \upaire \uunite {\uprje 2 {\funtrans \le}}
    \end{array}\qquad
    \begin{array}{l@{~}l@{~}l}
      \funtrans \lx
      & \defeq
      & \ux_\lx
      \\
      \funtrans \lloc
      & \defeq
      & \ux_\lloc
      \\
      \funtrans {\conf \lemptystore \le}
      & \defeq
      & \funtrans \le
      \\
      \funtrans {\conf {\stextempty \lstore \lloc} \le}
      & \defeq
      & \subst {\funtrans {\conf \lstore \le}} \uunite {\ux_\lloc}
      \\
      \funtrans {\conf {\stextconf \lstore \lloc \lstorepr \lv} \le}
      & \defeq
      & \subst
          {\funtrans {\conf \lstore \le}}
          {\upaire \uunite {\funtrans {\conf \lstorepr \lv}}}
          {\ux_\lloc}
    \end{array}  
  \end{smathpar}

  \caption{Pure Semantics of Linear State}
  \label{fig:ul:trans}
\end{figure}

The definition of the functional translation of linear contexts, terms,
and types is given in \shortref{fig:ul:trans}. To simplify the
translation of terms and the statement of
\shortref{lem:ul:trans-comp}, we assume that a global injective
mapping is chosen from linear variables $\lx$ and locations $\lloc$ to
ML variables $\ux_\lx$ and $\ux_\lloc$, from linear type variables
$\lalpha$ to ML type variables $\ualpha_\lalpha$, and that the input
terms and types have all bound variables distinct from each other and
their free variables. We extend term translation to contexts
$\funtrans {\ulctxt}$ by extending the translation with
$\funtrans \square \defeq \square$.

\begin{lemma}{Typing}
  \label{lem:ul:trans-typing}
  If $\ulljudg \ulGamma \lstore \le \lty$
\end{lemma}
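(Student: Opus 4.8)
We must show that the functional translation of \shortref{fig:ul:trans} sends every well-typed linear configuration to a well-typed $\ulang$ term: from $\ulljudg\lstorety\ulGamma\lstore\le\lty$ we want to conclude $\ujudg{\funtrans\ulGamma}{\funtrans{\conf\lstore\le}}{\funtrans\lty}$, where $\funtrans{}$ is extended pointwise to typing contexts (sending $\var\lx\lty$ to $\var{\ux_\lx}{\funtrans\lty}$ and leaving the $\ulang$ bindings untouched). The plan is to prove this by induction on the internal typing derivation, in mutual recursion with the companion statement for $\ulang$ expressions: $\ulujudg\ulGamma\ue\uty$ implies $\ujudg{\funtrans\ulGamma}{\funtrans\ue}{\uty}$ (recall $\funtrans{}$ is the identity on $\ulang$ types and that it erases $\ULe{}$ and $\LUe{}$ boundaries). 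The two statements invoke each other precisely at the boundary rules, where $\funtrans{}$ of the lump type is $\uty$, so the types line up; everything else is handled within the relevant language.

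Most inductive cases are structural: the translation commutes with the term former, the store is split by $\funtrans{}$ exactly as the premise splits it via $\lstorejoin$ and $\ctxjoin$, and the induction hypotheses provide $\ujudg{\funtrans\lGammai}{\funtrans{\conf\lstorei\lei}}{\funtrans\ltyi}$. Recombining these requires only two observations. First, $\ulang$ typing admits weakening, so each subresult can be re-typed in the larger context $\funtrans{\ctxjoin\lGamma\lGammapr}$, which contains every variable of both $\funtrans\lGamma$ and $\funtrans\lGammapr$ (shared variables being duplicable, hence translated to the same $\ulang$ binding). Second — the one place where the local-store design really matters — I need the commutation fact
\[
  \funtrans{\conf{\lstorejoin\lstoreone\lstoretwo}{C[\leone,\letwo]}}
  \;=\;
  C[\,\funtrans{\conf\lstoreone\leone},\ \funtrans{\conf\lstoretwo\letwo}\,]
\]
for the surrounding constructor $C$: the substitutions induced by $\lstoreone$ touch only $\funtrans\leone$ and those induced by $\lstoretwo$ only $\funtrans\letwo$, and distinct location substitutions commute because, after recursive translation, the terms being substituted are closed with respect to location variables. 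This rests on the ownership invariant built into the internal type system — the top-level locations of a store fragment are exactly $\locs\lstoretyi$ and occur (linearly) only in the matching subterm — which I would isolate as a small auxiliary lemma extracted from the leaf rules forcing empty stores and from the two location rules; the standing assumption that $\lx\mapsto\ux_\lx$, $\lloc\mapsto\ux_\lloc$ are injective with all bound names distinct then rules out capture in every substitution involved.

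The remaining cases are direct computations once the types are matched: for $\lnewe\le$ and $\lfreee\le$ the translation is $\uletunite{\funtrans\le}\uunite$ and both $\funtrans\lemptyboxty$ and $\funtrans\lunitty$ are $\uunitty$; for $\lboxe\le$ and $\lunboxe\le$ it is $\upaire\uunite{\uprje 2 {\funtrans\le}}$ and $\funtrans{\lboxedty\lty}=\upairty\uunitty{\funtrans\lty}=\funtrans{\lunboxedty\lty}$; the fold/unfold, $\lsrcsharee{}$/$\lsrccopye{}$, and internal $\lsharee\lstore\lstorety\le$ cases translate transparently, using $\funtrans{\lbangty\lty}=\funtrans\lty$ and folding the captured store into substitutions; and the two location leaf rules, once the substitution collapses $\funtrans{\conf\lstore\lloc}$, yield $\uunite:\uunitty$ respectively $\upaire\uunite{\funtrans{\conf\lstorepr\lv}}:\funtrans{\lboxedty\lty}$ straight from the premise and the induction hypothesis. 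I expect the only genuine obstacle to be formulating and using the ownership/commutation lemma cleanly; the rest is routine bookkeeping over the typing rules.
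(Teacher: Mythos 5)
The lemma you are proving is, in the paper's source, literally unfinished: the statement stops at ``If $\ulljudg \ulGamma \lstore \le \lty$'' with no conclusion and no proof, so there is no argument of the authors' to compare yours against. Your reconstruction of the intended statement (translation preserves typing, with $\funtrans{}$ extended pointwise to mixed contexts and the conclusion $\ujudg{\funtrans\ulGamma}{\funtrans{\conf\lstore\le}}{\funtrans\lty}$) is the natural one, and it is the form needed to support \fullref{thm:ul:trans-equiv} and the full-abstraction argument. Your proof plan is sound: the mutual induction with the $\ulang$-side judgment at boundaries is forced by the typing rules for $\LUe{}$ and $\ULe{}{}{}$, and you correctly identify the two non-routine ingredients --- weakening in $\ulang$ to absorb the context splits $\ctxjoin\lGamma\lGammapr$, and the ownership/commutation lemma guaranteeing that the location substitutions induced by $\lstorejoin\lstoreone\lstoretwo$ land disjointly in the two subterms and produce location-closed replacements. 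The latter is exactly the invariant the internal type system is designed to maintain (leaf rules force empty stores; $\locs\lstore = \locs\lstorety$ matches the locations occurring linearly in the term), so isolating it as an auxiliary lemma proved alongside the main induction is the right move. One small point to make explicit when you write it up: the store typing $\lalive\lGamma\lstoretypr\lloc\lty$ records that a stored value may own variables from $\lGamma$, so the terms you substitute for $\ux_\lloc$ are location-closed but not variable-closed; your induction hypothesis for configurations must therefore be stated over the full mixed context, which your formulation already accommodates.
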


\begin{lemma}[Compositionality]
  \label{lem:ul:trans-comp}
  $\funtrans {\ulctxt \hw \ue} = {\funtrans \ulctxt} \hw {\funtrans \ue}$.
\end{lemma}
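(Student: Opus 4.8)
The plan is to prove the identity by structural induction on the context $\ulctxt$, exploiting that the translation $\funtrans{\cdot}$ of \shortref{fig:ul:trans} is itself defined by structural recursion on its argument. In the base case $\ulctxt = \square$ we have $\funtrans{\square \hw \ue} = \funtrans{\ue}$ on one side and $\funtrans{\square} \hw \funtrans{\ue} = \square \hw \funtrans{\ue} = \funtrans{\ue}$ on the other, using the extension clause $\funtrans{\square} \defeq \square$; the two sides coincide.

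For the inductive step, $\ulctxt$ is some term-former applied to a list of immediate subterms, exactly one of which, say $\ulctxt_0$, is a strictly smaller context carrying the hole, the others being ordinary hole-free terms. Inspecting \shortref{fig:ul:trans}, every defining clause of $\funtrans{\cdot}$ rewrites a term-former into an ML expression built \emph{uniformly} from the translations of the immediate subterms, up to a fixed amount of surrounding syntax (for instance $\funtrans{\lboxe \le} \defeq \upaire \uunite {\uprje 2 {\funtrans{\le}}}$) and a fixed, globally chosen renaming of bound variables (for instance $\funtrans{\lfune \lx \lty \le} \defeq \ufune {\ux_\lx} {\funtrans{\lty}} {\funtrans{\le}}$); the language boundaries are no different, since $\funtrans{\LUe \ue} \defeq \funtrans{\ue}$. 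Hence in each such case we unfold the clause on the left-hand side, replace $\funtrans{\ulctxt_0 \hw \ue}$ by $\funtrans{\ulctxt_0} \hw \funtrans{\ue}$ via the induction hypothesis, note that plugging commutes with the fixed surrounding syntax and with the (hole-independent) renamings, and re-fold the very same clause to obtain $\funtrans{\ulctxt} \hw \funtrans{\ue}$.

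The only clauses that deserve a second look are the configuration constructors with a non-empty store, where the translation introduces a substitution, e.g. $\funtrans{\conf{\stextempty \lstore \lloc}{\le}} \defeq \subst{\funtrans{\conf \lstore \le}}{\uunite}{\ux_\lloc}$, and likewise for the full-cell case. Here, in addition to the induction hypothesis, we use that substitution commutes with context plugging and that the substituted location variable $\ux_\lloc$ does not occur in $\funtrans{\ue}$ — both of which follow from the standing conventions that the variable map is injective and that bound variables (here those introduced for store locations) are distinct from the free variables of the term being plugged in. In practice this lemma is applied only to source-level contexts, whose configurations all have empty stores, so the clause $\funtrans{\conf \lemptystore \le} = \funtrans{\le}$ applies and no substitution is introduced at all. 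I expect this small amount of substitution-and-freshness bookkeeping to be the only step that is not a purely mechanical unfold/re-fold; everything else is immediate from the compositional shape of the definition in \shortref{fig:ul:trans}.
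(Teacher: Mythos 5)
The paper states this lemma without proof, so there is no official argument to compare against; your structural induction on the context is the standard way to discharge it and is correct. You rightly isolate the only delicate point---the configuration clauses whose translation substitutes for the location variables $\ux_\lloc$---and the paper's standing conventions (a globally chosen injective mapping from locations to ML variables, and all bound variables distinct from each other and from the free variables of the term) are exactly what make that substitution commute with plugging; as you observe, in the intended application the contexts are surface contexts whose configurations carry empty local stores, so those cases are vacuous anyway.
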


\begin{lemma}[Projection]
  \label{lem:ul:trans-proj}
  If $\ue \in \ullang$ is in the $\ulang$ subset, then
  $\funtrans \ue = \ue$.
\end{lemma}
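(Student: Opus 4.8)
The plan is to prove \shortref{lem:ul:trans-proj} by a direct structural induction on the ML term $\ue$, independently of the typing and compositionality lemmas. The crucial observation is that, restricted to the ML syntactic category, the functional translation $\funtrans{\cdot}$ of \shortref{fig:ul:trans} is \emph{homomorphic}: each of the clauses elided as ``$\dots$'' in the figure merely pushes $\funtrans{\cdot}$ through an ML term-former, e.g.\ $\funtrans{\uappe \ueone \uetwo} = \uappe {\funtrans \ueone} {\funtrans \uetwo}$, and the \emph{only} non-homomorphic clause on the ML side is the one for the boundary $\ULe \lstore \lstorety \le$, whose right-hand side descends into the linear world. Since $\ue$ lies in the $\ulang$ subset of $\ullang$, it contains no $\ULe$ boundary, and hence no linear subterm at all, so that clause is never reached.

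First I would dispatch the type annotations. An ML type $\uty$ in the $\ulang$ subset has no linear sub-structure --- the multi-language adds no type-former embedding $\llang$ types into $\ulang$ types --- and the type clauses of $\funtrans{\cdot}$ that can apply to such a $\uty$ are again homomorphic, bottoming out at $\funtrans{\ualpha} = \ualpha$, so a trivial sub-induction on $\uty$ gives $\funtrans{\uty} = \uty$. This covers the annotations carried by $\ufune \ux \uty \ue$, $\uinste \ue \uty$ and $\ufolde {\umuty \ualpha \uty} \ue$. The main induction on $\ue$ is then immediate: in each case the translation clause recurses into the immediate subterms and, by the previous sentence, leaves the annotations fixed, so the induction hypotheses rewrite the result back to $\ue$. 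The one leaf case worth spelling out is the variable: a term in the $\ulang$ subset mentions only ML variables $\ux$, and the translation acts as the identity on them --- it is only the linear variables $\lx$ and locations $\lloc$ that get renamed, to $\ux_\lx$ and $\ux_\lloc$ --- so $\funtrans{\ux} = \ux$.

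I do not expect a genuine obstacle; the single point requiring care is the global renaming convention fixed just before \shortref{fig:ul:trans}. I would make explicit that this convention touches only the clauses for $\lx$, $\lloc$ and $\lalpha$, none of which fires on a pure ML term or type, so the renaming never perturbs the ML fragment and does not need the usual freshness side-conditions here. Once that is noted, the induction goes through with no further bookkeeping.
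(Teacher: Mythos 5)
Your proof is correct and matches what the paper intends: the paper states this lemma without proof, treating it as immediate from the fact that the translation of Figure~\ref{fig:ul:trans} only acts non-trivially on linear types, linear variables, locations, and boundaries, none of which occur in a term of the $\ulang$ subset. Your structural induction, including the sub-induction on type annotations and the observation that the global renaming convention only touches $\lx$, $\lloc$, and $\lalpha$, is the natural formalization of that omitted argument.
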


\begin{theorem}[Termination equivalence]
  \label{thm:ul:trans-equiv}
  The reduction of $\funtrans \ue$ in $\ulang$ terminates if and only
  if the reduction of $\ue$ in $\ullang$ terminates.
\end{theorem}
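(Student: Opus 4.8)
The plan is to derive both implications from a single \emph{forward simulation} of $\ullang$-reduction by $\ulang$-reduction through the functional translation $\funtrans{\le}$ of \shortref{fig:ul:trans}. Before the main induction I would record the auxiliary facts the case analysis needs: that the translation commutes with substitution, $\funtrans{\subst \le \lv \lx} = \subst{\funtrans \le}{\funtrans \lv}{\ux_\lx}$ (and likewise for the store-substitutions used in the configuration cases of \shortref{fig:ul:trans}); that it sends every $\ullang$-value to an $\ulang$-value, by structural induction using $\funtrans{\llumpe \uv} = \uv$, the fact that $\funtrans{\lsharee \lstore \lstorety \lv}$ is a store-substitution instance of the translated value $\funtrans{\lv}$, and \fullref{lem:ul:trans-proj} for the $\ulang$-fragment; and the obvious store-aware strengthening of \fullref{lem:ul:trans-comp}, namely that the translation of a configuration $\hconf \lstore {\lectxt \hw \le}$ whose linear context $\lectxt$ owns part of the store is some $\ulang$ evaluation context wrapped around the translation of the sub-configuration that reduces --- each linear context former of \shortref{fig:l:opsem} translating to an $\ulang$ evaluation context.

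The core lemma is: whenever $\ue \uredexstep \uepr$ in $\ullang$, the $\ulang$-terms $\funtrans \ue$ and $\funtrans \uepr$ admit a common $\ulang$-reduct, and every reduction rule is moreover either \emph{productive}, meaning $\funtrans \ue$ makes at least one $\ulang$-step toward that reduct, or \emph{silent}, meaning $\funtrans \ue$ and $\funtrans \uepr$ are syntactically equal. This is proved by cases on the reduction. The $\ulang$-head-reductions, the core linear head-reductions (function application, $\lletpaire \lxone \lxtwo {} {}$, $\lcasee {} \lxone {} \lxtwo {}$, $\lunfolde{\lfolde {} {}}$), and the store primitives $\lnewe{},\lfreee{},\lboxe{},\lunboxe{}$ all translate, via \shortref{fig:ul:trans} and the substitution fact, to single $\ulang$ $\beta$-, projection-, or $\uletunite{}{}$-reductions, hence are productive; among the deep-copy rules for $\lsrccopye{(\lsrcsharee{})}$, the ones at $\lunitty$, $\lbangty{}$, function, and empty-location types are silent (the translation of the copy coincides with the translation of what it copies, resp.\ with $\funtrans{\lnewe \lunite}$), while the ones at pair, sum, $\lmuty \lalpha \lty$, and full-location types are productive, translating to a structural destructuring and reconstruction in $\ulang$; the boundary-cancellation steps $\LUe \uv \lheadredexstep \llumpe \uv$ and $\ULenoconf{\lsrcsharee{\llumpe \uv}} \uredexstep \uv$ are silent; the $\unlump \lty {}$/$\lump \lty {}$ steps translate to the value-conversion code on the $\ulang$ side and reach a common reduct with their surroundings by \fullref{lem:lumping-cancellation} and \fullref{lem:ul:interop-poly-subst} (and \fullref{lem:ul:interop-hypot-subst} in the recursive-type case), which say precisely that the translation is stable under the lump isomorphisms; and the $\lsharee{}{}{}$-congruence and the evaluation-context rule follow from the induction hypothesis together with the store-aware compositionality above.

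From the simulation, the direction from termination of $\ue$ in $\ullang$ to termination of $\funtrans \ue$ in $\ulang$ is immediate: $\ullang$-reduction is deterministic and, for well-typed configurations, never stuck, so $\ue$ terminating means it reduces to some value $\uv$; iterating the simulation along this run, $\funtrans \ue$ and $\funtrans \uv$ share an $\ulang$-reduct, and since $\funtrans \uv$ is itself an $\ulang$-value that shared reduct must be $\funtrans \uv$, so $\funtrans \ue$ reduces to $\funtrans \uv$ and terminates. The converse I would prove contrapositively: assuming $\ue$ diverges, fix an infinite run $\ue = \ue_0 \uredexstep \ue_1 \uredexstep \cdots$; it suffices to show infinitely many of its steps are productive, for then $\funtrans{\ue_0}$ admits an infinite $\ulang$-run and diverges. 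This reduces to showing the silent steps are strongly normalizing in isolation, which I would obtain from a well-founded measure on $\ullang$-configurations --- lexicographically, term size together with the number of boundary and $\lsrccopye{}$-headed bookkeeping redexes --- that strictly decreases on every silent step (each such step either shrinks the term or removes a boundary redex), so that no tail of an infinite run can be entirely silent.

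The main obstacle is exactly this last measure argument: pinning down the productive/silent classification of every reduction rule without error, and choosing a measure that genuinely decreases on all silent steps while remaining compatible with the productive ones --- in particular for the $\lsrccopye{}$ rules that push copies inward, and for the evaluation-context rule, where the measure must be read off the term after the store-substitutions of \shortref{fig:ul:trans} have been performed. The remaining delicate point is local to the lumping and unlumping cases, where one must verify that the chosen definition of the translation on the interoperability primitives realizes the \emph{same} transformation on the two sides of a boundary --- precisely the content of \fullref{lem:lumping-cancellation} (with \fullref{lem:ul:interop-hypot-subst} for the recursive-type case).
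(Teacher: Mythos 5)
Your proposal is correct in outline but takes a genuinely different route from the paper. The paper's proof asserts a strict lock-step correspondence: it claims the translation was chosen so that there is a redex in the translated term if and only if there is a redex in the original term, and that the reduction of the translation \emph{is} the translation of the reduction (illustrated on the $\lboxe{}$ step); both directions of equi-termination then follow with no further argument. You instead build a weak forward simulation in which each $\ullang$-step is either \emph{productive} or \emph{silent}, and discharge the divergence direction with a well-founded measure showing that silent steps cannot repeat forever. Your decomposition costs an extra measure argument but is arguably closer to what a complete proof needs: the paper's redex-for-redex claim cannot hold literally for the boundary-cancellation steps $\LUe \uv \lheadredexstep \llumpe \uv$ and $\ULenoconf{\lsrcsharee{\llumpe \uv}} \uredexstep \uv$, whose two sides have identical translations, so some silent-step bookkeeping of the kind you set up is unavoidable. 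Two cautions. First, your classification of the deep-copy rules at pair, sum and $\lmuty \lalpha \lty$ types as productive is in tension with your (correct) treatment of the unit, function and $\lbangty \lty$ cases as silent: since $\funtrans{\lbangty \lty} = \funtrans{\lty}$, the translations of $\lsrccopye{}$ and $\lsrcsharee{}$ must be essentially the identity, so \emph{all} copy rules are silent, and your measure must then absorb the fact that one copy redex spawns two (a sum of sizes of copied values works where a redex count does not). Second, in the lump/unlump case the two sides of the step translate to terms related only up to $\beta\eta$ (the function case is an $\eta$-expansion with boundary conversions inside), and $\eta$-expanded terms do not literally share a reduct; so ``common reduct'' must there be weakened to a termination-preserving equivalence, a gap that the paper's own one-paragraph sketch silently shares.
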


\begin{proof}
  The translation respects the evaluation structure: a value is
  translated into a value, and a position in the original term is
  reducible if and only if the same position is reducible in the
  translation -- both properties are checked by direct induction, on
  values and evaluation contexts.

  Furthermore, the translation was carefully chosen (especially for
  the store operations) so that there is a redex in the translated
  term if and only if there is a redex in the original term, and the
  reduction of the translation is also the translation of the
  reduction. For example, we have
  \begin{smathpar}
    \begin{array}{ll}
      & \hconf {\stextempty \lstore \lloc} {\lboxe {\lpaire \lloc \lv}}
      \\ \lredexstep
      & \hconf {\stextconf {} \lloc \lstore \lv} \lloc
      \\[10pt]

      & \funtrans {\hconf {\stextempty \lstore \lloc} {\lboxe {\lpaire \lloc \lv}}}
      \\ =
      & \uappe {\funtrans {\lboxe {}}} {\upaire \uunite {\subst \lv \lstore \lstore}}
      \\ =
      & \upaire \uunite {\uprje 2 {\upaire \uunite {\subst \lv \lstore \lstore}}}
      \\ \lredexstep
      &  {\upaire \uunite {\subst \lv \lstore \lstore}}
      \\ =
      & \upaire \uunite {\funtrans {\hconf \lstore \lv}}
      \\ =
      & \funtrans {\hconf {\stextconf {} \lloc \lstore \lv} \lloc}
    \end{array}
  \end{smathpar}
  where $\subst {} \lstore \lstore$ denotes the composed substitution
  $\subst {} {\upaire \uunite {\funtrans {\hconf \lstorepr \lv}}} {\lx_\lloc}$
  for each $\stextconf {} \lloc \lstorepr \lv$ in $\lstore$.
\end{proof}

\begin{theorem}[Full Abstraction]
  \label{thm:ul:full-abstraction}
  The embedding of $\ulang$ into $\ullang$ is fully-abstract:
  \begin{smathpar}
    \ujudg \uGamma {\ue \uctxeq \uepr} \uty

    \implies

    \ulujudg \uGamma {\ue \ulctxeq \uepr} \uty
  \end{smathpar}
\end{theorem}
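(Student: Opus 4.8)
The plan is to prove the theorem by contraposition, using the pure (functional) translation $\funtrans{-}$ of \shortref{fig:ul:trans} to convert any $\ullang$ context that separates $\ue$ and $\uepr$ into a $\ulang$ context that separates them. (The converse implication, $\ue \ulctxeq \uepr \implies \ue \uctxeq \uepr$, needs no translation and I would dispatch it in one line: every $\ulang$ context is already a $\ullang$ context, so restricting the testing contexts to the $\ulang$ ones can only coarsen the equivalence.) So for the stated direction I would assume $\ue$ and $\uepr$ are \emph{not} contextually equivalent in $\ullang$ and derive that they are not contextually equivalent in $\ulang$ either.

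Unfolding the hypothesis, there is a multi-language context $\ulctxt$ with $\ulujudg \ulemptyGamma {\ulctxt \hw \ue} \uunitty$ and $\ulujudg \ulemptyGamma {\ulctxt \hw \uepr} \uunitty$ for which exactly one of $\ulctxt \hw \ue$ and $\ulctxt \hw \uepr$ terminates. I claim $\funtrans{\ulctxt}$ is the $\ulang$ context we want. By the typing lemma for the translation (\fullref{lem:ul:trans-typing}), extended to contexts by $\funtrans{\square} \defeq \square$, $\funtrans{\ulctxt}$ is a well-typed closing $\ulang$ context with hole type $\uty$ and result type $\uunitty$ (the translation is the identity on the $\ulang$ fragments of a term and leaves the hole untouched), hence a legal testing context in the definition of $\uctxeq$. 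By compositionality (\fullref{lem:ul:trans-comp}) we have $\funtrans{\ulctxt \hw \ue} = \funtrans{\ulctxt} \hw \funtrans{\ue}$, and by the projection lemma (\fullref{lem:ul:trans-proj}), since $\ue$ already lies in the $\ulang$ subset, $\funtrans{\ue} = \ue$; thus $\funtrans{\ulctxt \hw \ue} = \funtrans{\ulctxt} \hw \ue$, and symmetrically $\funtrans{\ulctxt \hw \uepr} = \funtrans{\ulctxt} \hw \uepr$. Now I would apply termination equivalence (\fullref{thm:ul:trans-equiv}) twice: $\funtrans{\ulctxt} \hw \ue = \funtrans{\ulctxt \hw \ue}$ terminates in $\ulang$ iff $\ulctxt \hw \ue$ terminates in $\ullang$, and likewise for $\uepr$. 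Since exactly one of $\ulctxt \hw \ue$, $\ulctxt \hw \uepr$ terminates, exactly one of $\funtrans{\ulctxt} \hw \ue$, $\funtrans{\ulctxt} \hw \uepr$ terminates, so $\funtrans{\ulctxt}$ separates $\ue$ and $\uepr$ in $\ulang$; contraposing yields $\ue \uctxeq \uepr \implies \ue \ulctxeq \uepr$.

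Given the four supporting results the theorem is a short assembly, so the genuine obstacle lies not here but in \fullref{thm:ul:trans-equiv} (with \fullref{lem:ul:trans-typing} close behind), both of which I may assume. The hard part there is to establish termination equivalence as a tight step-by-step correspondence between the store-passing operational semantics of $\llang$ inside $\ullang$ and the purely functional reduction of the translation, where a store $\lstore$ is eliminated by substituting its contents into the term through the fixed mapping $\ux_\lloc$: one must check that $\ullang$ values translate to $\ulang$ values, that a subterm sits in a reducible position exactly when its image does, and that each linear-store primitive --- $\lnewe{}$, $\lfreee{}$, $\lboxe{}$, $\lunboxe{}$, and the deep-copy rules for $\lsrccopye{(\lsrcsharee \wild)}$ --- is matched by the corresponding pure reduction, so that the reduction of the translation is the translation of the reduction. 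The delicate points are that the translation collapses the $\lbangty$ modality yet must stay coherent with the multi-step deep copy of shared subterms, that it must commute with the capture-avoiding substitutions of the $\beta$-like reductions, and that the store-binding construct $\lsharee \lstore \lstorety \le$, which binds locations and carries a local store, is handled correctly; this last is where most of the care goes.
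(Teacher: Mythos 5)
Your proposal is correct and follows essentially the same route as the paper's proof: translate the distinguishing $\ullang$ context back into $\ulang$ via $\funtrans{\wild}$, then chain termination equivalence (\fullref{thm:ul:trans-equiv}), compositionality (\fullref{lem:ul:trans-comp}), and projection (\fullref{lem:ul:trans-proj}) to reduce the claim to the assumed $\ulang$-equivalence. The only difference is that you phrase the argument contrapositively where the paper argues directly, which is a purely cosmetic reformulation; your assessment that the real burden lies in \fullref{thm:ul:trans-equiv} also matches the structure of the paper's development.
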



\begin{proof}
  To show that two $\ulang$ terms $\ue, \uepr$ are contextually
  equivalent in $\ullang$, we are given a context $\ulctxt$ in
  $\ullang$ and must prove that $\ulctxt \hw \ue$, $\ulctxt \hw \ue$ are
  equi-terminating.

  From \fullref{thm:ul:trans-equiv} we know that $\ulctxt \hw \ue$ and
  $\funtrans {\ulctxt \hw \ue}$ are equi-terminating, and from
  \fullref{lem:ul:trans-comp} that $\funtrans {\ulctxt \hw \ue}$ is
  equal to ${\funtrans \ulctxt} \hw {\funtrans \ue}$, which is equal
  to ${\funtrans \ulctxt} \hw \ue$ by
  \fullref{lem:ul:trans-proj}. Similarly, $\ulctxt \hw \uepr$ and
  ${\funtrans \ulctxt} \hw \uepr$ are equi-terminating. Because
  ${\funtrans \ulctxt}$ is a context in $\ulang$, we can use our
  assumption that $\ue \uctxeq \uepr$ to conclude that
  ${\funtrans \ulctxt} \hw \ue$ and ${\funtrans \ulctxt} \hw \uepr$
  are equi-terminating.
\end{proof}

\paragraph{Stability}\label{par:stability}
Given that our proof technique relies on translating $\llang$ back
into $\ulang$, it is stable by extension of the language $\ulang$ --
any extension of $\ulang$ that preserves the reduction behavior of
closed programs, for example adding ML references, preserves the
full-abstraction result.
On the contrary, this technique is \emph{not} stable by extension of
$\llang$, and in fact the result could become false if $\llang$ was
extended with abstraction-breaking features. This explains why we
prove that the embedding of $\ulang$ into $\ullang$ is fully-abstract,
but do not prove any result on the embedding of $\llang$ into
$\ullang$: such a result would be immediately broken by considering
a larger general-purpose language, for example adding ML references.
\endgroup}

\begin{version}{\VeryShort}
\begin{theorem}[Full Abstraction]
  The embedding of $\ulang$ into $\ullang$ is fully-abstract:
  \begin{smathpar}
    \ujudg \uGamma {\ue \uctxeq \uepr} \uty

    \implies

    \ulujudg \uGamma {\ue \ulctxeq \uepr} \uty
  \end{smathpar}
\end{theorem}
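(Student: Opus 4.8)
The plan is to make precise the ``functional reading'' of linear state from \shortref{subsec:l:store} as an explicit back-translation $\funtrans{\cdot}$ that erases $\llang$ into pure $\ulang$, and then to show that this translation preserves termination, is compositional, and acts as the identity on the $\ulang$ fragment; full abstraction then falls out with essentially no extra work. Concretely I would send $\lemptyboxty$ to $\uunitty$ and $\lboxedty\lty$ to $\upairty\uunitty{\funtrans\lty}$ (a unit tag recording ``this value is a heap cell'', plus its contents), erase $(\lbangty)$ and the lump type, translate the boundary $\LUe{\cdot}$ away, turn $\lnewe{}$ and $\lfreee{}$ into $\uletunite{\cdot}\uunite$, and turn $\lboxe{}$/$\lunboxe{}$ into the corresponding rearrangement of a unit-tagged pair. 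A configuration $\hconf\lstore\le$ is then translated by substituting, for each slot of $\lstore$, the translation of its (possibly empty) contents for the distinguished ML variable $\ux_\lloc$ attached to the location $\lloc$. Fixing once and for all an injective renaming of linear variables, locations and type variables to fresh $\ulang$ ones, and assuming all bound variables distinct, makes $\funtrans{\cdot}$ well defined on types, terms, contexts, and configurations, with $\funtrans\square \defeq \square$.

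I would then establish, in this order: (i) \emph{type preservation}, that $\funtrans{\cdot}$ sends a well-typed $\ullang$ configuration at $\lty$ to a well-typed $\ulang$ term at $\funtrans\lty$ (and an $\ue:\uty$ to an $\ulang$ term still at $\uty$); (ii) \emph{compositionality}, $\funtrans{\ulctxt\hw\ue} = \funtrans\ulctxt\hw\funtrans\ue$, a one-line induction once $\funtrans\square = \square$ is fixed; (iii) \emph{projection}, that $\funtrans\ue = \ue$ whenever $\ue$ already lies in the $\ulang$ subset, since no $\llang$-specific clause fires; and (iv) \emph{termination equivalence}, that $\funtrans\ue$ reduces to a value in $\ulang$ iff $\ue$ reduces to a value in $\ullang$. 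Step (iv) is proved by a lock-step simulation: values translate to values and evaluation contexts to evaluation contexts (both by direct induction on their grammars), a subterm is in redex position on one side exactly when it is on the other, and $\funtrans{}$ of one reduction step is precisely the matching reduction step of the translation, up to the bookkeeping introduced by the unit tags. Combining (ii) and (iv) gives that $\ulctxt\hw\ue$ and $\funtrans\ulctxt\hw\funtrans\ue$ are equi-terminating.

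With these lemmas in hand the theorem is immediate. Given $\ue,\uepr$ with $\ujudg\uGamma{\ue\uctxeq\uepr}\uty$ and any $\ullang$ context $\ulctxt$ with $\ulujudg\ulemptyGamma{\ulctxt\hw\ue}\uunitty$, termination equivalence makes $\ulctxt\hw\ue$ equi-terminating with $\funtrans{\ulctxt\hw\ue}$, which by compositionality and projection equals $\funtrans\ulctxt\hw\ue$; symmetrically $\ulctxt\hw\uepr$ is equi-terminating with $\funtrans\ulctxt\hw\uepr$. Since $\funtrans\ulctxt$ is a context of $\ulang$ alone, the hypothesis $\ue\uctxeq\uepr$ yields equi-termination of $\funtrans\ulctxt\hw\ue$ and $\funtrans\ulctxt\hw\uepr$, hence of $\ulctxt\hw\ue$ and $\ulctxt\hw\uepr$; that is, $\ue\ulctxeq\uepr$.

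The main obstacle is step (iv), and within it the store and copy rules. One has to verify that each store primitive becomes a genuine $\ulang$ redex whose contractum is the translation of the $\ullang$ contractum — e.g.\ that $\funtrans{\hconf {\stextempty \lstore \lloc} {\lboxe {\lpaire \lloc \lv}}}$ $\uredexstep$-reduces to $\funtrans{\hconf {\stextconf {} \lloc \lstore \lv} \lloc}$ — which is exactly where the unit-tagged-pair encoding of $\lboxedty\lty$ pays off. The many type-directed cases of the deep-copy reduction $\lsrccopye {(\lsrcsharee \le)}$ must each be matched individually, and the $\lsharee \lstore \lstorety \le$ rule — under which $\ullang$ reduction rewrites and may even shrink the store typing, unlike ML references, which only grow the store — has to be absorbed into the substitution-based translation of configurations; all of this is equational bookkeeping on $\funtrans{}$ of configurations rather than a conceptual difficulty. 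I would finally remark that, because the technique translates $\llang$ away into $\ulang$, the result is stable under any extension of $\ulang$ that preserves the reduction behaviour of closed programs (for instance adding ML references), but not under abstraction-breaking extensions of $\llang$; this is why only the $\ulang\hookrightarrow\ullang$ direction is claimed.
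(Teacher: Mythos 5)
Your proposal follows essentially the same route as the paper: the same purely functional back-translation $\funtrans{\cdot}$ of $\llang$ into $\ulang$ (with $\lboxedty \lty$ as a unit-tagged pair and configurations translated by substituting store contents for location variables), the same four supporting lemmas (typing, compositionality, projection, termination equivalence via a lock-step simulation), and the same final argument reducing a $\ullang$ context to the $\ulang$ context $\funtrans{\ulctxt}$. The argument is correct and matches the paper's proof, including the closing observation about stability under extensions of $\ulang$ but not of $\llang$.
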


\begin{version}{\Appendices}
See the full proof argument in \fullref{appendix:full-abstraction}.
\end{version}
\end{version}

\newcommand{\SectionParametricity}{\begingroup
\section{Multi-language parametricity}
\label{sec:logrel}

We discussed the design choice of manipulating lumps $\llumpty \uty$
of any ML type, not just the type variable that motivates them. In the
presence of polymorphism, this generalization is also an important
design choice to preserve parametricity.

Let us define
$\lfont{id}^\lty(\le) \defeq \lump \lty {(\unlump \lty {\le})}
$,
and consider a polymorphic term of the form
$\uabstre \ualpha {\ULenoconf {\dots \lfont{id}^{\llumpedty \ualpha} \dots}}
$.
The (un)lumping operations on a lumped type such as
$\llumpedty \ualpha$ are just the identity: the lumped value is passed
around unchanged, so $\lfont{id}^{\llumpedty \ualpha}(\lv)$ will
reduce to $\lv$. Now, if we instantiate this polymorphic term with
a ML type $\uty$, it will reduce to a term
$\ULenoconf {\dots \lfont{id}^{\llumpedty \uty} \dots} $
whose unlumping operation is still on a lumped type, so is still exactly the
identity.

On the contrary, if we allowed lumps only on type variables, we would
have to push the lump inside $\uty$, and the (un)lumping operations
would become more complex: if $\uty$ starts with an ML product type
$\upairty \wild \wild$, it would be turned into a shared linear pair
$\lbangty {(\lopairty \wild \wild)}$ by unlumping, and back into an ML
pair by lumping. In general, $\lfont{id}^{\lty}$ may perform deep
$\eta$-expansions of lumped values. The fact that, after instantiation
of the polymorphic term, we get a monomorphic term that has
different (but $\eta$-equivalent) computational behavior would cause
meta-theoretic difficulties; this is the approach that was adopted in
previous work on multi-languages with polymorphism by
\citet*{esop-paper}, and it made some of their proofs using a logical
relations argument substantially more complicated. In the logical 
relation, polymorphism is obtained by allowing each polymorphic
variable to be replaced by two types related by an ``admissible''
relation $\urel$, and the notion of admissibility of this previous
work had to force relations to be compatible with $\eta$-expansion,
which complicates the proofs.

In contrast, our handling of lump types as turning arbitrary types
into blackboxes makes type instantiation obviously parametric. To
formally demonstrate this aspect of our design, we develop
a step-indexed logical relation (\shortref{fig:ul-lr-article}) that proves
that our multi-language satisfies a strong parametricity property that
is not disrupted by the linear sublanguage or the cross-language
boundaries. 

The logical relation is a family of relations indexed by closed
``relational types'' which extend the grammar of $\uty,\lty$, which
include a case for admissible relations ${\urel}$ that is used to
enable parametric arguments.
The step index $j$ in the definitions decreases strictly whenever
related values of a type $\uty$ or $\lty$ are defined in terms of
a non-strictly-smaller type; this happens in the definition of the
relation at recursive types $\lrelV{\lmuty \lalpha \lty}{j}$. Because
the language is non-terminating, our relation does not define an
equivalence but an approximation: two expressions $(\ueone, \ueone)$
are related in $\urelE{\uty}{j}$ if $\ueone$ approximates $\uetwo$:
if $\ueone$ reduces to a value in less than $j$ steps, then $\uetwo$
must reduce to a related value.

The definition of admissible relations $\rel{\utyone}{\utytwo}$, used to
define when $\ulang$ values of polymorphic types are related,
$\urelV{\uforallty \ualpha \uty}{}$, is completely standard, which
demonstrates that our notion of boundaries preserves simple
parametricity reasoning.

Although we have a stateful linear language, the logical relations for
the linear types have more in common with a language with explicit
closures---this is another consequence of the remark in
\fullref{subsec:l:store} that the language can also be interpreted
using a functional semantics. The relations for closed $\llang$ values
and expressions, $\lrelV{\lty}{}$ and $\lrelE{\lty}{}$, are indexed
by a type but do not depend on a store typing: the related values may
have different, non-empty store typings. This allows to relate two
programs that are equivalent but allocate different references in
different ways. Furthermore, since all state is \emph{linear}, we
don't need additional machinery to relate stores, since all the values
in a store owned by a value will be reflected in the value.  For
example, the relation for empty locations
$\lrelV{\lemptyboxty}{}$ relates any two arbitrary (empty)
locations, and the relation for non-empty locations
$\lrelV{\lboxedty{\lty}}{}$ relates possibly-distinct locations
that contain related values.

\begin{figure}
\begin{smathpar}
  \begin{array}{rcl}
    \atom{\uty} & \defeq & \setcomp{{\uv}}{\ujudg{\uempty}{\uv}{\uty}}\\
    \rel{\utyone}{\utytwo} & \defeq & \setcomp{\urel : \mathbb{N} \to \powset{\atom{\utyone}\times \atom{\utytwo}}}
                                      {\forall j \leq j'. ~ \urelat{\urel}{j'} \subset \urelat{\urel}{j}}\\ \\
  \urelV{{\urel}}{j} & \defeq & \urelat{\urel}{j}\\
  \urelV{\ufunty \utyone \utytwo}{j} & \defeq &
  \begin{stackTL}
    \setcomp{\pair{\ufune{\uxone}{(\utyone)_1}{\ueone}}{\ufune{\uxtwo}{(\utyone)_2}{\uetwo}}}{~~
      \begin{stackTL}
        \forall j' \leq j, \inurelV{\uvone}{\uvtwo}{\utyone}{j'}.~
        \inurelE{\subst{\ueone}{\uvone}{\uxone}}{\subst{\uetwo}{\uvtwo}{\uxtwo}}{\utytwo}{j'}}
      \end{stackTL}
  \end{stackTL}\\
  \urelV{\uforallty \ualpha \uty}{j} & \defeq &
  \begin{stackTL}
    \setcomp{\pair{\uabstre \ualpha \uvone}{\uabstre \ualpha \uvtwo}}{\forall \utyone,\utytwo,\urel \in \rel{\utyone}{\utytwo}.~
      \inurelV{\uvone}{\uvtwo}{\subst{\uty}{{\urel}}{\ualpha}}{j}
    }\\
  \end{stackTL}\\ \\

  \lrelV{\lunitty}{j} & \defeq &
    \{\lritem{\lemptystore}{\lunite}{\lemptystore}{\lunite}\} \\
  \lrelV{\lopairty \lty \ltypr}{j} & \defeq &
  \begin{stackTL}
    \setcomp{\lritem {\lstorejoin{\lstoreone}{\lstoreonepr}} {\lpaire \lvone \lvonepr}
                     {\lstorejoin{\lstoretwo}{\lstoretwopr}} {\lpaire \lvtwo \lvtwopr}}{ ~~
      \begin{stackTL}
            \inlrelV \lstoreone   \lvone  \lstoretwo    \lvtwo   \lty   j \wedge
            \inlrelV \lstoreonepr \lvonepr \lstoretwopr \lvtwopr \ltypr j
    }
    \end{stackTL}
  \end{stackTL}\\
  \lrelV{\lofunty{\ltypr}{\lty}}{j} & \defeq &
  \begin{stackTL}
  \setcomp{\lritem{\lstoreone}{\lfune{\lx}{\ltypr}{\leone}}
                  {\lstoretwo}{\lfune{\lx}{\ltypr}{\letwo}}}{ \\ \qquad
    \begin{stackTL}
    \forall j' \leq j, \lstoreonepr, \lstoretwopr,
    \inlrelV{\lstoreonedubpr}{\lvone}{\lstoretwodubpr}{\lvtwo}
            {\ltypr}{j'} . \\ \quad
            \begin{stackTL}
            \lstoreonepr = \lstorejoin{\lstoreone}{\lstoreonedubpr} \wedge
            \lstoretwopr = \lstorejoin{\lstoretwo}{\lstoretwodubpr} \Rightarrow 
            \inlrelE{\lstoreonepr}{\subst{\leone}{\lvone}{\lx}}
                    {\lstoretwopr}{\subst{\letwo}{\lvtwo}{\lx}}
                    {\lty}{j'}
  }
  \end{stackTL}
  \end{stackTL}
  \end{stackTL}
  \\
  \lrelV{\lmuty \lalpha \lty}{j} & \defeq &
  \begin{stackTL}
    \setcomp{\lritem{\lstoreone}{\lfolde {\lmuty \lalpha \lty} \lvone}
                    {\lstoretwo}{\lfolde {\lmuty \lalpha \lty} \lvtwo}}{ ~~
      \begin{stackTL}
        \forall j' < j .
        \inlrelV{\lstoreone}{\lvone}
                {\lstoretwo}{\lvtwo}
                {\lunfoldedty{\lalpha}{\lty}}
                {j'}
    }
      \end{stackTL}
  \end{stackTL} \\
  \lrelV{\lbangty{\lty}}{j} & \defeq &
  \begin{stackTL}
    \setcomp{\lritem{\lemptystore}{\lsharee{\lstoreone}{\lstoretyone}{\lvone}}
                    {\lemptystore}{\lsharee{\lstoretwo}{\lstoretytwo}{\lvtwo}}}{ ~~
      \begin{stackTL}
        \inlrelV{\lstoreone}{\lvone}{\lstoretwo}{\lvtwo}{\lty}{j}
      }
      \end{stackTL}
  \end{stackTL} \\
  \lrelV{\lemptyboxty}{j} & \defeq &
  \{\lritem{\stextempty {} \llocone}{\llocone}
           {\stextempty {} \lloctwo}{\lloctwo}\} \\
  \lrelV{\lboxedty \lty}{j} & \defeq &
  \begin{stackTL}
  \setcomp{\lritem{\stextconf {} \llocone \lstoreone \lvone}{\llocone}
    {\stextconf {} \lloctwo \lstoretwo \lvtwo}{\lloctwo}}{ ~~
    \begin{stackTL}
      \inlrelV{\lstoreone}{\lvone}{\lstoretwo}{\lvtwo}{\lty}{j}
    }
    \end{stackTL}
  \end{stackTL} \\

  \lrelV{\llumpty{\uty}}{j} & \defeq &
  \begin{stackTL}
  \setcomp{\lritem{\lemptystore}{\llumpe{\uvone}}{\lemptystore}{\llumpe\uvtwo}}
  {\inurelV{\uvone}{\uvtwo}{\uty}{j}}
  \end{stackTL}
  \\ \\

  \urelE{\uty}{j} & \defeq &
  \begin{stackTL}
    \setcomp{\pair{\ueone}{\uetwo}}{
      \begin{stackTL}
        \forall j' \leq j.~
        \ueone \uredexstep^{j'} \uvone \Rightarrow
        \exists \uvtwo.~ \uetwo \uredexstep^{*} \uvtwo \wedge
        \inurelV{\uvone}{\uvtwo}{\uty}{j-j'}
      }
      \end{stackTL}
  \end{stackTL}\\
  \lrelE{\lty}{j} & \defeq &
  \begin{stackTL}
    \setcomp{\lritem{\lstoreone}{\leone}{\lstoretwo}{\letwo}}{ ~~
      \begin{stackTL}
        \forall j' \leq j, \hconf \lstoreonepr \lvone .
          \hconf \lstoreone \leone \lredexstepin{j'} \hconf \lstoreonepr \lvone \Rightarrow
          \\ \quad
          \begin{stackTL}
          \exists \hconf \lstoretwopr \lvtwo .
            \hconf \lstoretwo \letwo \lredexstepstar \hconf \lstoretwopr \lvtwo ~\wedge~ 
            \inlrelV{\lstoreonepr}{\lvone}{\lstoretwopr}{\lvtwo}{\lty}{j-j'}
      }
          \end{stackTL}
      \end{stackTL}
  \end{stackTL}\\ \\
  \end{array}
\end{smathpar}
\caption{Multi-language Logical Relation (excerpt)}
\label{fig:ul-lr-article}
\end{figure}

Logical relations effectively translate global invariants of the
system into properties of type connectives. For example, consider the
reduction rule for lumped values:
%
$
  \ULe \lemptystore \lemptystorety {\lsrcsharee {\llumpe \uv}} \uredexstep \uv
$.
%
In this rule we implicitly assumed that a linear value of the shape
$\llumpe \uv$ at type $\lbangty {\llumpty \uty}$ would occur in an
empty local store. The term $\lsrcsharee {\llumpe \uv}$ desugars into
$\lsharee \lemptystore \lemptystorety {\llumpe \uv}$, but it is not
immediately obvious that this should always be the case since it is
possible to compute a value of type $\lbangty {\llumpty \uty}$ by
allocating references and using them. The intuitive reason why the
store becomes empty when a value $\llumpe \uv$ is reached is that
linear sub-terms $\le$ within $\uv$ may only occur within a language
boundary $\ULe \lstore \lstorety \le$: linear sub-terms have their own 
local store, so there are no globally visible linear locations for
$\llumpty \uv$ to refer to. This global reasoning is elegantly
expressed in a type-directed way in our logical relation by the
definition of related values at lump type, which encodes the invariant
that they always have an empty store:
\begin{mathpar}
  \lrelV{\llumpty{\uty}}{j} \defeq
  \begin{stackTL}
  \setcomp{\lritem{\lemptystore}{\llumpe{\uvone}}{\lemptystore}{\llumpe\uvtwo}}
  {\inurelV{\uvone}{\uvtwo}{\uty}{j}}
  \end{stackTL}
\end{mathpar}

From the logical relation defined on closed $\ullang$ terms and
values, we define logical approximation relations on open terms
$\uapproxjdg \ulGamma \ueone \uetwo \uty$ and
$\Lapproxjdg \ulGamma \lstoreone \leone \lstoretwo \letwo \lty$ by
asking for the open terms and values to be related under all related
environments in the standard way---see Appendix~\ref{ann:logrel} for
full details. This lets us demonstrate the Fundamental Property to validate
the construction of our logical relation, showing that all typing
rules are admissible.

\begin{theorem}[Fundamental Property]{~}
  \begin{enumerate}
  \item If $\ujudg{\ulbGamma}{\ue}{\uty}$
    then $\uapproxjdg{\ulbGamma}{\ue}{\ue}{\uty}$
  \item If $\ljudg{\lstorety}{\ulGamma}{\lstore}{\le}{\lty}$
    then $\Lapproxjdg{\ulGamma}{\lstore}{\le}{\lstore}{\le}{\lty}$
  \end{enumerate}
\end{theorem}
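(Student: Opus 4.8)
The plan is to prove parts (1) and (2) simultaneously by one mutual induction on the typing derivation --- mutual because an $\ulang$ term may embed an $\llang$ configuration through the boundary $\ULe{\lstore}{\lstorety}{\le}$ and an $\llang$ term may embed an $\ulang$ term through $\LUe\ue$ --- establishing a \emph{compatibility lemma} for each typing rule: if the open logical approximation holds for the premises, then it holds for the conclusion. Because the relation in \shortref{fig:ul-lr-article} is an asymmetric approximation rather than an equivalence, each case only has to establish the single ``the left-hand term approximates the right-hand term'' direction, so each case is one unfolding of the appropriate clause. First I would assemble the reusable infrastructure: (i) downward closure in the step index for all the value and expression relations (immediate from their definitions); (ii) a \emph{bind}/evaluation-context lemma for the $\ulang$ and $\llang$ expression relations --- if $(\leone,\letwo)$ are related at $\lty$ and, for every pair of values related at $\lty$ in appropriately combined local stores, plugging them into the continuation gives terms related at $\ltypr$, then the two redex-in-context configurations are related at $\ltypr$ (for $\llang$ this additionally has to track how the store and the store typing change across a step); and (iii) a substitution/store-splitting lemma aligned with the context-joining $\ctxjoin{\lGammaone}{\lGammatwo}$ and the union of stores, so that an environment related under $\ctxjoin{\lGammaone}{\lGammatwo}$ splits into pieces related under $\lGammaone$ and $\lGammatwo$, with bindings whose type has the form $\lbangty\lty$ copied to both sides.

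With that in place, the $\ulang$ rules are the textbook cases --- variable, unit, pairs and projections, $\lambda$ and application, injections and case analysis, fold/unfold, and the polymorphic introduction and instantiation rules, the last using $\rel{\utyone}{\utytwo}$ exactly as its definition prescribes. The core $\llang$ rules are the usual linear-logic compatibilities: the context-splitting formers (linear pair, application, let-unit, let-pair) invoke the splitting lemma, whereas case analysis reuses the same resources in both branches because at most one branch is evaluated. The $\lbangty$ rules are the instructive ones. For the introduction, $\lsrcsharee\le$ desugars to $\lsharee\lemptystore\lemptystorety\le$, evaluates $\le$ inside its local store, and packages the resulting related values into $\lsharee{}{}{}$, which is verbatim the definition of $\lrelV{\lbangty\lty}{}$. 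For copy-elimination, the per-connective reduction rules for $\lsrccopye{(\lsrcsharee\le)}$ perform a structural deep copy whose output is, at every stage, rebuilt from the same related sub-values, so relatedness of the copy at $\lty$ follows from relatedness of the shared term at $\lbangty\lty$. For the store primitives, $\lnewe{}$/$\lfreee{}$ bottom out in $\lrelV{\lemptyboxty}{}$ (which relates any two empty locations) and $\lboxe{}$/$\lunboxe{}$ in $\lrelV{\lboxedty\lty}{}$ (which relates possibly-distinct locations holding related contents); their reductions merely move an owned store fragment in or out, so the bind lemma plus these two clauses closes the cases.

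The genuinely cross-language cases are the boundaries and the interoperability primitives. For $\LUe\ue$ at $\llumpedty\uty$: evaluate $\ue$ via the $\ulang$ bind lemma to related ML values $\uvone,\uvtwo$ at $\uty$; then $\LUe\uv\lheadredexstep\llumpe\uv$ on each side, and the pair of lumped values $\llumpe{\uvone}$, $\llumpe{\uvtwo}$ (in empty stores) lies in $\lrelV{\llumpty\uty}{j}$ by definition of the lump clause. For $\ULe{\lemptystore}{\lemptystorety}{\le}$ at $\uty$ (the store is empty by inversion): evaluate the configuration via the $\llang$ bind lemma to a related value at $\llumpedty\uty$; the shape of $\lrelV{\lbangty{\llumpty\uty}}{}$ --- which bakes in exactly the empty-store invariant discussed around \fullref{subsec:l:store} --- forces that value to be $\lsharee\lemptystore\lemptystorety{\llumpe\uv}$ with $\uvone,\uvtwo$ related at $\uty$, and the reduction $\ULe{\lemptystore}{\lemptystorety}{\lsrcsharee{\llumpe\uv}}\uredexstep\uv$ delivers the result on the $\ulang$ side. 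Finally, $\lump\lty{}$ and $\unlump\lty{}$ are settled using \fullref{thm:ul:compat-value-sem} (value translations are total functions) together with an auxiliary lemma, proved by induction on the derivation of $\uty\ulcompat\lty$, that the value translation \emph{preserves the logical relation}: when $\uty\ulcompat\lty$, values related at $\uty$ are sent by $(\ulcompatrighte{})$ to values related at $\lty$, and symmetrically for $(\ulcompatlefte{})$. Since the reductions of \shortref{fig:ul:dynamic-interop-sem} are precisely those translations, the compatibility case is then immediate.

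I expect the main obstacle to be this auxiliary relation-preservation lemma and fitting it into the induction order. Its function clause $\ufunty\uty\utypr\ulcompat\lbangty{(\lofunty{\lbangty\lty}{\lbangty\ltypr})}$ wraps the function with the $\eta$-style boundary term $\lsrcsharee{\lfune{\lx}{\lbangty\lty}{\LUmpe{\ltypr}{\uappe{\ue}{\UmpLenoconf{\lty}{\lx}}}}}$, so it presupposes the boundary compatibility lemmas and the \emph{smaller-type} instances of itself; its recursive clause $\umuty\ualpha\uty\ulcompat\lbangty{(\lmuty\lalpha\lty)}$ needs the unfolded types to stay compatible --- that is \fullref{lem:ul:interop-hypot-subst} --- and must strictly decrement the step index so as to match the $j'<j$ in $\lrelV{\lmuty\lalpha\lty}{}$. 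So the overall architecture is: ordinary structural induction on typing for the Fundamental Property, with a nested, step-indexed, type-structural induction for the translation-preserves-relation lemma that calls back into the already-established boundary cases. Essentially all the delicacy sits in that interleaving; the remaining cases are routine unfoldings of the definitions.
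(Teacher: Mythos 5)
Your proposal takes essentially the same approach as the paper: the paper proves the Fundamental Property by induction on typing derivations via per-rule compatibility lemmas (``showing that all typing rules are admissible''), deferring the case analyses to an extended technical report, and that is exactly the architecture you describe. Your elaboration of the boundary, store, and lump/unlump cases---including the nested type-structural, step-indexed induction for the translation-preserves-relation lemma---is consistent with the paper's definitions and fills in detail the paper itself omits.
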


We also prove that the logical relation is sound with respect to
contextual equivalence. For this, we define contextual approximation
relations $\ucapproxjdg \ulGamma \ueone \uetwo \uty$ and
$\Lcapproxjdg \ulGamma \lstoreone \leone \lstoretwo \letwo \lty$,
by asking that $\ueone, \conf \lstoreone \leone$ terminate more often
than $\uetwo, \conf \lstoretwo \letwo$ when run under arbitrary
contexts---see Appendix~\ref{ann:logrel} for details.

\begin{theorem}[Soundness of Logical Relation with respect to
  Contextual Equivalence]
  $(\lapproxsymbol) \subset (\capproxsymbol)$.
\end{theorem}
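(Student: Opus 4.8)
\textbf{Overall plan.} I would follow the standard two-step recipe for deriving contextual soundness of a logical relation: first show that logical approximation is a \emph{precongruence} --- compatible with every context former of $\ullang$ --- and then show it is \emph{adequate} --- logically-approximating closed programs of observable type are ordered by termination; composing the two yields the stated inclusion. The relation $(\lapproxsymbol)$ is the union of the $\ulang$ judgment $\uapproxjdg \ulGamma \ueone \uetwo \uty$ and the $\llang$ judgment $\Lapproxjdg \ulGamma \lstoreone \leone \lstoretwo \letwo \lty$, and similarly $(\capproxsymbol)$, so I would establish the two inclusions separately and take their union.

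\textbf{Precongruence.} The Fundamental Property is obtained by proving, for each typing rule of $\ullang$, a \emph{compatibility lemma} for the logical relation; I would work with the \emph{binary} form of each --- for instance $\uapproxjdg \ulGamma {\ueone} {\ueone'} {\ufunty \utypr \uty}$ together with $\uapproxjdg \ulGamma {\uepr} {\uepr'} \utypr$ imply $\uapproxjdg \ulGamma {\uappe \ueone \uepr} {\uappe {\ueone'} {\uepr'}} \uty$, and analogously for pairs, sums, $\mu$-types, the linear constructors, the two boundary formers $\LUe{-}$ and $\ULe \lstore \lstorety {-}$, and the $\lump \lty {-}$ / $\unlump \lty {-}$ primitives --- which is precisely what the Appendix development establishes en route to the Fundamental Property. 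From these, a routine induction on the structure of a well-typed single-hole context $\ulctxt$ (hole expecting $\ulGamma \vdash \uty$, whole context at $\ulGamma' \vdash \utypr$, with the $\llang$ analogue for a linear hole and the mixed cases handled by the boundary-former lemmas) shows that $\uapproxjdg \ulGamma {\ue} {\uepr} \uty$ implies $\uapproxjdg {\ulGamma'} {\ulctxt \hw \ue} {\ulctxt \hw \uepr} \utypr$. The only delicate cases are the linear constructors, where the ambient linear resources must be split exactly as the compatibility lemmas dictate (recall $\llang$ contexts are \emph{linear}, single-hole, non-sharing), and the boundary formers, which mediate between the store-typing bookkeeping of $\ULe \lstore \lstorety \le$ and the store-free $\llang$ value relations.

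\textbf{Adequacy and composition.} From the definition of $\urelE \uunitty j$ I would read off that $\uapproxjdg \uemptyGamma {\ueone} {\uetwo} \uunitty$ implies that $\ueone$ terminating forces $\uetwo$ to terminate: if $\ueone \uredexstep^{j'} \uvone$, instantiate the open relation at the empty environment with step index $j'$; the expression relation then yields $\uetwo \uredexstep^{*} \uvtwo$, and the mere existence of $\uvtwo$ already witnesses termination. The identical argument with $\lrelE \lty j$ gives the claim for closed $\llang$ configurations of any type. (This is the only place where quantifying over \emph{all} step indices in the logical judgments is essential: we get to pick $j$ after seeing how long $\ueone$ runs.) To conclude, given a pair in logical approximation at $\ulGamma \vdash \uty$ and any closing observation context $\ulctxt$ with $\ulujudg \ulemptyGamma {\ulctxt \hw \square} \uunitty$, precongruence gives $\uapproxjdg \uemptyGamma {\ulctxt \hw \ue} {\ulctxt \hw \uepr} \uunitty$, adequacy gives the termination ordering, and this is by definition $\ucapproxjdg \ulGamma {\ue} {\uepr} \uty$; the $\llang$ half is symmetric, using configurations and the linear observation contexts.

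\textbf{Main obstacle.} I expect essentially all of the work to sit in the precongruence induction, and within it in the compatibility lemmas for the two boundary formers and the lump primitives: these must line up the linear store-typing invariants on both sides while the $\llang$ value relations carry only a type, and they rely on the invariant encoded by $\lrelV {\llumpty \uty} j$ that a lumped value always sits in the empty store --- the very subtlety the paper highlights for the reduction rule on $\ULe \lemptystore \lemptystorety {\lsrcsharee {\llumpe \uv}}$. Once those cases, and the resource-splitting in the linear context cases, are discharged, the remainder of the induction and the adequacy/composition steps are routine.
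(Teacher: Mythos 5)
Your proposal follows the same route as the paper: the paper's proof is exactly an induction on contexts via compatibility lemmas for every term-formation rule (precongruence), with adequacy at the observation type left implicit as the standard final step. Your elaboration of the adequacy argument from the definition of the expression relations and your identification of the boundary/lump compatibility cases as the delicate ones are consistent with, and somewhat more explicit than, the paper's own sketch.
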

\endgroup}

\Not\VeryShort{\SectionParametricity}{}

\Not\VeryShort{\section{Hybrid program examples}
\label{sec:examples}

\subsection{In-Place Transformations}

In \fullref{subsec:l:store} we proposed a program for in-place reversal
of linear lists defined by the type
$
\lapptyop {LinList} \lty
\defeq
\lmuty \lalpha
  {\losumty
    \lunitty
    {\lboxedty {(\lopairty \lty \lalpha)}}}
$.
We can also define a type of ML lists
$\uapptyop {List} \uty
\defeq
\umuty \ualpha {\usumty \uunitty {\upairty \uty \ualpha}}
$.
Note that ML lists are compatible with shared linear lists,
in the sense that
$\uapptyop {List} \uty
\ulcompatbangpar
{\lapptyop {LinList} {\llumpedty \uty}}
$.
This enables writing in-place list-manipulation functions in $\llang$,
and exposing them to beginners at a $\ulang$ type:
\begin{smathpar}
  \ufont{rev}~\ufont{xs}
  \defeq
  \UmpLenoconf
    {\lapptyop {LinList} {\llumpedty \uty}}
    {\lsrcsharee
      {(\lfont{rev\_into}
        ~{\lsrccopye {(\LUmpe {\lapptyop {LinList} {\llumpedty \uty}} {\ufont{xs}})}}
        ~\lfont{Nil})}}
\end{smathpar}

This example is arguably silly, as the allocations that are avoided by
doing an in-place traversal are paid when copying the shared list to
obtain a uniquely-owned version. A better example of list operations
that can profitably be sent on the linear side is quicksort, whose
code we give in \fullref{fig:quickort}. An ML implementation allocates
intermediary lists for each recursive call, while the surprisingly
readable $\ulang$ implementation only allocates for the first copy.

\begin{figure}
  \centering
\begin{lstlisting}
/*
partition : !(!$\lalpha$ $\multimap$ Bool) $\multimap$ LList !$\lalpha$ $\multimap$ LList !$\lalpha$ $\oplus$ LList !$\lalpha$
partition p li = partition_aux p (Nil, Nil) li
partition_aux p (yes, no) Nil = (yes, no)
partition_aux p (yes, no) (Cons l x xs) =
  let (yes, no) =
    if copy p x
    then (Cons l x yes, no)
    else (yes, Cons l x no) in
  partition_aux p (yes, no) xs

lin_quicksort : LList !$\lalpha$ $\multimap$ LList !$\lalpha$
lin_quicksort li = quicksort_aux li Nil
quicksort_aux Nil acc = acc
quicksort_aux (Cons l head li) acc =
  let p = share (fun x -> x < head) in
  let (below, above) = partition p li in
  quicksort_aux below (Cons l head (quicksort_aux above acc))*/

/!quicksort li UL(/*li*/) = UL(/*share (lin_quicksort (copy li))*/)!/
\end{lstlisting}
  \caption{Quicksort}
  \label{fig:quickort}
\end{figure}

\subsection{Typestate Protocols}

Linear types can enforce proper allocation and deallocation of
resources, and in general any automata/typestate-like protocols on
their usage by encoding the state transitions as linear
transformations. In the simple example of file-descriptor handling in
the introduction, additional safety compared to ML programming can be
obtained by exposing file-handling functions on the $\ulang$ side,
with linear types. We assumed the following API for linear file handling, which enforces a correct usage protocol:
\newcommand{\lPath}{\llumpedty{\ufont{Path}}}
\newcommand{\lString}{\llumpedty{\ufont{String}}}
\newcommand{\lHandle}{\lfont{Handle}}
\newcommand{\lEmptyHandle}{\lfont{EmptyHandle}}
\begin{smathpar}
  \begin{array}{lll}
    \lfont{open} & :
    & \lbangty{(\lofunty \lPath \lHandle)}
    \\
    \lfont{line} & :
    & \lbangty
        {(\lofunty \lHandle
          {(\losumty \lHandle {(\lopairty \lString \lHandle)})})}
    \\
    \lfont{close} & :
    & \lbangty{(\lofunty \lHandle \lunitty)}
  \end{array}
\end{smathpar}

Another interesting example of protocol usage for which linear types help
is the use of \emph{transient} versions of persistent data structures,
as popularized by Clojure. An unrestricted type
$\uapptyop {Set} \ualpha$ may represent persistent sets as balanced
trees with logarithmic operations performing
path-copying. A $\lfont{transient}$ call returns a mutable version of
the structure that supports efficient batch in-place updates, before
a $\lfont{persistent}$ call freezes this transient structure back into
a persistent tree. To preserve a purely functional semantics, we must
enforce that the intermediate transient value is uniquely owned. We
can do this by using the linear types for the transient API:

\begin{small}
\begin{lstlisting}
/!type Set $\ualpha$
val add : Set $\ualpha$ $\ufunty{}{}$ $\ualpha$ $\ufunty{}{}$ Set $\ualpha$ (* path copy *)
...!/
/*type MutSet $\lalpha$
val add: !(MutSet $\lalpha$ $\lofunty{}{}$ $\lalpha$ $\lofunty{}{}$ MutSet $\lalpha$) (* in-place update *)
...
val transient : !(![/!Set $\ualpha$!/] $\lofunty{}{}$ MutSet ![/!$\ualpha$!/])
val persistent : !(MutSet ![/!$\ualpha$!/] $\lofunty{}{}$ ![/!Set $\ualpha$!/])*/
\end{lstlisting}
\end{small}
}{}

\begin{version}{\False}
\section{Implementation}
\label{sec:implementation}

We have a prototype implementation for $\ullang$. In this
implementation, instead of $\ulang$ we use the full OCaml language; we
implemented the linear language $\llang$ and wrote a compiler from
$\llang$ to (unsafe) OCaml code using unsafe in-place mutation.
Here is what in-place reversal looks like in the syntax supported by
our prototype: 
\begin{small}
\begin{lstlisting}
/!/*(%%L
  type llist 'a = Nil | Cons of box1 ('a * llist 'a)
  let rec rev_append = fun li : llist 'a -> fun acc : llist 'a -> match li with
    | Nil -> acc
    | Cons (x, xs)@l -> rev_append xs (Cons (x, acc)@l)
)*/
type 'a list = Nil | Cons of ('a * 'a list)
let rev (li : 'a list) = /*(%L share (rev_append (copy /!(%U li :> /*!(llist 'a)*/)!/) Nil))*/!/
\end{lstlisting}
\end{small}

The syntax for the boundaries is \texttt{(\%L ..)} and \texttt{(\%U
  ..)} in expressions and patterns, and \texttt{(\%\%L ..)} and
\texttt{(\%\%U ..)} for sequences of toplevel declarations. The
U parts accept the full grammar of the OCaml programming language
(version 4.04.0), and use the OCaml implementation for type-checking
and compilation. The L parts use our own parser and type-checker that
enforces the linear discipline; in particular, we have not implemented
type inference, so function parameters are fully specified, and
U boundaries within L terms come with an annotation
(\texttt{:> !(llist 'a)} in this example) indicating at which L terms
they should be unlumped.

Finally, \texttt{(x, xs)@l} is syntactic sugar for boxing and
unboxing, available in both $\lsymbol$ expressions and patterns. In an
expression, it is equivalent to \texttt{box (l, (x, xs))}. In
a pattern, it unboxes the reference \texttt{l} and matches its
contents with the pattern \texttt{(x, xs)}.

As we previously pointed out, our proof technique for
\fullref{thm:ul:full-abstraction} is stable by extension of the
general-purpose language, as long as the extended language keeps
reducing closed $\ulang$ programs in the same way---we admit that
this is the case for OCaml. This means that using OCaml as the
general-purpose language does not endanger the full-abstraction
result: we have formally established that $\llang$ does not leak into
OCaml's abstractions. OCaml programmers can now use our prototype to
safely add resource control or in-place update to their programs.
\end{version}

\begin{version}{\Not\VeryShort}
\section{Conclusion}

In our proposed multi-language design, a simple linear type system
mirroring the standard rules of intuitionistic linear logic can be
equipped with linear state and usefully complement a general-purpose
functional ML language, without breaking equational reasoning or
parametricity---and without requiring a significantly more complex
meta-theory.

Fine-grained language boundaries allow interesting programming
patterns to emerge, and full abstraction provides a novel rigorous
specification of what it means for multi-language design to avoid
\emph{abstraction leaks} from advanced features into the
general-purpose or beginner-friendly languages.

\subsection{Related Work}
\end{version}

\begin{version}{\VeryShort}
\section{Conclusion and Related Work}
\end{version}

Having a stack of usable, interoperable languages, extensions or
dialects is at the forefront of the Racket approach to programming 
environments, in particular for
teaching~\citep*{felleisen2004teachscheme}. 

Our multi-language semantics builds on the seminal work 
by~\citet*{matthews2009operational}, who gave a formal semantics of 
interoperability between a dynamically and a statically typed language.
Others have followed the Matthews-Findler 
approach of designing multi-language systems with fine-grained
boundaries---for instance, formalizing interoperability between a simply 
and dependently typed language~\citep*{osera12}; between a functional
and typed assembly language~\citep*{patterson17:funtal}; between an
ML-like and an affinely typed language, where linearity is enforced at
runtime on the ML side using stateful contracts~\citep*{tov10}; and between
the source and target languages of compilation to specify compiler
correctness~\citep*{esop-paper}.  However, all these papers address
only the question of soundness of the multi-language; we propose
a formal treatment of \emph{usability} and absence of abstraction
leaks.  

The only work to establish that a language embeds into a
multi-language in a fully abstract way is the work on fully abstract
compilation by~\citet*{ahmed11:epcps} and~\citet*{new16:facue} who
show that their compiler's source language embeds into their source-target
multi-language in a fully abstract way.  But the focus of this work
was on fully abstract compilation, not on usability of user-facing
languages. 

The Eco project~\citep*{eco} is studying multi-language systems where
user-exposed languages are combined in a very fine-grained way; it is
closely related in that it studies the user experience in
a multi-language system. The choice of an existing dynamic language
creates delicate interoperability issues (conflicting variable scoping
rules, etc.) as well as performance challenges. We propose a different
approach, to design new multi-languages from scratch with
interoperability in mind to avoid legacy obstacles.




We are not aware of existing systems exploiting the simple idea of
using promotion to capture uniquely-owned state and dereliction to
copy it---common formulations would rather perform copies on the
contraction rule.

The general idea that linear types can permit reuse of unused
allocated cells is not new. In \citet*{wadler}, a system is proposed
with both linear and non-linear types to attack precisely this
problem. It is however more distant from standard linear logic and
somewhat ad-hoc; for example, there is no way to permanently turn
a uniquely-owned value into a shared value, it provides instead
a local \emph{borrowing} construction that comes with ad-hoc
restrictions necessary for safety.
(The inability to \emph{give up} unique ownership, which is essential
in our list-programming examples, seems to also be missing from Rust,
where one would need to perform a costly operation of traversing the
graph of the value to turn all pointers into \texttt{Arc} nodes.)

The RAML project~\citep*{HAH12} also combines linear logic and memory
reuse: its \emph{destructive match} operator will implicitly reuse
consumed cells in new allocations occurring within the match
body. Multi-languages give us the option to explore more explicit,
flexible representations of those low-level concern, without imposing
the complexity to all programmers.

A recent related work is the Cogent language~\citep*{cogent}, in which
linear state is also viewed as both functional and imperative -- the
latter view enabling memory reuse. The language design is
interestingly reversed: in Cogent, the linear layer is the simple
language that everyone uses, and the non-linear language is a complex
but powerful language that is used when one really has to, named C.


Our linear language $\llang$ is sensibly simpler, and in several ways
less expressive, than advanced programming languages based on linear
logic~\citep*{alms}, separation logic~\citep*{mezzo}, fine-grained
permissions~\citep*{plaid}: it is not designed to stand on its own,
but to serve as a useful side-kick to a functional language, allowing
safer resource handling.

One major simplification of our design compared to more advanced
linear or separation-logic-based languages is that we do not separate
physical locations from the logical capability/permission to access
them (e.g., as in~\citet*{ahmed07:L3}). This restricts expressiveness
in well-understood ways~\citep*{adoption-focus}: shared values cannot
point to linear values.

Alms~\citep*{alms}, Quill~\citep*{morris16} and Linear Haskell
\citep*{linear-haskell} add linear types to a functional language, trying
hard not to lose desirable usability property, such as type inference
or the genericity of polymorphic higher-order functions. This is very
challenging; for example, Linear Haskell gives up on principality of
inference\footnote{Thanks to Stephen Dolan for pointing out that
  $\lambda f. \lambda x.\,f~x$ has several incompatible
Linear Haskell types.}. Our multi-language design side-steps this
issue as the general-purpose language remains unchanged. Language
boundaries are more rigid than an ideal no-compromise language, as
they force users to preserve the distinction between the
general-purpose and the advanced features; it is precisely this
compromise that gives a design of reduced complexity.

Finally, on the side of the semantics, our system is related to
LNL~\citep*{LNL}, a calculus for linear logic that, in a sense, is
itself built as a multi-language system where (non-duplicable) linear
types and (duplicable) intuitionistic types interact through
a boundary. It is not surprising that our design contains an
instance of this adjunction: for any $\lty$ there is a unique $\uty$
such that $\uty \ulcompat \lbangty \lty$, and converting a $\lty$
value to this $\uty$ and back gives a $\lbangty \lty$ and is provably
equivalent, by boundary cancellation, to just using $\lsrcsharee {}$.


\subsection*{Acknowledgments}

We thank our anonymous reviewers for their feedback, as well as
Neelakantan Krishnaswami, François Pottier, Jennifer Paykin, Sylvie
Boldo and Simon Peyton-Jones for our discussions on this work.

This work was supported in part by the National Science Foundation
under grants CCF-1422133 and CCF-1453796, and the European Research
Council under ERC Starting Grant SECOMP (715753). Any opinions,
findings, and conclusions expressed in this material are those of the
authors and do not necessarily reflect the views of our funding
agencies.

\newpage
\bibliographystyle{plainnat}
\bibliography{article,amal}

\begin{version}{\Appendices}
\newpage
\appendix

\begin{version}{\VeryShort}
\section{Internal \texorpdfstring{$\llang$}{L} Syntax and Typing}
\label{appendix:internal-linear-language}

\SectionInternalLL
\end{version}

\begin{version}{\VeryShort}
\section{Reduction of Internal Terms}
\label{appendix:internal-linear-reduction}

\SectionInternalLLReduction
\end{version}

\begin{version}{\VeryShort}
\section{Interoperability: Static Semantics}
\label{appendix:interop-static}

\SectionInteropStatic
\end{version}

\begin{version}{\VeryShort}
\section{Interoperability: Dynamic Semantics}
\label{appendix:interop-dynamic}

\SectionInteropDynamic
\end{version}

\begin{version}{\VeryShort}
\section{Full Abstraction from \texorpdfstring{$\ulang$}{U}
  into \texorpdfstring{$\ullang$}{UL}}
\label{appendix:full-abstraction}

\SectionFullAbstraction
\end{version}

\VeryShort{\SectionParametricity}{}

\VeryShort{}{}

\section{Logical relation}
\label{ann:logrel}

To define the logical relation in a precise way, we introduce a new
grammar of ``relational types'' $\urty,\lrty$, that simply extend the
grammar of $\uty,\lty$ to include a case for a relation on $\ulang$
types $\triple{\urel}{\utyone}{\utytwo}$.

Introducing this lightweight syntax for relations here is a middle way
between definitions that use an explicit relational substitution
(cluttering the relation with yet another index) and a full-blown
logic for parametricity as in \citet{plotkin1993logic}

\begin{figure}
  \begin{displaymath}
    \begin{array}{crl}
      \urty
      & \bnfdef
      & \ldots \text{all cases for $\uty$, but using $\urty,\lrty$ recursively} \bnfalt \triple{\urel}{\utyone}{\utytwo}
      \\[3pt]

      \lrty
      & \bnfdef
      & \text{all cases for $\lty$, but using $\urty,\lrty$ recursively}
    \end{array}
  \end{displaymath}
  \caption{Relation Type Syntax}
  \label{fig:reltype:syntax}
\end{figure}

Every $\urty$ has two associated types, the types of terms that it
relates, which we denote $\ureldomone{\urty}, \ureldomtwo{\urty}$. It is
defined as follows:

\begin{smathpar}
  \begin{array}{rcl}
  \ureldomone{\triple{\urel}{\utyone}{\utytwo}} & \defeq & \utyone\\
  \ureldomtwo{\triple{\urel}{\utyone}{\utytwo}} & \defeq & \utytwo\\
  \ureldomi{\ualpha} & \defeq & \ualpha \\
  \ureldomi{\upairty \urtyone \urtytwo} & \defeq & \upairty{\ureldomi{\urtyone}}{\ureldomi{\urtytwo}}\\
  \ureldomi{\uunitty} & \defeq & {\uunitty}\\
  \ureldomi{\ufunty \urtyone \urtytwo} & \defeq & {\ufunty{\ureldomi{\urtyone}}{\ureldomi{\urtytwo}}}\\
  \ureldomi{\usumty \urtyone \urtytwo} & \defeq & {\usumty {\ureldomi{\urtyone}} {\ureldomi{\urtytwo}}}\\
  \ureldomi{\umuty \ualpha \urty} & \defeq & {\umuty \ualpha {\ureldomi{\urty}}}\\
  \ureldomi{\uforallty \ualpha \urty} & \defeq & {\uforallty \ualpha {\ureldomi{\urty}}}\\
  \end{array}

  \begin{array}{rcl}
  \ureldomi{\lopairty \lrtyone \lrtytwo} & \defeq & {\lopairty \lrtyone \lrtytwo}\\
  \ureldomi{\lunitty} & \defeq & {\lunitty}\\
  \ureldomi{\lofunty \lrtyone \lrtytwo} & \defeq & {\lofunty {\ureldomi{\lrtyone}} {\ureldomi{\lrtytwo}}}\\
  \ureldomi{\losumty \lrtyone \lrtytwo} & \defeq & {\losumty {\ureldomi{\lrtyone}} {\ureldomi{\lrtytwo}}}\\
  \ureldomi{\lmuty \lalpha \lrty} & \defeq & {\lmuty \lalpha {\ureldomi{\lrty}}}\\
  \ureldomi{\lalpha} & \defeq & {\lalpha}\\
  \ureldomi{\lbangty \lrty} & \defeq & {\lbangty {\ureldomi{\lrty}}}\\
  \ureldomi{\lboxedty \lrty} & \defeq & {\lboxedty {\ureldomi{\lrty}}}\\
  \ureldomi{\lemptyboxty} & \defeq & {\lemptyboxty}\\
  \end{array}
\end{smathpar}

First, we define when closed values are related at each type, indexing
by a natural number to break the circularity of recursive types. The
relations are defined by nested induction on $j,\urty,\lrty$, any time
a bigger type is used in a definition, the step-index $j$ is
decremented.

The definition of $\urelE{\uty}{j}$ shows that this is an assymetric
relation capturing a notion of \emph{approximation}, not equivalence.

\begin{figure}
  \begin{smathpar}
    \begin{array}{rcl}
    \atom{\uty} & \defeq & \setcomp{{\uv}}{\ujudg{\uempty}{\uv}{\uty}}\\
    \rel{\utyone}{\utytwo} & \defeq & \setcomp{\urel : \mathbb{N} \to \powset{\atom{\utyone}\times \atom{\utytwo}}}
                                      {\forall j \leq j'. ~ \urelat{\urel}{j'} \subset \urelat{\urel}{j}}\\ \\

  \urelV{\triple{\urel}{\utyone}{\utytwo}}{j} & \defeq & \urelat{\urel}{j}\\
  \urelV{\uunitty}{j} & \defeq & \{{\pair{\uunite}{\uunite}}\}\\
  \urelV{\upairty \urty \urtypr}{j} & \defeq &
    \setcomp{\pair{\upaire{\uvone}{\uvonepr}}{\upaire{\uvtwo}{\uvtwopr}}}
            {\inurelV{\uvone}{\uvtwo}{\urty}{j} \wedge
             \inurelV{\uvone}{\uvtwo}{\urty}{j}}\\
  \urelV{\usumty \urtyone \urtytwo}{j} & \defeq &
    \setcomp{\pair{\usume{i}{\uvone}}{\usume{i}{\uvtwo}}}
            {\inurelV{\uvone}{\uvtwo}{\urtyi}{j}}\\
  \urelV{\umuty \ualpha \urty}{j} & \defeq &
    \setcomp{\pair{\ufolde{({\umuty \ualpha \urty})_1}{\uvone}}{\ufolde{({\umuty \ualpha \urty})_2}{\uvone}}}{\forall j' < j.~ \inurelV{\uvone}{\uvtwo}{\uunfoldedty{\ualpha}{\urty}}{j'}}\\
  \urelV{\ufunty \urtyone \urtytwo}{j} & \defeq &
  \begin{stackTL}
    \setcomp{\pair{\ufune{\uxone}{(\urtyone)_1}{\ueone}}{\ufune{\uxtwo}{(\urtyone)_2}{\uetwo}}}{\\\qquad
      \begin{stackTL}
        \forall j' \leq j, \inurelV{\uvone}{\uvtwo}{\urtyone}{j'}.~
        \inurelE{\subst{\ueone}{\uvone}{\uxone}}{\subst{\uetwo}{\uvtwo}{\uxtwo}}{\urtytwo}{j'}}
      \end{stackTL}
  \end{stackTL}\\
  \urelV{\uforallty \ualpha \urty}{j} & \defeq &
  \begin{stackTL}
    \setcomp{\pair{\uabstre \ualpha \uvone}{\uabstre \ualpha \uvtwo}}{\forall \utyone,\utytwo,\urel \in \rel{\utyone}{\utytwo}.~
      \inurelV{\uvone}{\uvtwo}{\subst{\urty}{\triple{\urel}{\utyone}{\utytwo}}{\ualpha}}{j}
    }\\
  \end{stackTL}\\
  \lrelV{\lunitty}{j} & \defeq &
    \{\lritem{\lemptystore}{\lunite}{\lemptystore}{\lunite}\} \\
  \lrelV{\lopairty \lrty \lrtypr}{j} & \defeq &
  \begin{stackTL}
    \setcomp{\lritem {\lstorejoin{\lstoreone}{\lstoreonepr}} {\lpaire \lvone \lvonepr}
                     {\lstorejoin{\lstoretwo}{\lstoretwopr}} {\lpaire \lvtwo \lvtwopr}}{ \\ \qquad
      \begin{stackTL}
            \inlrelV \lstoreone   \lvone  \lstoretwo    \lvtwo   \lrty   j \wedge
            \inlrelV \lstoreonepr \lvonepr \lstoretwopr \lvtwopr \lrtypr j
    }
    \end{stackTL}
  \end{stackTL}\\
  \lrelV{\losumty{\lrtyone}{\lrtytwo}}{j} & \defeq &
  \begin{stackTL}
    \setcomp{\lritem{\lstoreone}{\lsume i \lvone}{\lstoretwo}{\lsume i \lvtwo}}{ \\ \qquad
      \inlrelV{\lstoreone}{\lvone}{\lstoretwo}{\lvtwo}{\lrtyi}{j}
    }
  \end{stackTL}\\
  \lrelV{\lmuty \lalpha \lrty}{j} & \defeq &
  \begin{stackTL}
    \setcomp{\lritem{\lstoreone}{\lfolde {\lmuty \lalpha \lrty} \lvone}
                    {\lstoretwo}{\lfolde {\lmuty \lalpha \lrty} \lvtwo}}{ \\ \qquad
      \begin{stackTL}
        \forall j' < j .
        \inlrelV{\lstoreone}{\lvone}
                {\lstoretwo}{\lvtwo}
                {\lunfoldedty{\lalpha}{\lrty}}
                {j'}
    }
      \end{stackTL}
  \end{stackTL} \\
  \lrelV{\lofunty{\lrtypr}{\lrty}}{j} & \defeq &
  \begin{stackTL}
  \setcomp{\lritem{\lstoreone}{\lfune{\lx}{\lrtypr}{\leone}}
                  {\lstoretwo}{\lfune{\lx}{\lrtypr}{\letwo}}}{ \\ \qquad
    \begin{stackTL}
    \forall j' \leq j, \lstoreonepr, \lstoretwopr,
    \inlrelV{\lstoreonedubpr}{\lvone}{\lstoretwodubpr}{\lvtwo}
            {\lrtypr}{j'} . \\ \quad
            \begin{stackTL}
            \lstoreonepr = \lstorejoin{\lstoreone}{\lstoreonedubpr} \wedge
            \lstoretwopr = \lstorejoin{\lstoretwo}{\lstoretwodubpr} \Rightarrow \\
            \inlrelE{\lstoreonepr}{\subst{\leone}{\lvone}{\lx}}
                    {\lstoretwopr}{\subst{\letwo}{\lvtwo}{\lx}}
                    {\lrty}{j'}
  }
  \end{stackTL}
  \end{stackTL}
  \end{stackTL}
  \\
  \lrelV{\lbangty{\lrty}}{j} & \defeq &
  \begin{stackTL}
    \setcomp{\lritem{\lemptystore}{\lsharee{\lstoreone}{\lstoretyone}{\lvone}}
                    {\lemptystore}{\lsharee{\lstoretwo}{\lstoretytwo}{\lvtwo}}}{ \\ \qquad
      \begin{stackTL}
        \inlrelV{\lstoreone}{\lvone}{\lstoretwo}{\lvtwo}{\lrty}{j}
      }
      \end{stackTL}
  \end{stackTL} \\
  \lrelV{\lemptyboxty}{j} & \defeq &
  \{\lritem{\stextempty {} \llocone}{\llocone}
           {\stextempty {} \lloctwo}{\lloctwo}\} \\
  \lrelV{\lboxedty \lrty}{j} & \defeq &
  \begin{stackTL}
  \setcomp{\lritem{\stextconf {} \llocone \lstoreone \lvone}{\llocone}
    {\stextconf {} \lloctwo \lstoretwo \lvtwo}{\lloctwo}}{ \\ \qquad
    \begin{stackTL}
      \inlrelV{\lstoreone}{\lvone}{\lstoretwo}{\lvtwo}{\lrty}{j}
    }
    \end{stackTL}
  \end{stackTL} \\

  \lrelV{\llumpty{\urty}}{j} & \defeq &
  \begin{stackTL}
  \setcomp{\lritem{\lemptystore}{\llumpe{\uvone}}{\lemptystore}{\llumpe\uvtwo}}
  {\inurelV{\uvone}{\uvtwo}{\urty}{j}}
  \end{stackTL}
  \\ \\

  \urelE{\urty}{j} & \defeq &
  \begin{stackTL}
    \setcomp{\pair{\ueone}{\uetwo}}{
      \begin{stackTL}
        \forall j' \leq j.~
        \ueone \uredexstep^{j'} \uvone \Rightarrow\\ \quad
        \exists \uvtwo.~ \uetwo \uredexstepstar \uvtwo \wedge
        \inurelV{\uvone}{\uvtwo}{\urty}{j-j'}
      }
      \end{stackTL}
  \end{stackTL} \\
  \lrelE{\lrty}{j} & \defeq &
  \begin{stackTL}
    \setcomp{\lritem{\lstoreone}{\leone}{\lstoretwo}{\letwo}}{ \\ \qquad
      \begin{stackTL}
        \forall j' \leq j, \hconf \lstoreonepr \lvone .
          \hconf \lstoreone \leone \lredexstepin{j'} \hconf \lstoreonepr \lvone \Rightarrow
          \\ \quad
          \begin{stackTL}
          \exists \hconf \lstoretwopr \lvtwo .
            \hconf \lstoretwo \letwo \lredexstepstar \hconf \lstoretwopr \lvtwo \wedge \\ \quad
            \inlrelV{\lstoreonepr}{\lvone}{\lstoretwopr}{\lvtwo}{\lrty}{j-j'}
      }
          \end{stackTL}
      \end{stackTL}
  \end{stackTL}
    \end{array}
  \end{smathpar}
  \caption{Logical Approximation for Closed Terms}
\end{figure}

Next, we extend the relations to open terms by defining open terms to
be related when they are related when closed by related substitutions.

\begin{figure}
  \begin{smathpar}
  \begin{array}{rcl}
  \lrelG{\lemptyGamma}{j} & \defeq & \{\greltriple{\lemptystore}{\lemptystore}{\lemptysubsts}\}\\

  \lrelG{\ulGamma, \var \lx \lty}{j} & \defeq &
  \begin{stackTL}
    \setcomp{\greltriple{\lstorejoin{\lstoreone}{\lstoreonepr}}
                    {\lstorejoin{\lstoretwo}{\lstoretwopr}}
                    {\mapext{\ulsubsts}{\lx}{\pair{\lvone}{\lvtwo}}}}
    {\\ \qquad
        \inlrelG{\lstoreone}{\lstoretwo}{\ulsubsts}{\ulGamma}{j} \wedge
        \inlrelV{\lstoreonepr}{\lvone}{\lstoretwopr}{\lvtwo}{{\lappsubstsrel{\ulsubsts}{\lty}}}{j}
    }
  \end{stackTL}\\
  \lrelG{\ulGamma, \var{\ux}{\uty}}{j} & \defeq &
    \begin{stackTL}
      \setcomp{\greltriple{\lstoreone}{\lstoretwo}{\mapext{\ulsubsts}{\ux}{\pair{\uvone}{\uvtwo}}}
      }{\\\qquad
        \inlrelG{\lstoreone}{\lstoretwo}{\ulsubsts}{\ulGamma}{j} \wedge
        \inurelV{\uvone}{\uvtwo}{\lappsubstsrel{\ulsubsts}{\uty}}{j}
      }
    \end{stackTL} \\
  \lrelG{\ulGamma, \ualpha}{j} & \defeq &
    \begin{stackTL}
      \setcomp{\greltriple{\lstoreone}{\lstoretwo}{\mapext{\ulsubsts}{\ualpha}{\triple{\urel}{\utyone}{\utytwo}}}
      }{\\\qquad
        {{\urel}} \in \rel{\utyone}{\utytwo}\wedge
        \inlrelG{\lstoreone}{\lstoretwo}{\ulsubsts}{\ulGamma}{j}
      }
    \end{stackTL}
  \end{array}
  \end{smathpar}
  \begin{smathpar}
    \uapproxjdg{\ulbGamma}{\ueone}{\uetwo}{\uty}\defeq
    \begin{stackTL}
      \forall j \geq 0,
      \inlrelG{\lemptystore}{\lemptystore}{\ulsubsts}{\ulbGamma}{j}.
      \inurelE{\lappsubstsone{\ulsubsts}{\ueone}}
      {\lappsubststwo{\ulsubsts}{\uetwo}}
      {\lappsubstsrel{\ulsubsts}{\uty}}{j}
    \end{stackTL}\\

    \uvapproxjdg{\ulbGamma}{\uvone}{\uvtwo}{\uty}\defeq
    \begin{stackTL}
      \forall j \geq 0,
      \inlrelG{\lemptystore}{\lemptystore}{\ulsubsts}{\ulbGamma}{j}.
      \inurelV{\lappsubstsone{\ulsubsts}{\uvone}}
      {\lappsubststwo{\ulsubsts}{\uvtwo}}
      {\lappsubstsrel{\ulsubsts}{\uty}}{j}
    \end{stackTL}\\

    \Lapproxjdg{\ulGamma}{\lstoreone}{\leone}{\lstoretwo}{\letwo}{\lty} \defeq
    \begin{stackTL}
      \forall j \geq 0,
      \inlrelG{\lstoreonepr}{\lstoretwopr}{\ulsubsts}{\ulGamma}{j} . \\
      \quad\begin{stackTL}
      \inlrelE{ \lstorejoin \lstoreone \lstoreonepr}{\lappsubstsone{\lsubsts}{\leone}}
              {\lstorejoin \lstoretwo \lstoretwopr}{\lappsubststwo{\lsubsts}{\letwo}}
              {\lappsubstsrel{\lsubsts}{\lty}}{j}
      \end{stackTL}
    \end{stackTL}
  \end{smathpar}
  \caption{Logical Approximation for Open Terms}
\end{figure}

\begin{figure}
  \begin{smathpar}
    \ucapproxjdg{\ulbGamma}{\ueone}{\uetwo}{\uty} \defeq \forall \uctxt.~ \ujudg{\uempty}{\uctxt\hw{\ueone}}{\uunitty} \wedge \ujudg{\uempty}{\uctxt\hw{\uetwo}}{\uunitty} \wedge \uctxt\hw{\ueone} \uredexstepstar \uunite \implies \uctxt\hw{\uetwo} \uredexstepstar \uunite\\

    \Lcapproxjdg{\ulGamma}{\lstoreone}{\leone}{\lstoretwo}{\letwo}{\lty}
    \defeq \forall \uctxt.~
    \ujudg{\uempty}{\uctxt\hw{\hconf{\lstoreone}{\leone}}}{\uunitty}
    \wedge
    \ujudg{\uempty}{\uctxt\hw{\ueone}{\hconf{\lstoretwo}{\letwo}}}{\uunitty}
    \wedge \uctxt\hw{\hconf{\lstoreone}{\leone}} \uredexstepstar
    \uunite \implies \uctxt\hw{\hconf{\lstoretwo}{\letwo}}
    \uredexstepstar \uunite\\
  \end{smathpar}
  \caption{Contextual Approximation}
\end{figure}

The Fundamental Property is the key to proving parametricity results.

\begin{lemma}[Fundamental Property]
  \label{proofs:fundamental}
  {~}
  \begin{enumerate}
  \item If $\ujudg{\ulbGamma}{\uv}{\uty}$ then $\uvapproxjdg{\ulbGamma}{\uv}{\uv}{\uty}$
  \item If $\ujudg{\ulbGamma}{\ue}{\uty}$ then $\uapproxjdg{\ulbGamma}{\ue}{\ue}{\uty}$
  \item If $\ljudg{\lstorety}{\ulGamma}{\lstore}{\le}{\lty}$ then $\Lapproxjdg{\ulbGamma}{\lstore}{\le}{\lstore}{\le}{\lty}$
  \end{enumerate}
\end{lemma}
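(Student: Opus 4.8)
The plan is to prove the three statements simultaneously by induction on the respective typing derivations, reducing each typing rule to a \emph{compatibility lemma} stating that the appropriate logical relation is closed under the corresponding term-former; the step index is universally quantified from the outside, so the induction is on the derivation, not on $j$. Before attacking the rules I would set up the usual step-indexed scaffolding: downward-closure of every value and expression relation in $j$ (immediate from the definitions, but it must also be propagated to the open-term relations $\lrelG{\ulGamma}{j}$); closure of $\urelE{\uty}{j}$ and $\lrelE{\lty}{j}$ under reduction in both directions, so that a bounded number of administrative reduction steps can be absorbed at the cost of decrementing $j$; and a \emph{bind} lemma for each language: if $\ueone,\uetwo$ are related in $\urelE{\uty}{j}$ and, for every $j'\le j$ and every pair of values related in $\urelV{\uty}{j'}$, plugging them into ML evaluation contexts $E_1,E_2$ yields terms related in $\urelE{\utypr}{j'}$, then $E_1[\ueone]$ and $E_2[\uetwo]$ are related in $\urelE{\utypr}{j}$ — and symmetrically for linear reduction contexts $\lectxt$, where the context and its hole own disjoint store fragments whose union is the ambient store. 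I would also prove the standard substitution/composition lemma for the relational interpretation — related values at $\subst{\uty}{\triple{\urel}{\utyone}{\utytwo}}{\ualpha}$ coincide with related values at $\uty$ once $\ualpha$ is interpreted by $\urel$ — which is what makes the $\umuty{\ualpha}{\uty}$, $\uforallty{\ualpha}{\uty}$, fold/unfold and instantiation cases go through.

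The $\ulang$ cases are the textbook ones. Introduction forms are discharged by unfolding the value relation and invoking induction hypotheses, using downward-closure to line up step indices; the $\forall$-case additionally quantifies over an arbitrary admissible $\urel\in\rel{\utyone}{\utytwo}$ and uses the substitution lemma. Elimination forms are handled by the bind lemma applied to the evaluated subterm(s) followed by a single head reduction whose reduct is related by an induction hypothesis. The one non-routine $\ulang$ rule is the boundary $\ULe{\lstore}{\lstorety}{\le}$: induction hypothesis~(3) gives that the two linear configurations are related in $\lrelE{\llumpedty{\uty}}{j}$, and since $\llumpedty{\uty}$ is a $\lbangty$-type (hence, by \fullref{lem:l:inversion}, inhabited only at an empty store), unfolding $\lrelV{\lbangty{\lty}}{j}$ and $\lrelV{\llumpty{\uty}}{j}$ exhibits the reducts as lumped ML values $\llumpe{\uvone},\llumpe{\uvtwo}$ with $\uvone,\uvtwo$ related at $\uty$, so the $\ULe{}{}{}$-reduction hands back exactly these related values.

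The $\llang$ cases carry the real weight. The key auxiliary result is a \emph{substitution-splitting lemma}: a related substitution in $\lrelG{\ctxjoin{\lGammaone}{\lGammatwo}}{j}$ carrying a store fragment $\lstore$ decomposes into related substitutions in $\lrelG{\lGammaone}{j}$ and $\lrelG{\lGammatwo}{j}$ with disjoint store fragments whose join is $\lstore$. This holds because a non-duplicable linear variable occurs on exactly one side of $\ctxjoin{\lGammaone}{\lGammatwo}$ and takes its owned store fragment with it, whereas every variable of a $\lbangty$-type is, by \fullref{lem:l:inversion}, bound to a value living in the empty store and so may be shared freely between the two sub-substitutions (reassociating the joins uses \fullref{lem:l:join-assoc-commut}). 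With this lemma, the context-splitting rules (linear pair, let-pair, application, let-unit, case) reduce to applying the two induction hypotheses to the two fragments and recombining with the linear bind lemma; the linear-sum, recursive-type and linear-function rules are direct. For the store primitives I unfold the relevant value relations: $\lnewe{}$ and $\lfreee{}$ move between $\lunite$ and a pair of fresh related empty locations witnessing $\lrelV{\lemptyboxty}{j}$, while $\lboxe{}$ and $\lunboxe{}$ move between a (related empty location, related value) pair and a related full location at $\lrelV{\lboxedty{\lty}}{j}$, a direct definitional check. The $\lsharee{\lstore}{\lstorety}{\le}$ construct is handled by $\lrelV{\lbangty{\lty}}{j}$ directly; the copy rule $\lsrccopye{(\lsrcsharee{\le})}$, however, needs its own auxiliary induction over the type and structure of the shared value, showing that the deep-copy reduction sequence carries two related shared values to two related values (it allocates matching fresh locations on both sides and recurses into their contents). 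Finally, the linear boundary $\LUe{\ue}$ uses induction hypothesis~(2) followed by the step $\LUe{\uv}\lheadredexstep\llumpe{\uv}$ and the definition of $\lrelV{\llumpty{\uty}}{j}$, and the $\lump{\lty}{}$/$\unlump{\lty}{}$ conversions are discharged using \fullref{thm:ul:compat-value-sem} together with the fact that every linear type compatible with an ML type is duplicable, hence its values inhabit the empty store.

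The step I expect to be the main obstacle is the linear-configuration statement~(3): threading stores and store typings correctly through the bind lemma and the substitution-splitting lemma while keeping the step-index bookkeeping honest, and, within that, proving that the deep-copy reductions of $\lsrccopye{(\lsrcsharee{\wild})}$ preserve the logical relation connective by connective. A closely related subtlety is maintaining the invariant — encoded into $\lrelV{\llumpty{\uty}}{j}$ — that lumped values always inhabit the empty store: this is exactly the ``global property expressed type-directedly'' phenomenon highlighted in the text, and it is precisely in the boundary and copy cases that one must actually \emph{use} it (via \fullref{lem:l:inversion}) rather than merely restate it. Once this scaffolding and the splitting lemma are in place, the remaining cases are bookkeeping.
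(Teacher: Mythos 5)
Your proposal is correct and follows essentially the same route as the paper, which proves the Fundamental Property by induction on typing derivations via per-rule compatibility lemmas (whose details it defers to the extended technical report). Your elaboration of the scaffolding — downward closure, bind lemmas, the relational substitution lemma, and in particular the splitting of related substitutions along the joined contexts and stores for the linear rules — is exactly the content those compatibility lemmas must supply, and you correctly identify the linear-configuration statement and the deep-copy/boundary cases as the places where the empty-store invariant for lumped and shared values must actually be invoked.
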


Finally, we prove our logical relation is sound with respect to
contextual equivalence, that is, it can be used as a more tractible
way to prove contextual equivalence results, such as lump/unlump
cancellation.
\begin{theorem}[Soundness of Logical Relation]
  \label{proofs:soundness}
  {~}
  $\lapproxsymbol \subset \capproxsymbol$
\end{theorem}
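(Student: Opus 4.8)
The plan is to follow the standard route from a logical relation to contextual approximation: prove a family of \emph{compatibility lemmas}---one for each typing rule of $\ullang$, including the two boundary rules and the $\lump \lty {}$/$\unlump \lty {}$ rules---stating that the open logical approximation relations are closed under every term former, together with an \emph{adequacy} property at the observation type $\uunitty$. Granting these, soundness is immediate: given $\ulGamma$-related terms $\ueone,\uetwo$ and a closing multi-language context $\uctxt$, a structural induction on $\uctxt$ applying the compatibility lemmas at each node yields $\uapproxjdg{\uempty}{\uctxt\hw{\ueone}}{\uctxt\hw{\uetwo}}{\uunitty}$, and adequacy turns this into the required implication ``$\uctxt\hw{\ueone}$ converges $\Rightarrow$ $\uctxt\hw{\uetwo}$ converges''.

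First I would assemble the infrastructure lemmas for the closed relations $\urelE{\urty}{j}$ and $\lrelE{\lrty}{j}$: downward closure in the step index $j$ (built into $\rel{\utyone}{\utytwo}$ and provable for all $\urty,\lrty$ by induction); closure under anti-reduction, i.e.\ if $\ueone$ reduces to $\uepr$ in $k$ steps and $(\uepr,\uetwo)\in\urelE{\urty}{j}$ then $(\ueone,\uetwo)\in\urelE{\urty}{j+k}$, and the analogous statement for configurations stepping with $\lredexstep$ (whose store may now shrink as well as grow); and a ``bind'' lemma that lets one reason through a pair of evaluation contexts, with its linear variant requiring the local stores of the filling values to be disjoint from the context's store. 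These are precisely the ingredients already used to prove the Fundamental Property (\fullref{proofs:fundamental}); the compatibility lemmas are just its two-sided generalisation, in which the two sides of each rule's premises are only related rather than identical, proved by the same case analysis. The cases for $\lump \lty {}$ and $\unlump \lty {}$ reduce, via the reduction rules of \fullref{fig:ul:dynamic-interop-sem}, to one auxiliary lemma: the value-conversion relation $(\ulcompate{})$ sends logically related inputs to logically related outputs. I would prove this by induction on the derivation of $\uty \ulcompat \lty$ carried out in lockstep with the structure of $(\ulcompate{})$, invoking \fullref{lem:ul:interop-hypot-subst} at the recursive-type case exactly as in the proof of \fullref{thm:ul:compat-value-sem}.

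Adequacy is immediate from the definition: if $\uapproxjdg{\uempty}{\ueone}{\uetwo}{\uunitty}$ then $(\ueone,\uetwo)\in\urelE{\uunitty}{j}$ for every $j$, so if $\ueone \uredexstepstar \uunite$ in $k$ steps, instantiating at $j=k$ gives $\uetwo \uredexstepstar \uvtwo$ with $(\uunite,\uvtwo)\in\urelV{\uunitty}{0}$, whence $\uvtwo = \uunite$. The context-plugging step is then a structural induction on the multi-language context $\uctxt$: a hole is discharged by the hypothesis that $\ueone,\uetwo$ are logically related under $\ulGamma$, and each non-hole node is an instance of one of the compatibility lemmas, applied with the reflexive (Fundamental-Property) instances for the node's other subterms. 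Since a closing context routes its hole through both ML and linear term formers and across the $\LUe{\ue}$ and $\ULe \lstore \lstorety \le$ boundaries, the induction invokes all three families (ML, linear, boundary); the $\ULe \lstore \lstorety \le$ case is exactly where the invariant ``a value of lumped type carries an empty store'', encoded in $\lrelV{\llumpty{\urty}}{j}$, is consumed to fire that boundary's reduction rule.

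The main obstacle I expect is the compatibility lemma for the linear function type $\lofunty{\ltypr}{\lty}$, and relatedly the linear introduction forms whose relational definition carries a store-splitting side condition (such as $\lstoreonepr = \lstorejoin{\lstoreone}{\lstoreonedubpr}$ for linear pairs and application): threading the local stores and their disjointness through the bind lemma and the anti-reduction steps while keeping step indices aligned is delicate bookkeeping, made worse by the fact that linear reduction can both allocate and free locations. By contrast the boundary and (un)lump cases, though conceptually the most novel, fall out once the $(\ulcompate{})$-preserves-relatedness lemma is in hand, and the $\ulang$ cases are entirely standard.
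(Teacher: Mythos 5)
Your proposal matches the paper's proof: the paper likewise establishes soundness by induction on multi-language contexts, using a family of compatibility lemmas showing that every term-formation rule (including boundaries) preserves logical relatedness, with the details deferred to the extended technical report. Your additional elaboration of the infrastructure (step-index monotonicity, anti-reduction, the bind lemma, adequacy at $\uunitty$, and the $(\ulcompate{})$-preservation lemma for the lump/unlump cases) is consistent with and fills in that same outline.
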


The proof is by induction on contexts, showing that every term
formation rule preserves logical relatedness. These ``compatibility''
lemmas are extensive, but their proofs are simple.  Their proofs are
in the extended technical report.

\end{version}
\end{document}